\declaretheorem[numberwithin=section,refname={Theorem,Theorems},Refname={Theorem,Theorems}]{theorem}
\declaretheorem[numberlike=theorem]{lemma}
\declaretheorem[numberlike=theorem]{corollary}
\declaretheorem[numberlike=theorem]{definition}
\declaretheorem[numberlike=theorem]{fact}
\newcommand{\curly}{\mathrel{\leadsto}}
\renewcommand{\path}{\mathrel{\leadsto}}
\newcommand{\R}{\mathbb{R}}
\newcommand{\N}{\mathbb{N}}
\renewcommand{\tilde}{\widetilde}
\newcommand{\polylog}{\operatorname{polylog}}
\newcommand{\dist}{\operatorname{dist}}
\newcommand{\len}{\operatorname{len}}
\xdef\csname m\x\endcsname{\noexpand\mathbf{\x}}
\xdef\csname cal\x\endcsname{\noexpand\mathcal{\x}}
\newcommand{\s}{a}
\renewcommand{\paragraph}{%
	\@startsection{paragraph}{4}%
	{\z@}{1.25ex \@plus 1ex \@minus .2ex}{-1em}%
	{\normalfont\normalsize\bfseries}%
}
\newcommand{\exact}{1.823}
\newcommand{\exactQuery}{1.747}
\newcommand{\exactPre}{2.626}
\newcommand{\exactS}{0.25308461}
\newcommand{\exactNu}{0.2050319}
\newcommand{\exactMu}{0.45811651}
\newcommand{\directed}{1.816}
\newcommand{\directedQuery}{1.741}
\newcommand{\directedPre}{2.633}
\newcommand{\directedS}{0.25955649}
\newcommand{\directedNu}{0.22133053}
\newcommand{\directedMu}{0.48088702}
\newcommand{\undirected}{1.72} %
\newcommand{\undirectedPre}{2.564}
\title{Fully Dynamic Shortest Path Reporting\\ Against an Adaptive Adversary} %
\author[1]{Anastasiia Alokhina}
\affil[1]{Georgia Institute of Technology}
\author[1]{Jan van den Brand}
\begin{document}
\pagenumbering{roman}
\maketitle

\begin{abstract}
    
Algebraic data structures are the main subroutine for maintaining 
distances in fully dynamic graphs in subquadratic time.
However, these dynamic algebraic algorithms generally cannot maintain the shortest paths, especially against adaptive adversaries.
We present the first fully dynamic algorithm that maintains the shortest paths against an adaptive adversary in subquadratic update time.
This is obtained via a combinatorial reduction that  allows reconstructing the shortest paths with only a few distance estimates. Using this reduction, we obtain the following:

On weighted directed graphs with real edge weights in $[1,W]$, we can maintain $(1+\epsilon)$-approximate shortest paths in $\tilde{O}(n^{\directed}\epsilon^{-2}\log W)$ update and $\tilde{O}(n^{\directedQuery}\epsilon^{-2}\log W)$ query time. This improves upon the approximate distance data structures from 
\citem[v.d.Brand, Nanongkai; FOCS'19]{BrandN19},
which only returned a distance estimate, by matching their complexity and returning an approximate shortest path.

On unweighted directed graphs, we can maintain exact shortest paths in $\tilde{O}(n^{\exact})$ update and $\tilde{O}(n^{\exactQuery})$ query time. This improves upon 
\citem[Bergamaschi, Henzinger, P.Gutenberg, V.Williams, Wein; SODA'21]{BergamaschiHGWW21} 
who could report the path only against oblivious adversaries. We improve both their update and query time while also handling adaptive adversaries.

On unweighted undirected graphs, our reduction holds not just against adaptive adversaries but is also deterministic.
We maintain a $(1+\epsilon)$-approximate $st$-shortest path in $O(n^{1.529}/\epsilon^2)$ time per update,
and $(1+\epsilon)$-approximate single source shortest paths in $O(n^{1.764}/\epsilon^2)$ time per update.
Previous %
deterministic results by 
\citem[v.d.Brand,Nazari,Forster; FOCS'22]{BrandFN22} 
could only maintain distance estimates but no paths.

\end{abstract}
\newpage
\tableofcontents
\newpage
\pagenumbering{arabic}

\newcommand{\OrAprxUpdNDep}{n^{\omega(1, 1, \nu + \s)-\nu}}

\newcommand{\OrAprxQueryHHNDep}{n^{\omega(1-\s,1-\s,\nu + \s)}}
\newcommand{\OrAprxQueryVHNDep}{n^{\omega(1,1-\s,\nu + \s)}}

\newcommand{\OrExactQueryNDep}{n^{\s + \mu}}
\newcommand{\OrExactQueryHHNDep}{n^{\omega(1-\s,1-\s,\nu) + \s}}
\newcommand{\OrExactQueryVHNDep}{n^{\omega(1,1-\s,\nu) + \s}}
\newcommand{\OrExactUpdNDep}{(n^{1+\mu+\s} + n^{\omega(1, 1, \mu)-\mu + \s})}

\newcommand{\thmAprxDirUpd}{
\left(
      n^{\omega(1,1, \nu + \s)-\nu} 
      + 
      n^{1+\mu+\s}
      +
      n^{\omega(1,1, \mu)-\mu+\s} 
      +
      \OrAprxQueryVHNDep
      \right)\epsilon^{-2} \log W
}

\newcommand{\thmAprxDirQuery}{
       \left( 
           n^{2 - \s} 
         + n^{1 + \nu+\s } 
         + n^{1 + \s + \mu} \right)  \epsilon^{-2} \log W 
}

\newcommand{\thmAprxUndirUpd}{
    \left(
      n^{\omega(1,1, \nu + \s)-\nu} 
      + 
      n^{1+\mu+\s}
      +
      n^{\omega(1,1, \mu)-\mu+\s} 
      +
      n^{\omega(1 - \s,1 - \s, \nu + \s)}
    \right) \epsilon^{-2} \log W
    }

\newcommand{\thmAprxUndirQuery}{
       \left( 
     n^{1 + \nu}+n^{2 - 2\s} 
     + n^{1 + \s + \mu} \right)  \epsilon^{-2} \log W
}

\newcommand{\thmExactDirUpd}{
\left(
      n^{\omega(1,1, \nu + \s)-\nu} 
      + 
      n^{1+\mu+\s}
      +
      n^{\omega(1,1, \mu)-\mu+\s} 
      +
      n^{\omega(1 - \s,1 - \s, \mu)+\s} 
      +
     n^{\omega(1,1-\s, \nu+\s)} \right) {W}
}

\newcommand{\thmExactDirQuery}{
       \left( 
       n^{2 - \s} 
     + n^{1 + \nu + \s} 
     + n^{1 + \s + \mu} \right)  {W}
}

\newcommand{\thmExactUndirUpd}{
    \right(
      n^{1+\mu+\s}
      +
      n^{\omega(1,1, \mu)-\mu+\s} 
      +
      n^{\omega(1 - \s,1 - \s, \mu) + \s}
      \left){W}
    }

\newcommand{\thmExactUndirQuery}{
\left( 
       n^{2 - 2\s} 
     + n^{1 + \s + \mu} \right)  {W}
}

\section{Introduction}

In the dynamic shortest path problem, the task is to create a data structure that maintains the shortest paths in a given graph $G=(V, E)$ undergoing edge insertions and deletions.
This problem has a rich history (\cite{%
EvenS81,%
KleinS98,%
King99,%
DemetrescuI02,%
Thorup04,%
DemetrescuI04,%
Sankowski05,%
Thorup05,%
DemetrescuI06,%
BaswanaHS07,%
Bernstein09,%
RodittyZ11,%
BernsteinR11,%
RodittyZ12,%
AbrahamCG12,%
HenzingerKN13,%
HenzingerKN14,%
HenzingerKN14-2,%
AbrahamCDGW16,%
BernsteinC16,%
Bernstein16,%
BernsteinC17,%
AbrahamCK17,%
HenzingerKN18,%
BernsteinGW20,%
GutenbergW20,%
GutenbergW20a,%
BernsteinGS20,%
GutenbergW20b,%
GutenbergWW20,%
ChuzhoyS21,%
ChechikZ21,%
BernsteinGS21,%
Chuzhoy21,%
BergamaschiHGWW21,%
KarczmarzMS22,%
KyngMG22,
ChechikZ23})
with various variants being studied, such as partially dynamic (supporting only edge insertions or only edge deletions), fully dynamic (supporting both insertions and deletions), maintaining single-source shortest paths, all-pairs shortest path, or $st$-shortest path.
In this work, we focus on high-accuracy (i.e.~exact or $(1+\epsilon)$-approximate) shortest paths, as opposed to polylog or large constant factor approximations as studied in, e.g., \cite{AbrahamCT14,BaswanaKS12,BodwinK16,BernsteinBGNSS022,ChuzhoyZ23}.

A trivial solution for dynamic shortest path would be to run Dijkstra's algorithm from scratch in $O(n^2)$ time whenever distance information is queried.
It was shown by Abboud and V.Williams \cite{AbboudW14} that for %
shortest paths, beating $O(n^2)$ update and query time %
is only possible
when using algebraic techniques\footnote{In \cite{AbboudW14} it was shown that no dynamic algorithm could beat $O(n^{2-\epsilon})$ update time unless one can multiply two $n\times n$ matrices in $O(n^{3-\epsilon})$ time. 
Algorithms that make use of fast matrix multiplication are referred to as ``algebraic''.}.
Historically, while being the only tools for beating $O(n^2)$ update and query time, such dynamic algebraic algorithms have the downside that they only maintain the distance \cite{Sankowski05,BrandNS19,BrandN19} but not the shortest path itself.
Only very recently, the first progress has been made in maintaining the shortest path \cite{BergamaschiHGWW21} on unweighted graphs
in subquadratic time.
However, the path could only be maintained under the oblivious adversary assumption, i.e.~future updates are not allowed to depend on the shortest path previously returned by the data structure.

Removing the oblivious adversary assumption has received a lot of attention in the area of dynamic algorithms 
because, for many use cases the updates are adaptive (i.e.~depend on previous results).
One such example would be when a dynamic algorithm is used as a subroutine inside another algorithm
(e.g.~\cite{Madry10,ChuzhoyK19,BrandLSS20,ChuzhoyS21,BernsteinGS21,BernsteinBGNSS022,ChenKLPGS22}). 
Adaptivity issues also occur when the dynamic algorithm is used to analyze an interactive system, and the interactions with the system are chosen based on the output of the dynamic algorithm. 
Consider, for example, a map service that provides information about the fastest route given current traffic conditions. If the map service redirects its users away from traffic jams, then this will affect the current traffic conditions, so the input to the dynamic algorithm (map service) is adaptive.

Since assuming an oblivious adversary can be very restrictive or unreasonable depending on the application, removing this assumption has a rich history in dynamic algorithms \cite{BhattacharyaHN16,%
BernsteinC16,%
Bernstein17,%
NanongkaiSW17,%
ChuzhoyK19,%
Wajc20,%
GutenbergW20a,%
GutenbergWW20,%
EvaldFGW21,%
Chuzhoy21,%
BeimelKM+21,%
BrandFN22,%
BernsteinBGNSS022,%
KarczmarzMS22}
but also in other areas that analyze interactive systems, such as statistics \cite{HardtU14,RogersRST16,JungLN0SS20,BassilyNSSSU21,KontorovichSS22}, machine (un-)learning \cite{GuptaJNRSW21,Neel0S21}, or streaming algorithms \cite{WoodruffZ21,AlonBDMNY21,KaplanMNS21,AttiasCSS21,BenJWY22,BenEO22,ChakrabartiGS22,AjtaiBJSSWZ22,Cohen0NSSS22,CohenNSS22}.

We present the first high-accuracy dynamic algorithms that maintain shortest paths against an adaptive adversary in subquadratic time.
Despite solving a harder setting, our complexities match or improve that of previous work:
\begin{itemize}[nosep]
    \item 
    For $(1+\epsilon)$-approximate shortest paths on weighted graphs, we match the update and query time of the fastest dynamic algorithm that maintains the $(1+\epsilon)$-approximate distance against an adaptive adversary \cite{BrandN19}, but we also maintain the path itself.
    \item
    For exact unweighted shortest paths, we improve the update and query time of \cite{BergamaschiHGWW21} which could maintain the path only against oblivious adversaries and extend it to adaptive adversaries.
    \item
    On unweighted undirected graphs, \cite{BergamaschiHGWW21} additionally presented a faster algorithm for $(1+\epsilon)$-approximate shortest paths against oblivious adversaries.
    We match their update time and not only extend it to adaptive adversaries but provide a deterministic dynamic algorithm. 
    Previous deterministic results could only maintain the distance \cite{BrandFN22}, so our work is the first deterministic result for reporting paths in subquadratic time. 
\end{itemize}

\noindent
We remark that the only previous progress on maintaining paths in subquadratic time against adaptive adversaries had been made for the case of reachability, i.e.~maintaining \emph{any} path, not necessarily the shortest one. 
The dynamic algorithm by %
\cite{KarczmarzMS22} can maintain reachability with path reporting against an adaptive adversary in $O(n^{1.75+o(1)})$ amortized update time\footnote{In \cite{KarczmarzMS22} this was stated as $O(n^{1+5/6})$ but recent advances in decremental strongly connected components \cite{ChenKLPGS22} imply an $O(n^{1.75+o(1)})$ bound.\label{foot:reachability}}.
We can maintain the shortest path in directed graphs against an adaptive adversary, instead of just any path, in worst-case update time.

\subsection{Our Results and Comparison to Previous Work}
\label{sec:intro:results}

\begin{table}[]
    \centering
    \def\arraystretch{1.5}
    \begin{tabular}{|c|c|c|c|c|}
\hline 
Ref. & Update & Query & Path & Directed\tabularnewline
\hline 
\hline 
\multicolumn{5}{|c|}{$(1+\epsilon)$-approx., weighted}\tabularnewline
\hline 
\cite{BrandN19} & $n^{1.816}\frac{\log W}{\epsilon^2}$ & $n^{1.816}\frac{\log W}{\epsilon^2}$ & $\times$ & $\checkmark$\tabularnewline
\hline 
Thm~\ref{thm:intro:undirectedweighted} & $n^{\directed} \frac{\log W}{\epsilon^2}$ & $n^{\directedQuery}\frac{\log W}{\epsilon^2}$ & $\checkmark$ & $\checkmark$\tabularnewline
\hline 
Thm~\ref{thm:intro:directedweighted} & $n^{\undirected}\frac{\log W}{\epsilon^2}$ & $n^{\undirected}\frac{\log W}{\epsilon^2}$ & $\checkmark$ & $\times$\tabularnewline
\hline 
\multicolumn{5}{|c|}{$(1+\epsilon)$-approx., unweighted}\tabularnewline
\hline
\hspace{-5pt}\cite{BergamaschiHGWW21}\hspace{-5pt} & $n^{1.529+o_\epsilon(1)}$ & $n^{1+o_\epsilon(1)}$ & $\checkmark$ & $\times$ 
\tabularnewline
\hline 
\cite{BrandFN22} & $n^{1.406}\epsilon^{-2}$ & $n^{1.406}\epsilon^{-2}$ & $\times$ & $\times$\tabularnewline
\hline 
Thm~\ref{thm:intro:unweighted:st} & $n^{1.529}\epsilon^{-2}$ & $n^{1.529}\epsilon^{-2}$ & $\checkmark$ & $\times$\tabularnewline
\hline
\end{tabular}
\begin{tabular}{|c|c|c|c|}
\hline 
Ref. & Update & Query & Path \tabularnewline
\hline 
\hline 
\multicolumn{4}{|c|}{exact, directed}\tabularnewline
\hline 
\cite{Sankowski05} & $Wn^{1.897}$ & $Wn^{1.529}$ & $\times$ \tabularnewline
\hline 
\cite{BrandNS19} & $Wn^{1.724}$ & $Wn^{1.724}$ & $\times$ \tabularnewline
\hline 
\hspace{-5pt}\cite{BergamaschiHGWW21}\hspace{-5pt} & $Wn^{1.897}$ & $Wn^{1.897}$ & $\checkmark$ \tabularnewline
\hline
\cite{BrandFN22} & $Wn^{1.704}$ & $Wn^{1.704}$ & $\times$ \tabularnewline
\hline 
Thm~\ref{thm:intro:directedintweighted} & $Wn^{\exact}$ & $Wn^{\exactQuery}$ & $\checkmark$ \tabularnewline
\hline 
\multicolumn{4}{c}{}\tabularnewline
\multicolumn{4}{c}{}
\end{tabular}
    \caption{Comparison previous work on fully dynamic shortest paths with edge updates and pair queries.
    The $o_\epsilon(1)$ in \cite{BergamaschiHGWW21} is for constant $\epsilon>0$.
    Previous path reporting results require the oblivious adversary assumption.
    }
    \label{tbl:distance_results}
\end{table}

To maintain distances in subquadratic time, algebraic techniques are required \cite{AbboudW14}.
Historically, such algebraic data structures could only maintain the distances in a dynamic graph but not the corresponding shortest paths. Only recently have Bergamaschi, Henzinger, P.Gutenberg, V.Williams and Wein \cite{BergamaschiHGWW21} constructed the first extension to shortest path, but their dynamic algorithm works only against oblivious adversaries.
Since algebraic tools are so powerful in maintaining distances, a natural question would be how to efficiently extract the path from distance information.
In this work, we present such reductions. We show that given few distance queries, we can reconstruct the shortest path. 
These reductions work against adaptive adversaries and in the special case of unweighted undirected graphs, it is even deterministic.
We will first present the implications of our reductions and then discuss the technical idea of our reduction.

Our dynamic algorithms support a trade-off between update and query time. However, as our results heavily rely on fast (rectangular) matrix multiplication \cite{Williams12,Gall14,AlmanW21,DuanWZ22,GallU18}, the exact trade-off is complicated to state.
For simplicity, we state  all our results for a choice of parameters that minimizes the update time.
For each theorem, we provide a reference to the detailed version in \Cref{sec:fullalgo} that states the precise update vs query time trade-off.

\paragraph{Path Reporting against Adaptive Adversaries}
Before focusing on graphs with real edge weights, we present a result that works for small integer weights.

\begin{theorem}[{Directed, Small Weights, Exact, \Cref{thm:exact-dir}}]
\label{thm:intro:directedintweighted}
There exist fully dynamic algorithms maintaining exact shortest paths on directed graphs with integer edge weights in $[1, W]$, supporting edge updates and distance queries for any vertex pair.
The worst-case update time
is $\tilde O(n^{\exact}W)$, the query time is $\tilde O(n^{\exactQuery} W)$ and the preprocessing time is $\tilde O(n^{\exactPre}W)$.
The dynamic algorithm is randomized and correct w.h.p., and works against an adaptive adversary.
\end{theorem}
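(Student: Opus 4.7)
The plan is to combine an algebraic distance oracle with the combinatorial path-reconstruction reduction advertised in the introduction. Concretely, I would build on an existing adaptive-adversary distance oracle for exact shortest paths in directed graphs with integer weights in $[1,W]$---the oracles of Sankowski, Brand--Nanongkai--Sankowski, and Brand--Forster--Nazari all fit here, supporting updates and pair distance queries in subquadratic time proportional to $W$. The target is to show the shortest $s\to t$ path can be reported by issuing only a small number of distance queries to such an oracle, plus lightweight auxiliary structures, giving the claimed bounds $\tO(n^{\exact}W)$ update, $\tO(n^{\exactQuery}W)$ query, and $\tO(n^{\exactPre}W)$ preprocessing.

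First I would set up the reduction. Given an $(s,t)$ query, we want the edges of a shortest path of length $L=\dist(s,t)\le nW$. The standard hook is that $v$ lies on some shortest $s\to t$ path iff $\dist(s,v)+\dist(v,t)=\dist(s,t)$. To discover intermediate vertices, I would sample $O(\log n)$ random subsets $R_1,\ldots,R_{\log n}\subseteq V$ of geometrically decreasing density (down to about $n^{1-\s}$), so that with high probability any subpath of appropriate length contains a sample vertex. Querying $\dist(s,v)$ and $\dist(v,t)$ for all $v\in\bigcup_i R_i$ produces a set of on-path vertices, and recursively bisecting the remaining gaps identifies the full sequence. For the short leftover segments that the sample cannot resolve economically, I would use an auxiliary short-range data structure: a table of all-pairs distances up to about $n^{\s}$, maintained by multiplying the oracle's updated rows and columns via rectangular matrix multiplication.

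Next I would plug in the distance oracle, instantiated with its own internal trade-off parameters $\nu,\mu,\s$ (as already set up in the macros at the top of the paper). The update cost decomposes into the oracle update plus the maintenance of the short-range table, contributing the rectangular-matrix terms $n^{\omega(1,1,\nu+\s)-\nu}$, $n^{\omega(1,1,\mu)-\mu+\s}$, $n^{\omega(1-\s,1-\s,\mu)+\s}$, $n^{\omega(1,1-\s,\nu+\s)}$ and $n^{1+\mu+\s}$, each scaled by $W$. The query cost is (number of distance queries)$\times$(per-query cost), yielding terms $n^{2-\s}$, $n^{1+\nu+\s}$ and $n^{1+\s+\mu}$ scaled by $W$. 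A numerical optimization over $\s,\nu,\mu$ using the current best rectangular matrix-multiplication exponents then gives $\exact$ for update and $\exactQuery$ for query. Preprocessing reduces to one static all-pairs small-weight shortest-path computation via a $(\min,+)$-product, yielding the $n^{\exactPre}W$ bound.

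The main obstacle I anticipate is maintaining correctness against an adaptive adversary while using randomization for the hitting sets. If the adversary sees the reported path, they learn which random samples were hit and could bias subsequent updates against the internal randomness. To handle this I would draw each $R_i$ fresh at every query (polylog cost per query) and use \emph{only} the oracle's distance answers---which are themselves robust to adaptive adversaries---to decide which samples lie on the path; the reconstruction randomness is thus independent of the adversary's prior view, and correctness reduces to the oracle's own w.h.p.\ guarantee. A secondary difficulty is parameter balancing: the same $\s,\nu,\mu$ must simultaneously tune the oracle's update term and the new reconstruction overhead, and the precise optimum justifying $\exact\approx 1.823$, $\exactQuery\approx 1.747$, $\exactPre\approx 2.626$ is what the deferred reference to \Cref{thm:exact-dir} is expected to supply.
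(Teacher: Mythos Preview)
Your high-level scaffolding---sample a hitting set, break the $st$-shortest path into segments $s=h_0\path h_1\path\cdots\path h_k=t$ of at most $n^{\s}$ hops each, and query an algebraic distance oracle---matches the paper. But the central technical step is missing: how do you reconstruct the \emph{edges} of each segment $h_i\path h_{i+1}$ without blowing the query budget? Your ``recursively bisecting the remaining gaps'' finds more \emph{sample} vertices on the path, but eventually every segment must be traced out vertex by vertex in $G$, and na\"ive predecessor search would query $\dist(h_i,v)$ and $\dist(v,h_{i+1})$ for all $v\in V$ and all $i$, i.e.\ $\tilde O(n\cdot n^{1-\s})$ \emph{exact} pair queries. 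With the oracle's per-pair exact cost $\tilde O(n^{\s+\mu}W)$ this is already $\tilde O(n^{2+\mu}W)$, well above the target. Your proposed ``short-range all-pairs table up to $n^{\s}$'' would fix this only if it could be maintained in the stated time, and you give no mechanism for that; maintaining all $n^2$ bounded distances dynamically is exactly what algebraic oracles cannot do cheaply.

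The paper's missing ingredient is a \emph{pre-filtering step using a crude $2$-approximate oracle}. After fixing the segments, it queries $2$-approximate $Wn^{\s}$-bounded distances for all of $V_H\times V$ and $V\times V_H$ in one batch (this is the $Q_2(n^{1-\s},n)$ term you correctly list but never justify). From these it defines, for each segment, a set $P_i\subseteq V$ of ``plausible'' vertices: those $v$ with both approximate distances to $h_i$ and from $h_{i+1}$ at most $2\dist(h_i,h_{i+1})$. The key lemma is a weighted ``no-shortcut'' argument: grouping segments by dyadic weight class $[2^{\alpha-1},2^{\alpha})$, any vertex can be plausible for only $O(1)$ segments per class, so $\sum_i |P_i|=O(n\log(nW))$. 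Hence only $\tilde O(n)$ \emph{exact} pair queries suffice to reconstruct the whole path, giving the $n\cdot Q(1,1)$ term. Without this filtering-plus-shortcut argument, the complexity terms you wrote down are unsupported.

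On adaptivity you are essentially right, though the paper's fix is simpler than yours: it resamples the single hitting set $R$ at each update (not a logarithmic hierarchy), and recomputes $H$ and the batched $2$-approximate distances then; this is why $Q_2(n^{1-\s},n)$ is charged to the \emph{update} rather than the query.
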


The only previous dynamic algorithm that can maintain the exact shortest path in subquadratic time is by Bergamaschi et al.~\cite{BergamaschiHGWW21} for directed unweighted graphs (or smaller integer weights). 
Their update and query complexity is $O(n^{1.897}W)$.
We improve upon both update and query time to $\tilde O(n^{\exact}W)$ and $\tilde O(n^{\exactQuery} W)$ respectively, despite working in the harder adaptive adversary model.
A trade-off between update and query time is possible and stated in \Cref{thm:exact-dir}.
The same $O(n^{1.897}W)$ complexity as in \cite{BergamaschiHGWW21} was first achieved by Sankowski \cite{Sankowski05}, whose dynamic algorithm returned only the distance but no path. 
The only faster dynamic algorithm for maintaining the exact distance (but not the path) has $O(n^{1.703}W)$ update and query time \cite{BrandNS19}.

We remark that the dynamic algorithm by Karczmarz, Mukherjee and Sankowski \cite{KarczmarzMS22} can maintain reachability with path reporting against an adaptive adversary. Note that here the task is to return \emph{any} path, not necessarily the shortest one.
On DAGs, they can maintain $st$-reachability in $O(n^{1.529})$ worst-case time and single source reachability in $O(n^{1.765})$ worst-case time, while also reporting a connecting path (or a tree in case of single source). 
To extend these results to general graphs, their update time becomes \emph{amortized} and in case of $st$-reachability also increases$^{\ref{foot:reachability}}$ to $O(n^{1.75+o(1)})$ update time.
Our dynamic shortest path algorithm \Cref{thm:intro:directedintweighted} works on directed graphs, so it can also maintain reachability with path reporting in %
\emph{worst-case} update time.

So far, we have only focused on unweighted graphs (or with small integer weights). 
Now, we present results that hold on real weighted graphs and return a $(1+\epsilon)$-approximate shortest path.
We remark that it is unlikely for an \emph{exact} dynamic algorithm with subquadratic update time to exist because that would contradict the APSP conjecture \cite{AbboudW14}.

\begin{theorem}[{Weighted, Approximate, \Cref{thm:aprx-dir,thm:aprx-undir}}]
\label{thm:intro:directedweighted}
\label{thm:intro:undirectedweighted}
There exist fully dynamic algorithms maintaining $(1+\epsilon)$-approximate shortest paths on weighted graphs with real edge weights in $[1, W]$, supporting edge updates and distance queries for any vertex pair.
\begin{itemize}[nosep]
    \item On \emph{directed} graphs, the worst-case update and query time
    are $\tilde O(n^{\directed}\epsilon^{-2}\log W)$ and \\$\tilde O(n^{\directedQuery}\epsilon^{-2}\log W)$ respectively. The preprocessing time is $\tilde O(n^{\directedPre}\epsilon^{-2} \log W)$.
    \item On \emph{undirected} graphs, the worst-case update and query time are $\tilde O(n^{\undirected}\epsilon^{-2}\log W)$. The preprocessing time is $\tilde O(n^{\undirectedPre}\epsilon^{-2} \log W)$.
\end{itemize}
The dynamic algorithms are randomized, correct w.h.p. and work against an adaptive adversary.
\end{theorem}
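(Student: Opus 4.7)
The plan is to obtain both parts of \Cref{thm:intro:directedweighted} by combining an algebraic $(1+\epsilon)$-approximate distance oracle with the combinatorial path-reconstruction reduction that is the paper's main technical contribution. First I would build a dynamic distance oracle following the framework of \cite{BrandN19}: sample a hitting set of $\tilde O(n^{1-\s})$ ``hubs'' that intersects every approximate shortest path of hop-length at least $n^{\s}$; then maintain distances from/to those hubs via a lazy evaluation scheme based on rectangular matrix multiplication, periodically rebuilding in two batch scales of size $n^{\mu}$ and $n^{\nu}$ for short and long horizons. The balance of the rebuild costs produces the update terms $n^{\omega(1,1,\nu+\s)-\nu}$, $n^{1+\mu+\s}$, and $n^{\omega(1,1,\mu)-\mu+\s}$, plus the extra $n^{\omega(1,1-\s,\nu+\s)}$ that is needed in the directed case to refresh one-sided hub matrices. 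A pair distance query is answered by composing a short-hop search (cost $n^{1+\s+\mu}$), a hub-to-hub lookup (cost $n^{1+\nu+\s}$), and a random-landmark-sampling fallback (cost $n^{2-\s}$), matching $\thmAprxDirQuery$ before the $\epsilon^{-2}\log W$ factor.

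Second I would invoke the combinatorial reduction, promised in the abstract and developed earlier in the paper, to upgrade every distance answer into an actual $(1+\epsilon)$-approximate path while adding only polylogarithmic overhead to the query time. The reduction locates intermediate vertices by asking a bounded number of distance queries per path-piece, so a reported $st$-path of hop-length $L$ induces $\tilde O(L)$ oracle calls, which is absorbed by the already-present $n^{2-\s}$ term in the query budget. Crucially, every branching decision the reduction makes is a function only of the currently \emph{certified} distance values and of the input graph, not of any internal randomness of the oracle; by a standard composition argument, this keeps the distance oracle valid after each query and preserves correctness w.h.p.\ against an \emph{adaptive} adversary.

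Third, for the \emph{undirected} case I would exploit symmetry: the one-sided refresh $n^{\omega(1,1-\s,\nu+\s)}$ term is replaced by the symmetric $n^{\omega(1-\s,1-\s,\nu+\s)}$ term, and the query fallback shrinks from $n^{2-\s}$ to $n^{2-2\s}$ because random landmarks can be used from both endpoints simultaneously, giving $\thmAprxUndirUpd$ and $\thmAprxUndirQuery$ as claimed. Real edge weights in $[1,W]$ are handled by rounding to $O(\epsilon^{-1}\log W)$ geometric weight classes and running one oracle instance per scale; one $\epsilon^{-1}$ comes from this bucketing and the other from the slack budget used by the path reconstruction, producing the final $\epsilon^{-2}\log W$ factor. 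Finally, I plug in the current rectangular matrix multiplication exponents from \cite{Williams12,Gall14,AlmanW21,DuanWZ22,GallU18} and optimize the three parameters $(\s,\mu,\nu)$ to obtain the numerical exponents $n^{\directed}, n^{\directedQuery}, n^{\directedPre}$ and $n^{\undirected}, n^{\undirectedPre}$.

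The main obstacle I anticipate is \emph{not} the numerical optimization but the adaptive-adversary argument: many natural recursive path reconstructions introduce dependencies between successive queries and the oracle's internal coin flips, which would invalidate the high-probability guarantee as soon as the adversary sees a reported path. I would handle this by making the reduction ``read-only'' with respect to the oracle: it must only consume distance certificates that are already guaranteed to be correct in the current graph, so returning them cannot reveal any bit of internal state. Verifying that every subroutine of the reduction (in particular the landmark-sampling fallback of the query) satisfies this property is what lets the three parameter-dependent cost expressions combine cleanly into the statement of \Cref{thm:intro:directedweighted}.
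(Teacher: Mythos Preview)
Your high-level plan (distance oracle from \cite{Sankowski05,BrandN19} + the paper's blackbox reduction + parameter balancing) is the same skeleton the paper uses in \Cref{sec:fullalgo}. However, almost every complexity term is attributed to the wrong operation, and this reflects a real gap in your understanding of the reduction rather than a cosmetic issue.

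The term $n^{\omega(1,1-\s,\nu+\s)}$ in the directed update time is \emph{not} a ``one-sided hub matrix refresh'' inside the oracle. It is $Q_2(n^{1-\s},n)$: the cost of querying \emph{$2$-approximate} $V_H\times V$ distances, performed at every update so that during a later query the algorithm can build the plausible-vertex sets $P_i$ (\Cref{def:directed_plausible}). Likewise the query terms $n^{2-\s}$ and $n^{1+\nu+\s}$ are not a ``landmark fallback'' or ``hub-to-hub lookup''; they are, respectively, the time to scan the precomputed $V_H\times V$ table to assemble the $P_i$'s, and $Q_2(1,n)$ for the fresh rows indexed by $s,t$. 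The $n^{1+\s+\mu}$ term is $n\cdot Q(1,1)$ exact-distance calls incurred while walking through each $P_i$. The reason this is only $\tilde O(n)$ calls total---and not $\Theta(n)$ per segment---is the weight-category no-shortcut argument (\Cref{not-too-many}, \Cref{not-too-many-total}), which your proposal never invokes. Your claim of ``$\tilde O(L)$ calls for hop-length $L$'' with ``polylog overhead'' is not what the reduction delivers and would be unjustified without that argument. Finally, the directed/undirected distinction is not ``symmetry'': the undirected reduction (\Cref{sec:segments:undirected}) simply does not need the $2$-approximate pre-filtering at all, which is why the $Q_2(n^{1-\s},n)$ term disappears and $n^{2-\s}$ becomes $n^{2-2\s}$ (Dijkstra on $H$).

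Your adaptive-adversary plan is also the wrong mechanism. A ``read-only'' certificate argument cannot work here: the returned path literally contains the sampled hitting-set vertices $h_1,\dots,h_k$, so an adaptive adversary learns $R$ the moment you answer a query. The paper's fix is orthogonal to what you propose: it \emph{resamples} the hitting set at every update (see the footnote after \Cref{fact:hitting} and \Cref{alg:upd:line:h,alg:upd:line:o2qall} in \Cref{wda-final-update}). This resampling is exactly why $Q_{1+\epsilon}(n^{1-\s},n^{1-\s})$ and $Q_2(n^{1-\s},n)$ are charged to the \emph{update} time rather than to preprocessing, and is the reason the directed update exponent comes out where it does.
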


No other known fully dynamic algorithms maintain a ($(1+\epsilon)$-approximately) shortest path in weighted graphs in subquadratic time. 
While there are many dynamic algorithms that maintain paths on weighted graphs, they either maintain large constant or polylogarithmic approximation factors (e.g., \cite{AbrahamCT14,BaswanaKS12,BodwinK16,BernsteinBGNSS022,ChuzhoyZ23}), 
or study harder problems such as APSP and have $\Omega(n^2)$ update time (e.g., \cite{DemetrescuI04,Thorup05,AbrahamCK17,GutenbergW20a,ChechikZ23}).
If we ignore the task of returning a path and focus just on returning the (approximate) distance, then the fastest known dynamic algorithm 
on weighted graphs is by v.d.Brand and Nanongkai \cite{BrandN19}, which can maintain $(1+\epsilon)$-approximate $st$-distances in $\tilde O(n^{\directed}\epsilon^{-2}\log W)$ update and query time\footnote{In \cite{BrandN19}, this was stated as $\tilde O(n^{1.823}/\epsilon^2 \log W)$ update and query time, but their choice of trade-off parameters was not optimal.}.
Our result \Cref{thm:intro:directedweighted} matches this complexity from \cite{BrandN19} but can return the path itself. (We have the same dependence on fast matrix multiplication, so we match their result also for all future improvements on fast matrix multiplication.)

\paragraph{Deterministic Results on Unweighted Undirected Graphs}

In \cite{BrandFN22}, v.d.Brand, Forster and Nazari showed that maintaining approximate distance estimates can made deterministic when the input graph is unweighted and undirected.
Using their techniques, we can also provide a deterministic variants of our dynamic path reporting data structures on unweighted undirected graphs.

\begin{theorem}[{Undirected, Unweighted, Approximate, \Cref{lem:unweighted:st}}]
\label{thm:intro:unweighted:st}
There exists a fully dynamic algorithm that maintains a $(1+\epsilon)$-approximate $st$-shortest path
for unweighted undirected graphs.
The worst-case update and query time is $O(n^{1.529}\epsilon^{-2}\log\epsilon^{-1})$ and the preprocessing time is $O(n^{2.372}\epsilon^{-2}\log\epsilon^{-1})$.
The dynamic algorithm is deterministic.
\end{theorem}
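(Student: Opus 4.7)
The plan is to combine the combinatorial path-reconstruction reduction developed in the main body of the paper with the deterministic $(1+\epsilon)$-approximate distance oracle of \cite{BrandFN22}. The reduction expresses path maintenance in terms of (i) a fully dynamic approximate distance oracle and (ii) an auxiliary data structure maintaining short-range BFS information, namely balls of radius $n^\s$ around every vertex for a parameter $\s \in (0,1)$. On unweighted undirected graphs, both components can be made deterministic: the distance oracle by \cite{BrandFN22}, and the BFS-ball structure via Boolean / min-plus rectangular matrix products, which use no randomness.

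First, I would invoke the reduction in its deterministic unweighted-undirected form: recursively bisect the current $st$-path by scanning candidate pivots $v$ from a precomputed set and checking $\tilde d(s,v) + \tilde d(v,t) \le (1+\epsilon)\,\tilde d(s,t)$. A recursion depth of $O(\log \epsilon^{-1})$ is enough, because the recursion stops once the remaining segment has length $O(1/\epsilon)$ and a trivial walk along the BFS tree then contributes only an additional additive $\epsilon$-error. Each recursion level costs $\polylog(n/\epsilon)$ oracle calls plus a linear scan of one BFS layer.

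Next, I would plug in the concrete running times. The per-update cost is dominated by maintaining the BFS balls of radius $n^\s$, which is a rectangular matrix-product term of the form appearing in the undirected approximate update bound in \Cref{sec:fullalgo}, together with the $O(n^{1.406}\epsilon^{-2})$ update of \cite{BrandFN22}. Balancing these via $\s$ yields the exponent $1.529$ for both update and query time, and the preprocessing bound follows from running the static analogues of the same data structures once at initialization. The $\epsilon^{-2}\log\epsilon^{-1}$ factor comes from combining the $\epsilon^{-2}$ of the oracle with the $O(\log\epsilon^{-1})$ recursion depth.

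The hard part will be keeping the pivot-selection step deterministic: the generic reduction used for weighted or directed graphs samples a random hitting set to ensure that some sampled vertex lies near the middle of every long subpath. I would avoid randomness by using the explicitly maintained BFS ball of radius $n^\s$ around $s$ (and $t$) as a deterministic candidate set of pivots; symmetry of distances on unweighted undirected graphs ensures that a near-midpoint of any sufficiently long $st$-subpath always appears in this ball, so enumerating over the precomputed layer is both deterministic and sufficient. Once the pivot selection is shown to require no random bits, and no other step of the reduction does either, the stated update, query, and preprocessing bounds follow directly from the reduction's analysis in \Cref{sec:fullalgo}.
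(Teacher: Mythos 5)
Your proposal takes a fundamentally different route from the paper, and the central step does not work. You propose to port the Section~\ref{sec:weighted} reduction to the undirected unweighted case, derandomizing its random hitting set by instead enumerating pivots from the BFS ball of radius $n^\s$ around $s$ (and $t$), claiming that a near-midpoint of any sufficiently long $st$-subpath must lie in that ball. That claim is false: if the $st$-distance is $D \gg n^\s$, a near-midpoint lies at distance $\approx D/2$ from both endpoints, hence \emph{outside} the radius-$n^\s$ ball around either $s$ or $t$. Undirected symmetry does not help here. The recursion-depth bound $O(\log\epsilon^{-1})$ is also not justified: reducing a segment of length up to $n$ to length $O(1/\epsilon)$ by bisection requires $\Theta(\log(n\epsilon))$ levels, not $\Theta(\log\epsilon^{-1})$, and each level multiplies the multiplicative error since you only have $(1+\epsilon)$-approximate distances, so the pivot test $\tilde d(s,v)+\tilde d(v,t)\le(1+\epsilon)\tilde d(s,t)$ does not certify that $v$ is on a near-shortest path without additional care.

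The paper's actual proof of this statement (\Cref{lem:unweighted:st}, via \Cref{lem:unweighted:additive,lem:unweighted:emulator_maintenance}) does not use the Section~\ref{sec:weighted} plausible-vertex reduction at all. Instead it deterministically maintains the $(1+\epsilon,4)$-emulator $H$ of \cite{BrandFN22} with $O(n^{3/2}\sqrt{\log n})$ edges, runs Dijkstra on $H$ to get a tree $T$, and then replaces every emulator-only edge of $T$ by a short real path in $G$: such edges have both endpoints in the deterministic cover set $S$ with $|S|=O(\sqrt{n\log n})$, each corresponds to a path of length $O(1/\epsilon)$, and a shortest-path tree contains at most $|S|$ of them, so the replacement costs only $O(n^{1.5}\log^{1.5}n)$ per query. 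The residual additive $+4$ error is killed for short distances ($<4/\epsilon$) by separately querying an exact $O(1/\epsilon)$-bounded single-source data structure (\Cref{lem:unweighted:boundedsinglesource}). The determinism comes for free from the emulator construction, not from derandomizing a hitting set; there is no recursive bisection and no BFS balls around $s$ and $t$. If you want to salvage your plan, you would need a genuinely different deterministic pivot set that hits midpoints of \emph{long} paths, which is essentially the problem the emulator's $S$-set solves by a degree-based rather than distance-based cover.
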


The only previous dynamic algorithms for maintaining $(1+\epsilon)$-approximate shortest paths on unweighted undirected graphs
are by Bergamaschi et al.~\cite{BergamaschiHGWW21}.
Their algorithm could maintain shortest paths in $O_\epsilon(n^{1.529})$ update and $n^{1+o(1)}$ query time, but the dynamic algorithm is randomized and required the oblivious adversary assumption.
We not only match their update time and extend the dynamic algorithm to adaptive adversaries, but our result is also deterministic.

To support single-source queries, we can extend our dynamic algorithm as follows. Here a query returns an approximate shortest paths tree.  

\begin{theorem}[{Undirected, Unweighted, Approximate, \Cref{lem:unweighted:singlesource}}]
\label{thm:intro:unweighted:sssp}
There exists a fully dynamic algorithm that maintains $(1+\epsilon)$-approximate single-source shortest paths
for unweighted undirected graphs.
The worst-case update and query time is $O(n^{1.764}\epsilon^{-2}\log\epsilon^{-1})$ and the preprocessing time is $O(n^{2.609}\epsilon^{-2}\log\epsilon^{-1})$.
The dynamic algorithm is deterministic. 
\end{theorem}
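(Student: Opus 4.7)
The plan is to reduce single-source path reporting to the single-source distance oracle, in the same spirit as the $st$-reduction underlying \Cref{thm:intro:unweighted:st}, but batched so that all $n$ paths are recovered together. The first step is to set up the deterministic single-source $(1+\epsilon)$-approximate distance data structure of \cite{BrandFN22}, tuned so that each update takes $\tilde O(n^{1.764}\epsilon^{-2})$ time and each query returns the vector $(\hat d(s,v))_{v\in V}$ in the same time, with preprocessing $\tilde O(n^{2.609}\epsilon^{-2})$. Because this oracle is deterministic on unweighted undirected inputs, no randomness enters the algorithm.

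The second step is to build the approximate shortest-paths tree at query time from these distance estimates. Running the $st$-path reduction of \Cref{thm:intro:unweighted:st} separately for every $v\in V$ would cost $n\cdot n^{1.529}$, which is too much, so the reduction has to be batched. I would use the same midpoint-search primitive as in the $st$ case: for a deterministically chosen set $C$ of $\tilde O(n^{\nu})$ landmarks, precompute $\hat d(c,\cdot)$ for every $c\in C$; then for every target $v$ simultaneously identify some $c$ with $\hat d(s,c)+\hat d(c,v)\approx \hat d(s,v)$ and recurse on the two subpaths through $c$. Collecting the landmark-to-everything distances into a single $|C|\times n$ matrix and comparing it against the source-to-landmark distances turns the per-target midpoint search into a rectangular matrix multiplication of the shape $n^{\omega(1,1,\nu)}$, amortizing the midpoint work across all targets and producing parent pointers for all $v$ at a given recursion level in one batched step. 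The recursion depth is $O(\log n)$, and each chosen tree edge is genuinely an edge of $G$, so the resulting tree has stretch at most $(1+\epsilon)$.

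The main obstacle is balancing the exponents so that the cost of maintaining the distance oracle, the cost of fetching the $|C|$ single-source vectors, and the cost of the batched midpoint extraction all collapse to $\tilde O(n^{1.764}\epsilon^{-2}\log\epsilon^{-1})$. This is the same type of optimization performed in the $st$-case of \Cref{thm:intro:unweighted:st}, but now with an extra ``output $n$ paths'' term of the shape $n^{\omega(1,1,\nu)}$ instead of $n^{1+\nu}$; tracing through the undirected update and query expressions in \Cref{sec:fullalgo} and re-optimizing the free parameters shifts the optimum from $1.529$ to $1.764$. Determinism is preserved throughout because every subroutine invoked---the \cite{BrandFN22} distance oracle, the landmark set (a deterministic hitting set of the required size), and the matrix-multiplication-based midpoint search---admits a deterministic implementation on unweighted undirected graphs, so the overall data structure is deterministic as claimed.
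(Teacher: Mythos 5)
Your proposal takes a fundamentally different route from the paper, and it has a gap that I do not see how to close. The paper's unweighted undirected single-source algorithm (\Cref{lem:unweighted:singlesource}) does not do a landmark-based midpoint search at all: it maintains the $(1+\epsilon,4)$-emulator of \cite{BrandFN22}, runs Dijkstra on the emulator to get a tree $T$, replaces the $O(\sqrt{n\log n})$ emulator-only edges of $T$ by genuine $G$-paths using the exact short $S\times V$ distances the emulator already maintains, and then separately patches the additive $+4$ error by constructing an \emph{exact} depth-$O(1/\epsilon)$ shortest-path tree. That exact short tree is obtained via the index-filtering trick of \cite{KarczmarzMS22}: it maintains bounded single-source distances in $p$ auxiliary copies $G_1,\dots,G_p$ of $G$, picks $p=n^{(1-\rho)/2}$ to balance the $p\cdot n^{1+\rho}$ cost of maintaining all copies against the $n\cdot n/p$ cost of scanning candidate parents, and this is precisely where the exponent $(3+\rho)/2 \approx 1.764$ comes from. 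The trade-off you point at in \Cref{sec:fullalgo} is for the randomized directed/weighted reduction and is structurally unrelated.

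The concrete problem with your midpoint-search plan is that the distance oracle you propose to query is only $(1+\epsilon)$-approximate, and approximate distances do not support the recursion you describe. A landmark $c$ with $\hat d(s,c)+\hat d(c,v)\approx \hat d(s,v)$ need not lie anywhere near a near-shortest $sv$-path, so "recurse on the two subpaths through $c$" is not well founded; and even if each level could be made to work, the multiplicative $(1+\epsilon)$ error compounds over $O(\log n)$ levels, yielding stretch $(1+\epsilon)^{O(\log n)}$ rather than $(1+\epsilon)$. The claim that "each chosen tree edge is genuinely an edge of $G$" is the crux, and nothing in your sketch supports it: you would still need a mechanism to get from "approximately a vertex near the middle of an approximate path" down to a literal incident edge, which is exactly what the paper's combination of emulator (for long paths, whose $T$-edges are mostly real $G$-edges already) and exact bounded distances (for short paths) provides. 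Finally, you leave the exponent balance entirely to "re-optimizing the free parameters"; without a concrete trade-off like $p\cdot n^{1+\rho}$ versus $n^2/p$, it is not evident your $n^{\omega(1,1,\nu)}$ term would land on $1.764$.
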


\paragraph{Reduction from Paths to Distances} %

We show reductions from the dynamic shortest path to dynamic distances.
Our algorithm only assumes blackbox access to a dynamic algorithm that can maintain exact and approximate distances and uses this distance information to reconstruct the shortest path. 
The complexities stated in \Cref{sec:intro:results} are based on using the dynamic distance algorithms by \cite{Sankowski05,BrandN19} (\Cref{san-bn}). 
If in the future more efficient dynamic distance algorithms are developed, then our results will become faster as well.

Dynamic algebraic algorithms are very efficient for maintaining short distances but become slower as the maximum distance increases. This kind of data structure maintains ``$h$-bounded distances'', i.e.~for any $s,t,\in V$, they return the correct $st$-distance if the distance is at most $h$, and otherwise, it returns $\infty$.

\begin{restatable}[{Simplified version of \Cref{thm:exact-dir-gen}}]{theorem}{simplePathConstruction}\label{thm:intro:reduction} Let $h$ be a parameter between $1$ and $n$. Assume we are given a distance oracle on a directed graph $G=(V,E)$ with integer edge weights bounded by $W$. Given any two sets $S,T\subset V$, the oracle returns $Wh$-bounded distances for each pair in $S\times T$. 
Let $Q_\alpha(|S|,|T|)$ be the complexity of the oracle for returning an $\alpha$-approximation of the $Wh$-bounded distances for $S,T\subset V$.

Then given $s,t\in V$ we can reconstruct an exact $st$-shortest path in time
$$
\tilde{O}(Q_1(n/h,n/h) + Q_2(n,n/h) + n\cdot Q_1(1,1) \log (nW) ).
$$
\end{restatable}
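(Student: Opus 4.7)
The plan is a two-stage reduction: first identify the ``milestones'' of a hitting set $H$ that the true shortest $st$-path visits, then reconstruct each inter-milestone sub-path by a recursive midpoint scheme. Sample $H \subseteq V$ by including each vertex independently with probability $\Theta((\log n)/h)$, so $|H| = \tilde{O}(n/h)$ and, with high probability, every $h$ consecutive vertices on any simple path meet $H$. Query the oracle for exact $Wh$-bounded distances within $H \cup \{s,t\}$ at cost $Q_1(n/h, n/h)$, build an auxiliary weighted graph on $H \cup \{s,t\}$ using these distances, and run Dijkstra from $s$. The resulting $st$-path lists the ordered milestones $s = u_0, u_1, \dots, u_k = t$ visited by a true shortest $st$-path of $G$, with $k = O(n/h)$ and $\dist(u_i, u_{i+1}) \le Wh$ for each consecutive pair; correctness follows by pasting together the $u_i$-to-$u_{i+1}$ sub-paths, each a true shortest path because the oracle is exact on pairs of distance at most $Wh$.

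Next, I would reconstruct each segment $(u_i, u_{i+1})$ by recursively finding on-path midpoints: given a pair $(a,b)$ with known exact distance $d \le Wh$, locate a vertex $m$ with $\dist(a,m) + \dist(m,b) = d$ and $\dist(a,m) \in [\lfloor d/2 \rfloor, \lceil d/2 \rceil]$, then recurse on $(a,m)$ and $(m,b)$. The recursion tree has depth $O(\log(nW))$ and, across all segments, $O(n)$ leaves (one per edge of the final path), so the total number of midpoint calls is $O(n)$. To implement each call in $O(\log(nW))$ exact single-pair queries of cost $Q_1(1,1)$, I would precompute $Wh$-bounded $2$-approximate distances from all $v \in V$ to all vertices of $H$, at cost $Q_2(n, n/h)$. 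These $2$-approximate values confine every plausible midpoint to a thin ``band'' relative to the enclosing original milestones $u_i, u_{i+1}$, which is then resolved by the $O(\log(nW))$ exact queries.

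The main obstacle is this midpoint-finding subroutine: the plan relies on the $2$-approximate $V$-to-$H$ distances to prune the candidate set to $\polylog(nW)$ size at every recursion level, yet as the recursion descends the current endpoints $(a,b)$ are in general no longer in $H$. The delicate step is to argue that anchoring the candidate filter to the original enclosing milestones $u_i, u_{i+1}$ (rather than to the recursive endpoints $a,b$), combined with the exact distance $d$ already computed inductively, still suffices to pinpoint a valid midpoint in $O(\log(nW))$ exact queries, and to handle non-uniqueness of shortest paths consistently throughout the recursion so that the concatenated output is a single coherent $st$-path.
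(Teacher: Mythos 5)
Your first stage (hitting-set sampling, querying exact $Wh$-bounded distances on the $\tilde{O}(n/h)$-size sample at cost $Q_1(n/h,n/h)$, running Dijkstra on the auxiliary graph to extract milestones $s = u_0, u_1, \ldots, u_k = t$) matches the paper. Your second stage is genuinely different: you propose recursive midpoint bisection of each segment $(u_i, u_{i+1})$, whereas the paper does a forward linear scan.

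The gap you flag in your last paragraph is real and is not a ``delicate step'' that survives closer inspection---it is where the argument breaks. The $2$-approximate distances you precompute are only available for pairs in $V \times H$ (and $H \times V$). They do let you define, per top-level segment $(u_i, u_{i+1})$, a plausible set $P_i$ of vertices that could sit anywhere on that segment. But a single $P_i$ can have size $\Theta(n)$ (only $\sum_i |P_i|$ is small), and once you recurse into a sub-segment $(a,b)$ whose endpoints lie off $H$, you have no approximate distances anchored at $a$ or $b$ to shrink the candidate band. Anchoring to the enclosing $u_i, u_{i+1}$ does not help: the factor-$2$ error in $\tilde{d}(u_i, v)$ is on the order of $\dist(u_i, u_{i+1})$, which dwarfs the length of a deep sub-segment, so the band stays essentially as wide as $P_i$ at every level. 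There is also no monotone predicate that would let you binary search inside $P_i$ for the midpoint with $O(\log(nW))$ exact probes; the only way to certify a midpoint is to test each candidate, which costs $\Theta(|P_i|)$ per call and blows up across the recursion tree.

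The paper sidesteps bisection entirely. It fixes $P_i$ once per segment, sorts it by (exact) $h_i$-to-$v$ distance, and then walks it left to right maintaining the current path tip $v_{\mathrm{last}}$, accepting a vertex $v$ iff $\len(v_{\mathrm{last}}, v) + \dist(v, h_{i+1}) = \dist(v_{\mathrm{last}}, h_{i+1})$. This uses $O(|P_i|)$ exact single-pair queries per segment and never needs a localized candidate set for an interior sub-segment, so the anchoring problem never arises. The cost bound then comes from the ``no-shortcut'' lemma: grouping segments by weight class $\len_H(\sigma) \in [2^{\alpha-1}, 2^\alpha)$ and showing that a vertex can be plausible for only $O(1)$ segments per class, hence $\sum_i |P_i| = O(n \log(nW))$ and the exact-query term is $n \cdot Q_1(1,1) \log(nW)$ as claimed. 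That counting lemma (\Cref{not-too-many}, \Cref{not-too-many-total}) is the missing key ingredient your proposal needs; with it the natural algorithm is the linear scan, not the midpoint recursion.
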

\Cref{thm:intro:reduction} is a simplified version of the reduction that yields \Cref{thm:intro:directedintweighted} when combining with the dynamic distance algorithms from \cite{Sankowski05,BrandN19}.
We also prove similar reductions for $(1+\epsilon)$-approximate shortest paths on weighted graphs, which increases the complexity by an $\tilde{O}(\epsilon^{-1})$ factor (\Cref{thm:aprx-dir-gen}).

Notice that we only need a crude $2$-approximation\footnote{In fact, any constant factor approximation would suffice.} for distances of $n\times n/h$ many pairs (see $Q_2(n,n/h)$ term in \Cref{thm:intro:reduction}) 
to reconstruct an \emph{exact} $st$-shortest path. 
This is what allows our reduction to be very efficient, and for \Cref{thm:intro:directedintweighted} to be faster than previous work \cite{BergamaschiHGWW21}, and for \Cref{thm:intro:directedweighted} to match previous work of \cite{BrandN19}, despite returning the shortest path.

We give a brief description of the technical ideas used to prove \Cref{thm:intro:reduction}, a more detailed description is given in \Cref{sec:overview}.
As mentioned before, dynamic algebraic algorithms efficiently maintain $h$-bounded distances for small $h$.
Thus they are usually combined with graph techniques (e.g.~random hitting sets \cite{UllmanY91}) to decompose long paths $s\path t$ into segments $s\path h_1 \path h_2 \path ... \path t$ each of short length \cite{Sankowski05,BergamaschiHGWW21,BrandN19,BrandFN22}.
So the task to construct an $st$-shortest path reduces to the task of finding $h_ih_{i+1}$-shortest path for each $i$.

While techniques such as predecessor search can efficiently reconstruct the shortest path
for \emph{one} such segment,
repeating this for \emph{all} segments requires exact distances for upto $\tilde O(n\cdot n/h)$ vertex pairs. Existing algebraic data structures are too slow to compute the exact distances for so many pairs in subquadratic time. We can beat the quadratic barrier because our technique only requires $2$-approximate distances for $\tilde O(n\cdot n/h)$ vertex pairs.

This is done via
a pre-filtering step that for each segment reduces the search space of potential predecessors. 
This pre-filtering step uses an \emph{approximate} distance oracle, even if we later compute an \emph{exact} shortest path as in \Cref{thm:intro:directedintweighted}.
Using the approximate distances, we create for each segment $h_i\to h_{i+1}$ a set of plausible vertices $P_i \subset V$ that could potentially be on the $h_ih_{i+1}$-shortest path. 
We show that (i) the distances from $h_i$ to each $v\in P_i$ suffice to reconstruct the $h_ih_{i+1}$-shortest path,
(ii) while any one $P_i$ could be of size $O(n)$, the total size of all $P_i$ together is small: $\sum_i |P_i| = O(n \log nW)$ where $W$ is the largest edge weight in the graph.
This means we need only $O(n \log nW)$ exact $h$-bounded distance pairs, which algebraic data structures can compute in subquadratic time.

\paragraph{Other Related Work and Open Problems}

Besides shortest paths, there are also other dynamic problems that can only be maintained in subquadratic time when using algebraic techniques \cite{AbboudW14}, e.g.~maximum cardinality matching and reachability.
Algebraic data structures often have the downside that they maintain only some value (distance, size of the matching, whether there exists some path) but not the object itself (the path or matching), see e.g.~
\cite{KingS02,DemetrescuI00,Sankowski04,Sankowski05,Sankowski07,Sankowski08,BrandNS19,BrandN19,BrandS19,BrandFN22,GuR21,AlmanH22}.
For a long time, it was an open problem how to maintain the object itself. %
The work by \cite{KarczmarzMS22,BergamaschiHGWW21} and our results are the first results in that direction for reachability and shortest paths. However, for matching the question remains open, even against oblivious adversaries.
It is also open whether the shortest path (or reachability with path reporting) can be solved in the same complexity as maintaining the distance (or reachability information) or if there is some strict gap between maintaining the value vs. object. Conditional lower bounds \cite{AbboudW14,HenzingerKNS15} imply lower bounds for the value problem and no larger lower bounds are known for maintaining the object.

Similar open problems also occur in non-algebraic dynamic algorithms, i.e.~dynamic algorithms for approximate matching.
While the size of a matching can be approximated with better-than-2 approximation in $\polylog(n)$ update time \cite{Behnezhad22,BhattacharyaKSW22}, it is open whether the approximate matching itself can be computed in that time. 
Maintaining the matching itself for 2-approximations was already an active area of research by analyzing how to round fractional matchings to integral ones \cite{Wajc20,BhattacharyaK21}.

\subsection{Organization}

We start by giving some notation in the preliminaries (\Cref{sec:prelim}).
In \Cref{sec:overview} we then outline how to extend dynamic algorithms that maintain distances to support path queries.
These results are then formally proven in subsequent sections:
\Cref{sec:weighted} proves how to extend dynamic distance algorithms for weighted graphs also to maintain (approximately) shortest paths, and in \Cref{sec:fullalgo}, these techniques are combined with the dynamic distance algorithms from  \cite{Sankowski05,BrandN19} to obtain our results \Cref{thm:intro:directedintweighted,thm:intro:directedweighted}.
These results are randomized and work against an adaptive adversary.
At last, \Cref{sec:unweighted} considers unweighted undirected graphs and presents deterministic path reporting algorithms.

\subsection{Preliminaries}
\label{sec:prelim}

With high probability (w.h.p.) means with probability at least $1 - 1/n^c$
for any constant $c > 1$.

\paragraph{Graph notations.} \label{eq:sorted} We will use $\len_G(u, v)$ for the length of the edge $(u, v)$ in graph $G$ and $\dist_G(u, v)$ for distance between $u$ and $v$ in graph $G$. %
We write $\dist^h_G(u, v)$ for the $h$-bounded distances, i.e.~$\dist^h_G(u, v) = \dist_G(s,t)$ if $\dist_G(s,t) \le h$ and $\dist^h_G(u, v) = \infty$ otherwise.

We say $\tilde{d} \in [0,(1+\epsilon)h]\cup\{\infty\}$\footnote{We can w.l.o.g assume that any approximate data structure has output values within this range by setting all values larger than $(1+\epsilon)h$ to $\infty$.} is a $(1+\epsilon)$-approximate $h$-bounded $st$-distance, if (i)
$\dist_G(s,t)\le\tilde{d}$ and (ii) $\tilde{d}\le(1+\epsilon)\dist_G(s,t)$ when $\dist_G(s,t) \le h$. %
Similarly, we call any $\tilde d>0$ a $(1+\epsilon)$-approximate $h$-hop bounded $st$-distance if (i) $\tilde d \ge \dist_G(s,t)$, and (ii) $\tilde d = \dist_G(s,t)$ if there is a shortest path using at most $h$ hop.

We write $\pi^*_G(u, v)$ for a $uv$-shortest path in $G$.

We write $\mathcal{N}_G(v) = [u_1, u_2, \ldots ]$ for the neighbourhood of a vertex $v$ in an undirected graph $G$ sorted in ascending order by the edge length (breaking ties arbitrarily).
Similarly, $\mathcal{N}_{out}(v)$ and $\mathcal{N}_{in}(v)$ denote the out-adjacent and in-adjacent neighbourhoods of $v$ in directed graphs, sorted in ascending order by edge lengths.

\paragraph{Oracles.} 
Throughout, we will assume access to distance oracles. These oracles are given by running dynamic distance algorithms of previous work \cite{Sankowski05,BrandN19}.

\begin{definition}{\label{oracle-notation}}
    We write $\calO_G^{(1+\epsilon)}$ for an oracle that returns $(1+\epsilon)$-approximate $h$-bounded distance estimates on graph $G$ and supports the following operations 
    \begin{itemize}[nosep]
        \item $preprocess()$ -- preprocess graph $G$ in time $P_{1+\epsilon}$
        \item $update(e, w)$ -- update any edge $e$ in $G$ to weight $w$ in time $U_{1+\epsilon}$

        \item $queryAll(S_1, S_2)$ -- query all bounded distances between pairs $s_1 \in S_1, s_2 \in S_2$ in time $Q_{1+\epsilon}(|S_1|, |S_2|)$.

        \item $query(u, v)$ -- query bounded distance between $u$ and $v$ in time $Q_{1+\epsilon}(1, 1)$.
    \end{itemize}
    
    For example, in our pseudo-code, we will write $\calO_G^{(1+\epsilon)}.query(u,v)$ when querying the $(1+\epsilon)$-approximate $h$-bounded $uv$-distance in $G$. Also, for exact distance oracle we'll use notation $\calO_G$ for simplicity. %
\end{definition}

\paragraph{Integer weight rounding}

Our dynamic algorithm relies on the integer weight rounding technique by Zwick \cite{Zwick02}. This technique reduces the task of computing $(1+\epsilon)$-approximate distances on (positive) real weighted graphs to computing $(1+\epsilon)$-approximate distances on integer weighted graphs.
\begin{definition}
For real numbers $0 < A, B$ define the graph $G^{\prime}=\left(V, E^{\prime}\right)$ to be an $(A, B)$-rounded version of the graph with  edges
$E^{\prime}=\left\{(u, v) \in E \mid \len_G(u, v) \leq B\right\}$
\noindent
and integer edge weights 
$\len_{G'}(u, v)=\left\lceil A \len_G(u, v) / B\right\rceil$.
\end{definition}

\begin{lemma}[{\cite{Zwick02},\cite[Lemma 4.9]{BrandN19}}]
\label{AB-dist}

Let $G=(V, E)$ be a graph with $n$ nodes and real edge weights from $[1, W]$. For any $0<A, B$ let $G'$ be the $(A, B)$-rounded version of $G$.

Then for any path from $s$ to $t$ in $G$ of length $\dist_G(s, t) \leq B$ let $h$ be the number of its hops. We have $\operatorname{dist}_G(s, t) \leq(B / A) \operatorname{dist}_{G^{\prime}}(s, t) \leq \operatorname{dist}_G(s, t)+(B / A) h$.

\end{lemma}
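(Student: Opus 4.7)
The plan is to prove the two inequalities separately, both by exhibiting a single path that witnesses each bound, and comparing its length in $G$ versus its length in $G'$ via the trivial ceiling inequality $x \le \lceil x \rceil \le x+1$.

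For the lower bound $\dist_G(s,t) \le (B/A)\dist_{G'}(s,t)$, I would let $P'$ be a shortest $st$-path in $G'$. Since $E' \subseteq E$, the same sequence of vertices is a path in $G$. For any edge $(u,v) \in E'$ we have $\len_{G'}(u,v) = \lceil A\len_G(u,v)/B\rceil \ge A\len_G(u,v)/B$, so summing along $P'$ gives $\len_{G'}(P') \ge (A/B)\len_G(P')$, which rearranges to $\len_G(P') \le (B/A)\len_{G'}(P') = (B/A)\dist_{G'}(s,t)$. Then $\dist_G(s,t) \le \len_G(P')$ finishes this direction.

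For the upper bound $(B/A)\dist_{G'}(s,t) \le \dist_G(s,t) + (B/A)h$, I would take an $h$-hop shortest $st$-path $P$ in $G$, whose total $G$-length is $\dist_G(s,t) \le B$. Since edge weights in $G$ are at least $1$ and the weights along $P$ sum to at most $B$, every edge on $P$ has $\len_G(u,v) \le B$, so $P$ survives into $E'$. Applying $\lceil A\len_G(u,v)/B \rceil \le A\len_G(u,v)/B + 1$ to each of the $h$ edges and summing yields $\len_{G'}(P) \le (A/B)\dist_G(s,t) + h$. Since $\dist_{G'}(s,t) \le \len_{G'}(P)$, multiplying by $B/A$ gives the claim.

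There is essentially no obstacle here; the only small subtlety is justifying that the $G$-shortest path $P$ really lies inside $E'$, which requires the hypothesis $\dist_G(s,t) \le B$ together with positivity of edge weights. Once that membership is established, the entire lemma reduces to the one-line ceiling inequality applied edgewise.
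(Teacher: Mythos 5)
Your proof is correct, and it is the standard argument for this rounding lemma. Note that the paper itself does not supply a proof of this statement but cites it from \cite{Zwick02} and \cite[Lemma 4.9]{BrandN19}; the argument you give (applying the ceiling inequality $x \le \lceil x \rceil \le x+1$ edgewise along a shortest path in $G'$ for the lower bound and along the $h$-hop shortest path in $G$ for the upper bound, after checking that this path survives into $E'$ because $\dist_G(s,t)\le B$ and edge weights are positive) is exactly the one used in those sources.
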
 

\begin{lemma}[{\cite{Zwick02},\cite[Lemma 4.10]{BrandN19}}]
\label{AB-min-dist}
Let $\epsilon \ge 0$, $0<\s<1$, $G=(V, E)$ be a graph with $n$ nodes and real edge weights from $[1, W]$. 
For $k = \left\lceil\log _2 n W\right\rceil$ and all $i=0,1,...,k$, define graph $G_i$ as $(A, B_i)$-rounded versions of $G$ where
$B_i=2^i, A=2 n^\s/ \epsilon$.

For the special case $\epsilon=0$ we let $k=0$, $G_0=G$, $A=B=Wn^\s$.

Then for any pair $s, t \in V$ we have $\dist_G(s, t) \leq \min _i\left(B_i / A\right) \dist_{G_i}(s, t)$ and if the $st$-shortest path uses at most $n^\s$ hops, then we also have
$
\min _i\left(B_i / A\right) \dist_{G_i}(s, t) \leq(1+\epsilon) \dist_G(s, t).
$
\end{lemma}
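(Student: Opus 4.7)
The plan is to derive both inequalities by applying \Cref{AB-dist} once per scale $B_i$, and then choosing a well-calibrated scale for the upper bound; this is the standard Zwick-style hop-vs-precision argument.

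For the lower inequality, I would show the per-scale bound $\dist_G(s,t)\le (B_i/A)\dist_{G_i}(s,t)$ for every $i$ and then take the minimum. Fixing $i$ with $\dist_{G_i}(s,t)<\infty$ (otherwise the bound is vacuous), every edge of $G_i$ has $G$-weight at most $B_i$, and the ceiling in $\len_{G_i}(u,v)=\lceil A\len_G(u,v)/B_i\rceil$ gives $\len_G(u,v)\le (B_i/A)\len_{G_i}(u,v)$. Summing along a shortest $st$-path in $G_i$ yields an $st$-path in $G$ of total $G$-length at most $(B_i/A)\dist_{G_i}(s,t)$, which upper bounds $\dist_G(s,t)$; this is exactly the left half of \Cref{AB-dist}.

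For the upper inequality, let $D:=\dist_G(s,t)$ and assume the shortest path uses $h\le n^\s$ hops. Since $D\le nW$ (edge weights in $[1,W]$, at most $n-1$ edges) and $k=\lceil\log_2 nW\rceil$, there is some $i^*\in\{0,\dots,k\}$ with $D\le B_{i^*}\le 2D$, namely $i^*:=\max(0,\lceil\log_2 D\rceil)$. Then $D\le B_{i^*}$, so the right half of \Cref{AB-dist} yields
$$ (B_{i^*}/A)\,\dist_{G_{i^*}}(s,t)\;\le\; D+(B_{i^*}/A)\,h\;\le\; D+(2D/A)\,n^\s\;=\;D(1+\epsilon), $$
using $A=2n^\s/\epsilon$. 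The minimum over $i$ is no larger, finishing the upper bound.

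The degenerate case $\epsilon=0$ follows directly from $G_0=G$ and $A=B$, which make $(B_0/A)\dist_{G_0}(s,t)=\dist_G(s,t)$ on both sides. I do not expect a real obstacle; the only bookkeeping point is that $i^*$ stays in $\{0,\dots,k\}$, which is precisely why $k$ is chosen so that $B_k\ge nW$ covers every achievable $st$-distance, and note that the $h\le n^\s$ hop assumption is used only for the upper inequality.
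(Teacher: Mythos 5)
The paper states this lemma with a citation to \cite{Zwick02} and \cite[Lemma 4.10]{BrandN19} and does not reprove it, so there is no in-paper proof to compare your argument against. Your proof is correct and follows the standard Zwick rounding argument those sources use: derive the lower inequality at every scale and take the minimum, and for the upper inequality choose $i^*$ with $\dist_G(s,t)\le B_{i^*}\le 2\dist_G(s,t)$, apply \Cref{AB-dist} at that one scale, and absorb the $(B_{i^*}/A)h$ rounding error into $\epsilon\,\dist_G(s,t)$ via $h\le n^\s$ and $A=2n^\s/\epsilon$. One point you handled with appropriate care: \Cref{AB-dist} as written is conditioned on $\dist_G(s,t)\le B$, so it cannot be invoked verbatim at small scales; your direct term-by-term argument for the lower inequality (summing $\len_G(u,v)\le(B_i/A)\len_{G_i}(u,v)$ along a $G_i$-shortest path) is the right fix, and you also correctly noted the bound is vacuous when $\dist_{G_i}(s,t)=\infty$.
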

\noindent

\section{Technical Outline}
\label{sec:overview}

In this work, we present techniques for extending dynamic distance algorithms also to maintain an (approximately) shortest path against an adaptive adversary.
To outline our techniques, we will start with a simple warm-up on unweighted undirected graphs in \Cref{sec:overview:undirected}. We describe how to find the $st$-shortest path on this type of graph, which also serves as a demonstration of the issues that must be resolved when looking at weighted and/or directed graphs.
The subsequent \Cref{sec:overview:directed} then describes our techniques for directed and weighted graphs and how these techniques circumvent the problems from the previous subsection.

\subsection{Warm-up: Undirected Unweighted Graphs}
\label{sec:overview:undirected}

We start with a brief summary of how dynamic algebraic algorithms are used to maintain the distance. We then explain how to extend this result also to return the shortest path. 

\paragraph{Dynamic Distances}
Dynamic algebraic algorithms can efficiently maintain bounded distances.
For example, Sankowski \cite{Sankowski05} presented a dynamic algorithm that maintains $h$-bounded distances for any $h \ge 1$ on unweighted graphs in $O(hn^{1.529})$ update time, and $O(hn^{0.529})$ query time to return the $h$-bounded distance for any pair of vertices.
Via a common hitting set sampling argument (see e.g.~\cite{UllmanY91}), 
this result can be extended to unbounded distances: when sampling $\Theta(n/h)$ vertices $S\subset V$, 
any shortest path $s\path t$ is w.h.p.~split into segments $s\path h_1 \path h_2 \path ... \path t$ with $h_i \in S$ for all $i$, and each segment using at most $h$ edges. 
This leads to the following observation:
\begin{fact}\label{fact:hitting}
Let $R$ be a uniformly sampled random subset of $V$ of size $\Theta((n/h) \log n)$. 
Let graph $H = (V_H, E_H) = (R\cup\{s,t\}, E_H)$ be a complete graph where for all $(u,v)\in E_H$ the edge weight $\len_H(u,v)$ is the $h$-bounded $uv$-distance in $G$.
Then w.h.p.~$\dist_H(s,t) = \dist_G(s,t)$.
\end{fact}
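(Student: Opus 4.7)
\textbf{Proof plan for \Cref{fact:hitting}.} The plan is to prove the two inequalities $\dist_H(s,t) \ge \dist_G(s,t)$ and $\dist_H(s,t) \le \dist_G(s,t)$ separately, with randomness entering only in the second.

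The first inequality is immediate: any $st$-path in $H$ uses edges whose weights are $h$-bounded distances in $G$, and each such edge weight upper bounds the length of some actual walk in $G$ between its endpoints. Concatenating these walks yields an $st$-walk in $G$ of total length at most $\dist_H(s,t)$, so $\dist_G(s,t) \le \dist_H(s,t)$.

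For the reverse inequality, I would fix any shortest $st$-path $P = v_0 = s, v_1, \dots, v_k = t$ in $G$ (where $k \le n-1$ because $G$ is unweighted). If $k \le h$, the edge $(s,t)$ is present in $H$ with weight $\dist_G^h(s,t) = \dist_G(s,t)$ and we are done. Otherwise, the key step is a standard hitting-set calculation: for any fixed set $T \subseteq V$ of $h$ vertices, $\Pr[R \cap T = \emptyset] \le (1 - h/n)^{|R|} \le \exp(-|R| \cdot h/n) = n^{-\Omega(c)}$ when $|R| = c(n/h)\log n$. Taking a union bound over the at most $k \le n$ windows $\{v_j, v_{j+1}, \ldots, v_{j+h-1}\}$ of $h$ consecutive vertices along $P$ gives that w.h.p.\ \emph{every} such window contains at least one vertex of $R$.

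Conditioned on this hitting event, I would greedily extract a subsequence $s = h_0, h_1, \ldots, h_\ell = t$ of vertices along $P$ with each $h_i \in R \cup \{s,t\}$ and each gap at most $h$ hops on $P$. Since $P$ is a shortest path in $G$, the subpath from $h_i$ to $h_{i+1}$ is itself a shortest $h_ih_{i+1}$-path, and it has at most $h$ edges, so $\dist_G^h(h_i, h_{i+1}) = \dist_G(h_i, h_{i+1})$ equals the length of that subpath. Hence the path $h_0 \to h_1 \to \cdots \to h_\ell$ in $H$ has total weight $\sum_i \dist_G(h_i,h_{i+1}) = \len(P) = \dist_G(s,t)$, yielding $\dist_H(s,t) \le \dist_G(s,t)$.

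The main obstacle is essentially bookkeeping rather than a conceptual hurdle: choosing the hidden constant $c$ in $|R| = \Theta((n/h)\log n)$ large enough that the union bound over $O(n)$ windows (and, if desired, over polynomially many $(s,t)$ pairs) still yields $1 - 1/n^{\Omega(1)}$ success probability. One minor subtlety to handle carefully is the sampling model (with or without replacement); both give the same asymptotic bound, but I would state results for sampling without replacement and use the standard inequality $\binom{n-h}{|R|}/\binom{n}{|R|} \le (1-h/n)^{|R|}$ to reduce to the cleaner calculation above.
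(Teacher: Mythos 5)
Your proof is correct and is exactly the standard hitting-set argument the paper implicitly relies on: the paper does not give an explicit proof of \Cref{fact:hitting} (it is stated as a warm-up fact, with the underlying hitting-set lemma deferred to \Cref{segmentation} and cited to \cite{UllmanY91}), but your two-inequality structure and your union bound over $h$-vertex windows along a fixed shortest $st$-path is precisely that standard argument. The key observations you make — that subpaths of a shortest path are shortest, that segments of at most $h$ hops have their true distance captured by the $h$-bounded distance, and hence the edge weights along your extracted $h_0,\ldots,h_\ell$ sum to exactly $\dist_G(s,t)$ — are correct, and the reverse inequality $\dist_G(s,t)\le\dist_H(s,t)$ via concatenating the underlying $G$-walks is the right (deterministic) half. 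The only small point worth noting is one you already flagged: the statement fixes a single pair $(s,t)$, so a union bound over the $O(n)$ windows of one chosen shortest path suffices; tolerating all $n^2$ pairs simultaneously would just require a larger constant in the $\Theta((n/h)\log n)$ sample size, which the $\Theta$ absorbs.
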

So if we know the pairwise $h$-bounded distances $(R\cup\{s,t\})\times (R\cup\{s,t\})$, 
then we find the $st$-distance with just $O((n/h)^2)$ additional time for running Dijkstra's algorithm on graph $H$. 
The algorithm by Sankowski \cite{Sankowski05}  maintains these distances in $\tilde O(h(n/h)^2)$ additional time, resulting in overall $\tilde O(hn^{1.529}+n^2/h) = \tilde O(n^{1.765})$ time (where $h=n^{0.235}$) for maintaining the $st$-distance\footnote{%
This complexity assumes that set $S$ is fixed. If we later return the shortest path, then the adversary can learn set $S$. So to handle adaptive adversaries, we will resample set $S$ after each update which increases this complexity to $O(n^{\exact})$ as in \Cref{thm:intro:directedintweighted}.}.

\paragraph{Reconstructing the Path}

As outlined in the previous paragraph, we can assume that we already know that some $st$-shortest path consists of segments $s\path h_1 \path h_2 \path \ldots \path t$ where the $h_i$ (and their order) was found by running Dijkstra's algorithm on graph $H$.
For notational simplicity, we can add $s,t$ to set $S$ and define $h_0 = s$ and $h_k = t$ for some $k\in\N$, so the segments of the $st$-path are of form $h_i \path h_{i+1}$ for $i=0,...,k-1$.

To construct an $st$-shortest path in $G$, we can reconstruct shortest paths between $h_i$ and $h_{i+1}$ in $G$ for all $i$. 

Note the following observation for all $u,v\in V$: any vertex $w$ belongs to a $uv$-shortest path if and only if 
$\dist_G(u,w)+\dist_G(w,v) = \dist_G(u,v)$. Further, if $\dist_G(u,v)\le h$, then we can verify this property by querying the dynamic algebraic algorithm. 
Since each segment $h_i \path h_{i+1}$ is an $h_i h_{i+1}$-shortest path of length at most $h$,
this observation leads to an intuitive idea: 
let us run BFS from $h_i$ to $h_{i+1}$, but each vertex $w$ is only put in the BFS-queue, if 
\begin{align}\label{overview:check}
    \dist^{h}_G(v_{last},w)+\dist^{h}_G(w,h_{i+1}) = \dist^{h}_G(v_{last},h_{i+1}) 
\end{align}
where $v_{last}$ is the last recovered vertex on the $h_i h_{i+1}$-shortest path. 
Since querying these distances takes some time, we must bound how many queries we perform.

This question is answered by another important observation -- a ``no-shortcut'' argument,
which was also used in \cite{BernsteinC16,Bernstein17,GutenbergWW20}. 
For any $st$-shortest path and any $w\in V$, the vertex $w$ cannot be a neighbor of more than $3$ vertices on the shortest path. 
Otherwise we could construct a shortcut $s\path v_1 \to w \to v_4 \path t$, i.e.~an even shorter path:
Let $v_1,..,v_4$ be the vertices on the $st$-shortest path with neighbor $w$ (enumerated in the same order as on the path).
Then we have $$\dist(s,t) \le \dist(s,v_1) \underbrace{+~~2~~+}_{v_1\to w \to v_4} \dist(v_4,t) < \dist(s,v_1) + \underbrace{\dist(v_1,v_4)}_{\ge 3} + \dist(v_4,t) = \dist(s,t).$$
Thus we perform at most $O(n)$ distance queries since two distance queries are performed for each neighbor of a vertex on the shortest path and every vertex can be such a neighbor at most three times.

This directly implies that we can maintain the $st$-shortest path on unweighted undirected graphs in $O(hn^{1.529} + n^2/h + n\cdot h n^{0.529}) = O(n^{1.765})$ time, where the first two terms come from the dynamic algorithm by Sankowski \cite{Sankowski05} (outlined at the start of this subsection) and the last term comes from the $O(n)$ queries we perform to construct an $st$-shortest path.

\paragraph{Problems on weighted and/or directed graphs}
This idea is not immediately applicable to weighted or directed graphs. 
\begin{itemize}[nosep]
    \item If edges $(v_1, w), (w, v_4)$ have large weights, they don't necessarily give a shortcut. 
    \item If we only know approximate distances, then the above proof breaks down as we cannot verify if a vertex is on the shortest path. (Note that the exact distance cannot be maintained in subquadratic time on weighted graphs with polynomial edge weights under the APSP-conjecture \cite{AbboudW14}.)
    \item If the edges are directed, then we can no longer guarantee that $v_1 \rightarrow w \rightarrow v_4$ is a path as we only know $v_1 \rightarrow w \leftarrow v_4$. Thus we cannot bound the number of vertices that the BFS looks at via the no-shortcut argument.
    In particular, our algorithm might have to look at all $O(n^2)$ edges in the graph. 
    See \Cref{fig:overview-dijkstra-fails} for such an example.
\end{itemize}
The next subsection
presents a way to extend the no-shortcut argument to directed weighted graphs.

\subsection{Directed Weighted Graphs}
\label{sec:overview:directed}

\begin{figure} 
\begin{tikzpicture}
  [
      hset/.style={circle, fill=blue!50, minimum size=#1, inner sep=0pt, outer sep=0pt},
      hset/.default = 6pt,
      dot/.style = {circle, fill, minimum size=#1, inner sep=0pt, outer sep=0pt},
      dot/.default = 3pt,
      segment/.style={
        decorate, 
        decoration=snake,
        segment amplitude=.2mm,
        segment length=2.5mm,
        line after snake=10mm}, 
    segment2/.style={
        decorate, 
        decoration=snake,
        segment amplitude=.2mm,
        segment length=3.5mm,
        line after snake=0.5mm},
     back/.style={
         line width=0.5pt, color=orange, sloped, anchor=south, bend left=30}
    ],
    
    \def\lng{1}
    \def\lvl{0}
    \def\tip{Stealth[length=2mm, width=1mm]}
    \def \vertices{ 
        \node 
            [dot, label=above:$v_1$] 
            (r0)         
            at (0, \lvl) {}; 
        \node 
            [dot, label=above:$v_2$] 
            (r1)         
            at (2, \lvl) {}; 
        \node 
            []                       
            (r1_next)    
            at (2.5 , \lvl) {};
          
        \node 
            [] 
            (r2_prev) 
            at (3.5,  \lvl) {};
        \node 
            [hset, label=above:${v_t} (\color{blue}{h_i}\color{black})$] 
            (r2) 
            at (4,  \lvl) {};
        
        \node 
            [dot, label=above:${v_{t +1}}$] 
            (r3) 
            at (6, \lvl) {};
        
        \node 
            [] 
            (r3_next) 
            at (6 + 0.5, \lvl) {};
            
        \node 
            [dot, label=above:${v_{r -1}}$] 
            (r4) 
            at (9, \lvl) {};
            
        \node 
            []
            (r4_prev) 
            at (9 - 0.5, \lvl) {};
        
        \node 
            [hset, label=above:$v_{r}(\color{blue}{h_{i+1}}\color{black})$] 
            (r5) at    
            (11,    \lvl) {};
            
        \node 
            [] 
            (r5_next) 
            at    (11 + 0.5,    \lvl) {};

        \node (r6_prev) [] at (13 - 0.5, \lvl) {};
        \node (r6) [dot, label=above:$v_{n-2}$] at (13, \lvl) {};
        
        \node (r7) [dot, label=above:$v_n$] at (15, \lvl) {};
    }
    
    \vertices

    \draw [-{\tip}] (r2_prev.center) to (r2);
    \draw [-{\tip}] (r4_prev.center) to (r4);
    \draw [-{\tip}] (r6_prev.center) to (r6);
    \draw [-{\tip}] (r2) to (r3);
    \draw [-{\tip}] (r4) to (r5);
    
  \foreach \i in {0,...,7} {
    \foreach \j in {0,...,8} {
        \ifthenelse{\i > \j;}{\draw [-{\tip}, back] (r\i) to (r\j);}{}
        }
    };

    \draw [-{\tip} ] (r0) ->  (r1);
    \draw [- ] (r1) -> (r1_next.center);
    \draw [- ] (r3) -> (r3_next.center);
     \draw [- ] (r5) -> (r5_next.center);
     \draw [-{\tip} ] (r6) ->  (r7);
    
    \draw [-, dotted, anchor=north] 
        (r1_next.center) -- (r2_prev.center)
        (r3_next.center) to (r4_prev.center)
        (r5_next.center) -- (r6_prev.center)
        ;

\end{tikzpicture}
\vspace{-20pt}
\caption{
Directed unweighted graph $G$ with  $V = \{v_1, \ldots, v_n\}$ and $E = \{ (v_i, v_{i+1})\} \cup  \{(v_j, v_i) \mid  i<j \}$. Running BFS (even with truncation as in 
\eqref{overview:check}) 
to construct a path from left to right might iterate over all $O(n)$ backwards directed edges for each visited vertex until it finds the one edge going forward.}
\label{fig:overview-dijkstra-fails}
\end{figure}

To best explain how to generalize the approach from the previous paragraph to directed graphs, we first give an alternative (slightly more complicated) argument for the undirected case, which is easier to generalize.

The shortest paths for any segments $h_i \path h_{i+1}$ and $h_j \path h_{j+1}$ such that $j - i \geq 4$ cannot share any adjacent vertices: 
\begin{align*}
     \underbrace{\left( \bigcup_{w \in \pi^*(h_i, h_{i+1})} \mathcal{N}(w) \right)}_{=:P_i} \cap  \underbrace{\left( \bigcup_{w \in \pi^*(h_j, h_{j+1})} \mathcal{N}(w) \right)}_{=:P_j}  = \emptyset
\end{align*}
Here the union $P_i$ on the left (or right $P_j$) are all vertices that are neighbors of an $h_ih_{i+1}$-shortest path (or $h_jh_{j+1}$-shortest path).
If $P_i$ and $P_j$ were to share a vertex $v$, i.e.~$(w_1, v), (v, w_2) \in E, w_1 \in \pi^*(h_i, h_{i+1}), w_2 \in \pi^*(h_j, h_{j+1})$, then there would be a shorter $st$-path via $v$:
\begin{align*}
    \dist(s, t) \leq  \dist(s, w_1) \underbrace{+~~2~~+}_{w_1 \to v \to w_2} \dist(w_2, t) < \dist(s, h_{i+1}) + \dist(h_{i+1}, h_j) + \dist(h_{j}, t) = \dist(s, t)
\end{align*}
Thus they cannot share a vertex. 

These unions $P_i, P_j$ can be seen as a set of ``plausible'' vertices: They are exactly the vertices for which our BFS search checks if they are on the shortest path, i.e.~for any vertex in $P_i$, it is plausible that they could be on a $h_ih_{i+1}$-shortest path.

To extend the approach to directed weighted graphs, we must find a better notion of plausible that allows for a similar ``no-shortcut'' argument.
The idea is to perform the search of vertices on a smaller set of ``plausible'' vertices for a given segment $h_i \path h_{i+1}$. Specifically, consider any set $P_i$ of vertices such that 
\begin{align*}
    \mathcal{N}_{out}^{(d)}(h_i) \subseteq P_i \subseteq \mathcal{N}_{out}^{(2d)}(h_i) \text{~~~~and~~~~}\mathcal{N}_{in}^{(d)}(h_{i+1}) \subseteq P_i \subseteq \mathcal{N}_{in}^{(2d)}(h_{i + 1})
\end{align*} 
where $d = \dist(h_i, h_{i+1})$. Here $\mathcal{N}_{out}^{(d)}(h_i)$ are all vertices reachable from $h_i$ with distance at most $d$. In particular, set $P_i$ contains all vertices $v\in V$ for which 
$\dist_G(h_i,v)\le d$ and $\dist_G(v,h_{i+1}) \le d$, 
and all vertices $v \in P_i$ satisfy $\dist_G(h_i,v) \le 2d$ and $\dist_G(v,h_{i+1}) \le 2d$.

We can consider this a set of ``plausible'' vertices because $P_i$ contains all vertices on any $h_ih_{i+1}$-shortest path.

Now let us extend the ``no-shortcut'' idea for these sets $P_i$'s to show that there are no (or not too many) intersections between $P_i$ and $P_j$ for $i\neq j$. 
For simplicity, let us assume that all segments $h_i \path h_{i+1}$ of the $st$-shortest path have roughly the same length despite the graph being weighted. 
(I.e.~$\dist_G(h_i,h_{i+1}) \in [2^{\alpha - 1}, 2^{\alpha})$ for all $i$. We will later argue why we can assume this.)
Then any $P_i$ and $P_j$ with $j - i \geq 7$ cannot share any vertices: 
\begin{align*}
     P_i \cap  P_j  = \emptyset
\end{align*}
If they were to share a vertex $v$, then there would be a shorter $st$-path via $v$ (\cref{fig:overview}):
\begin{align*}
    \dist(s, v) + \dist(v, t) &< \dist(s, h_i) + \underbrace{2 \cdot 2^\alpha + 2 \cdot 2^\alpha}_{h_i \to v \to h_{j+1}}
    +  \dist(h_{j+1}, t) 
     = \dist(s, h_{i}) + 2^{\alpha+2} + \dist(h_{j+1}, t) 
     \\
     &\leq \dist(s, h_{i}) + 2^{\alpha-1} (j + 1 - i) + \dist(h_{j+1}, t) 
     \\
     &\leq \dist(s, h_{i}) + \dist(h_{i}, h_{i+1}) + \ldots + \dist(h_{j}, h_{j+1}) + \dist(h_{j}, t)
\end{align*}
\begin{figure} 
\begin{tikzpicture}
  [
      hset/.style={circle, fill=blue!50, minimum size=#1, inner sep=0pt, outer sep=0pt},
      hset/.default = 6pt,
      dot/.style = {circle, fill, minimum size=#1, inner sep=0pt, outer sep=0pt},
      dot/.default = 3pt,
      segment/.style={
        decorate, 
        decoration=snake,
        segment amplitude=.2mm,
        segment length=2.5mm,
        line after snake=10mm}, 
    segment2/.style={
        decorate, 
        decoration=snake,
        segment amplitude=.2mm,
        segment length=3.5mm,
        line after snake=0.5mm},
     wcat/.style={
         anchor=north}
    ],
    
    \def\lng{1}
    \def\lvl{0}
    \def\tip{Stealth[length=3mm, width=1.5mm]}
    \def \vertices{ 
        \node 
            (s) 
            [dot, label=above:$s$] 
            at (0, \lvl) {}; 
        \node 
            [hset, label=above:$h_1$] 
            (h1) 
            at (\lng, \lvl) {}; 
        \node 
            [] 
            (h1_next) 
            at (\lng * 1.5 , \lvl) {};
          
        \node 
            [] 
            (hi_prev) 
            at (3 - \lng *0.5,   \lvl) {};
        \node 
            [hset, label=above:$h_i$] 
            (hi) 
            at (3, \lvl) {};
    
        \node 
            [hset, label=above:$h_{i + 1}$] 
            (hi1) 
            at (5, \lvl) {};
        \node 
            [] 
            (hi_next) 
            at (7 ,  \lvl) {};
        
        \node 
            [dot, label=above:$v$] 
            (v) 
            at (8, \lvl + 2) {};
        
        \node [hset] (sgm1s) at     (7, \lvl) {};
        \node [hset] (sgm1e) at     (9, \lvl) {};

        \node [hset, label=above:$h_{j}$] (hj) at         (11,            \lvl) {};
        \node [hset, label=above:$h_{j + 1}$] (hj1) at    (13, \lvl) {};
        \node [] (hj1_next) at                             (13.5,   \lvl) {};

        \node [] (t_prev) at (15, \lvl) {};
        \node [dot, label=above:$t$] (t) at (16, \lvl) {};
        \node [] (d1) at    (5.5,     \lvl - 1.25) {};
        \node [] (d2) at    (12.5,    \lvl - 1.25) {};
    }
    
    \vertices
    
    \draw [-{\tip}, segment, color=orange, sloped, anchor=south, bend left=10] (hi) to node  {$< 2^{\alpha + 1}$}  (v);
    \draw [-{\tip}, segment, color=orange, sloped, anchor=south, bend left=10] (v)  to node {$< 2^{\alpha + 1}$} (hj1);

    \draw [-{\tip}, segment, wcat] 
    (hi)    -> node {$\geq 2^{\alpha - 1}$}       (hi1);
    \draw [-{\tip}, segment, wcat] 
    (hi1)    -> node {$\geq 2^{\alpha - 1}$}     (sgm1s);
    \draw [-, segment, dotted, wcat] 
    (sgm1s) -> node {$\ldots$}   (sgm1e);
    \draw [-{\tip}, segment, wcat] 
    (sgm1e) -> node {$\geq 2^{\alpha - 1}$}   (hj);
    \draw [-{\tip}, segment, wcat] 
    (hj)    -> node {$\geq 2^{\alpha - 1}$}     (hj1);

    \draw [-{\tip},segment] (s) ->  (h1);
    \draw [-,segment] (h1) -> (h1_next.center);
    \draw [-,segment] (hi_prev.center) -> (hi);
    \draw [-,segment] (hj1) -> (hj1_next.center);
    \draw [-{\tip},segment] (t_prev.center) -> (t);
        
    \draw [-, dotted, segment, anchor=north] 
        (h1_next.center) -- (hi_prev.center)
        (hj1_next.center) -- (t_prev.center) 
        ;
        
\end{tikzpicture}
\vspace{-30pt}
\caption{A possibility of a shortcut in graph $G$ between $h_i$ and $h_{j+1}$.}
\label{fig:overview}
\end{figure}
Hence, in total we iterate over at most $O(n)$ vertices if we iterate over all $P_i$ for all $i$.
We can reconstruct any $h_ih_{i+1}$-shortest path by iterating over each $P_i$ as follows: 
Sort $P_i$ based on their distance to $h_i$ and then iterate over $P_i$ to always find the next successor on the $h_ih_{i+1}$-shortest path via a distance comparison as in \eqref{overview:check}.\footnote{Here, we assume that we can compute the exact distance. This is true for small integer weighted graphs. We will later discuss how to handle real weighted graphs for which no exact distance can be maintained under the APSP conjecture \cite{AbboudW14}.}
By $\sum_i |P_i| = O(n)$ we need to query only $O(n)$ distances to reconstruct the $st$-shortest path.

The assumption on all $\dist(h_i, h_{i+1}) \in [2^{\alpha-1}, 2^\alpha)$ can be generalized by splitting segments in groups $S_{\alpha} = \{(h_i, h_{i+1}) \mid  \dist(h_i, h_{i+1}) \in [2^{\alpha-1}, 2^\alpha)\}$ for $\alpha = 1, \ldots \lceil \log nW \rceil$ and applying the ``no-shortcut'' argument to each group. 
This way we get $$
\sum_i P_i = \sum_\alpha \sum_{P_i \text{ belongs to }S_\alpha} |P_i| = \sum_\alpha O(n) = O(n \log (nW)). 
$$
So we increase the number of distance queries by at most an $O(\log nW)$ factor.

To complete the argument, we must construct the sets $P_i$ for all $i$.
This can be done by querying $2$-approximate $h$-hop bounded distances for all pairs in $S\times V$ and $V\times S$ (remember, $S = \{h_i \mid i\}$ where $h_i$'s lie on the shortest path).

Querying large batches of approximate pairwise distances can be done much more efficiently than querying exact distances for individual pairs, see e.g.~\cite{BrandN19} (\Cref{san-bn}). 
So the construction of the $P_i$ is only a small cost of our algorithm.

To summarize, the complexity of reconstructing an $st$-shortest path on directed weighted graphs is given by the following
\begin{itemize}[nosep]
    \item Query exact $h$-hop bounded pairwise distances between hitting set vertices (i.e.~pairs in $S\times S$) to construct graph $H$. 
    \item Run Dijkstra on $H$ to obtain segments $s=h_0\path h_1 \path ... \path h_k = t$ in $\tilde O(n^2/h^2)$ time.
    \item Query $2$-approximate $h$-hop bounded distances for pairs $S\times V$ and $V\times S$ to construct plausible sets of vertices $P_i$.
    \item Reconstruct an $h_ih_{i+1}$-shortest path for each $i$ by iterating over $P_i$. This needs $O(n\log nW)$ exact $h$-hop bounded distance queries.
\end{itemize}
This leads to the following result:
\simplePathConstruction*
Observe that our path reconstruction just needs access to a dynamic algorithm that can maintain approximate and exact $Wh$-bounded distance oracles. 
The complexities of our dynamic algorithms stated in \Cref{sec:intro:results} are obtained by using the dynamic distance algorithms by \cite{Sankowski05,BrandN19}.
If in the future faster dynamic distance algorithms are constructed, then our path reporting data structures become faster as well. 

\paragraph{Real Weighted Graphs}

So far, we assumed that we have access to an exact distance oracle to reconstruct the shortest path for each $h_i\path h_{i+1}$ segment.
For real weighted graphs, however, one cannot maintain exact distances in subquadratic time under the APSP conjecture \cite{AbboudW14}.
To still be able to reconstruct an approximate shortest path on real weighted graphs we use the integer weight rounding technique from \cite{Zwick02}.
Note that by \Cref{AB-min-dist} we can compute approximate distances on a real weighted graph $G$ by computing distances on graphs $G_j$ for $j=1, \ldots ,\log_2 \lceil nW \rceil$, each with small integer weights.
In particular, for any segment $h_i \path h_{i+1}$ there is some $j$ where the exact $h_ih_{i+1}$-shortest path on graph $G_j(h_i,h_{i+1})$ corresponds to an approximate $h_ih_{i+1}$-shortest path on $G$.
The graph $G_j(h_i,h_{i+1})$ has small integer weights so we can compute the exact distances on this graph.
So we can reconstruct for any segment $h_i\path h_{i+1}$ the exact shortest path on $G_j(h_i,h_{i+1})$ and thus an approximate $h_ih_{i+1}$-shortest path on $G$.

\newcommand{\thmstatement}[4]{
    Suppose, there exist #3-bounded distance oracles #1  with the corresponding time-complexities (as in \Cref{oracle-notation}) for any integer weighted #2 graph $X$ with $|V| = n$.  Then there exists an algorithm that supports the following operations on a #2 graph $G = (V, E, w), |V| = n$ with real weights from $[1, W]$.
}
\newcommand{\thmps}{
    The dynamic algorithm is randomized and correct w.h.p.~with one-sided error and works against an adaptive adversary.
}

\newcommand{\thmprep}{
    $preprocess()$ -- preprocesses $G$ in time
}

\newcommand{\thmupd}{
    $update(e, w)$ -- updates an edge $e$ with weight $w$ in time
}

\newcommand{\thmquery}[1]{
    $query(s, t)$ -- for any pair of vertices $s, t$ returns #1 path in  time
}

\section{Path Reporting on Weighted Graphs}
\label{sec:weighted}

In this section we prove combinatorial blackbox reductions that allow us to efficiently reconstruct an $st$-shortest path in graph $G$ when given access to a distance oracle for $G$.
We use these reductions in \Cref{sec:fullalgo} together with the dynamic distance algorithms by \cite{Sankowski05,BrandN19} to maintain the $st$-shortest path against an adaptive adversary.

As an example, here we prove reductions such as \Cref{thm:aprx-dir-gen} which construct approximate shortest paths in directed graphs. 
Later results \Cref{thm:aprx-undir-gen,thm:exact-dir-gen} work for exact shortest paths, or undirected graphs.

\begin{restatable}[directed, approximate]{theorem}{blackbox}
    \label{thm:aprx-dir-gen}
    \thmstatement{$ \mathcal{O}_{X}, \mathcal{O}^{(1 + \epsilon)}_{X}$ and $\mathcal{O}^{(2)}_{X}$}{directed}{$4 n^\s / \epsilon$}{real}
     
    \begin{itemize}[nosep]  
        \item \thmprep $\tilde O\left(\log W \cdot (P_{2} + P_{1+\epsilon} + P)\right)$
        \item \thmupd $$
        \tilde O\left(\log W \cdot 
        \left(
            U_{2} + U_{1+\epsilon} + U
            +
            Q_{1+\epsilon}(n^{1-\s}, n^{1-\s}) 
            + 
            Q_{2}(n^{1-\s}, n) 
        \right)
        \right)
        $$
        
        \item  \thmquery{an $(1+\epsilon)$-approximate $st$-shortest}
        $$
        \tilde O\left( \log W \cdot 
            \right(
                Q_{1+\epsilon}(1, n^{1-\s}) 
                + 
                Q_{2}(1, n) 
                +
                n^{2 - \s} 
                +
                n \cdot Q(1, 1)
            \left)
        \right)
        $$
    \end{itemize}
    \thmps
\end{restatable}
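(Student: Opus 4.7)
The plan is to lift the integer-weighted reduction of \Cref{thm:intro:reduction} to the real-weighted setting by running $O(\log nW)$ parallel copies of the oracle machinery, one per scale of Zwick's rounding. I would fix the oracle bound $h = 4n^\s/\epsilon$ and, for every $j \in \{0, 1, \ldots, \lceil \log_2 nW \rceil\}$, maintain the $(A, B_j)$-rounded graph $G_j$ with $A = 2n^\s/\epsilon$ and $B_j = 2^j$ so that \Cref{AB-min-dist} applies: on every $G_j$ I instantiate the three oracles $\mathcal{O}_{G_j}$, $\mathcal{O}^{(1+\epsilon)}_{G_j}$, $\mathcal{O}^{(2)}_{G_j}$ promised in the hypothesis, with the $\log W$ factor in the preprocessing and update bounds coming from this replication.

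Next I would install the hitting-set layer that drives the reduction as in \Cref{sec:overview:directed}. Sample $R \subseteq V$ of size $\tilde\Theta(n^{1-\s})$ fresh after every update (this is what tames the adaptive adversary: the $R$ used at the next query is independent of what was revealed at the previous one). Using $\mathcal{O}^{(1+\epsilon)}_{G_j}.queryAll(R,R)$ on every scale, I build an auxiliary complete graph $H$ on $R$ whose edge $(u,v)$ carries $\min_j (B_j/A)$ times the approximate distance on $G_j$; by \Cref{fact:hitting}, once $s,t$ are appended, every $st$-shortest path in $G$ is w.h.p.~realised in $H$. Simultaneously I batch-compute $2$-approximate distances from $R$ to $V$ and from $V$ to $R$ on every $G_j$ via $\mathcal{O}^{(2)}_{G_j}.queryAll$, which later feeds the plausible-set construction and contributes the $Q_2(n^{1-\s}, n)$ update term.

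A query $(s,t)$ proceeds in three stages. First, append $s,t$ to $H$ by querying $(1+\epsilon)$-approximate $s \to R$ and $R \to t$ distances ($Q_{1+\epsilon}(1,n^{1-\s})$ per scale) and fetch the $2$-approximate rows $s \to V$ and $V \to t$ ($Q_2(1,n)$ per scale). Run Dijkstra on $H$ in $\tilde O(n^{2-\s})$ time to obtain segments $s = h_0 \path h_1 \path \cdots \path h_k = t$ and, for each segment, identify the scale $j_i$ whose rounded graph $G_{j_i}$ witnesses its $(1+\epsilon)$-approximate distance (via \Cref{AB-min-dist}). Second, for each segment build the plausible set $P_i \subseteq V$ on $G_{j_i}$ from the precomputed $2$-approximate $h_i \to V$ and $V \to h_{i+1}$ entries, sort it by approximate distance from $h_i$, and walk it while invoking $\mathcal{O}_{G_{j_i}}.query$ to determine exactly which candidate extends the current prefix of a shortest path in $G_{j_i}$. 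Concatenating the per-segment paths returns a $(1+\epsilon)$-approximate $st$-shortest path in $G$.

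The main obstacle is bounding the total exact-query work in the reconstruction stage by $\tilde O(n)$, now that different segments may live on different rounded graphs $G_{j_i}$. The plan is the multi-scale no-shortcut argument of \Cref{sec:overview:directed}: bucket the segments by dyadic length $S_\alpha = \{(h_i,h_{i+1}) : \dist_G(h_i,h_{i+1}) \in [2^{\alpha-1},2^\alpha)\}$, argue within each bucket that $P_i$ and $P_j$ are disjoint once $|i - j|$ exceeds a small constant (otherwise, as in the shortcut picture, splicing $h_i \to v \to h_{j+1}$ yields an $st$-path strictly shorter than $\dist_G(s,t)$), and sum over the $O(\log nW)$ buckets to obtain $\sum_i |P_i| = O(n \log nW)$. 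This delivers the $n \cdot Q(1,1)$ query term with the $\log$ absorbed in $\tilde O(\cdot)$, completing the accounting.
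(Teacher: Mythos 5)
Your proposal mirrors the paper's reduction in essentially every respect --- the multi-scale rounding, the hitting-set graph $H$ with $R$ resampled at each update to defeat the adaptive adversary, the plausible-set pre-filter built from $\mathcal{O}^{(2)}$, and the dyadic no-shortcut bound $\sum_i |P_i| = O(n\log nW)$ --- and the complexity accounting matches the theorem term by term. However, one step fails as written: you sort the plausible set $P_i$ by the precomputed $2$-approximate distance from $h_i$, whereas \Cref{alg:ssp-dir} sorts by the \emph{exact} $\mathcal{O}_{G_x}.query(h_i,\cdot)$ on the selected rounded graph. The correctness of the single-pass walk relies crucially on the queue order agreeing with exact distances: after the walk commits to $v_{last}$, any vertex $w$ that could extend the prefix on a shortest path satisfies $\dist_{G_x}(h_i,w) > \dist_{G_x}(h_i,v_{last})$ and therefore must not yet have been passed. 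With a $2$-approximate sort key this monotonicity fails --- e.g.\ $\dist(h_i,v_{last})=10$ with estimate $20$ while $\dist(h_i,w)=11$ with estimate $11$ --- so $w$ can precede $v_{last}$ in the queue, the one-pass walk skips it permanently, and the reconstruction may terminate before reaching $h_{i+1}$ with no continuation left ahead. The fix is exactly what the paper does: pay $O(|P_i|)$ calls to $\mathcal{O}_{G_x}.query(h_i,\cdot)$ to get exact sort keys; this is already absorbed into the $n\cdot Q(1,1)$ query term via $\sum_i |P_i| = O(n\log nW)$, so the stated complexity is unaffected.
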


The organization of this section is as follows.
We first define certain auxiliary graphs, used by our reduction, in \Cref{sec:weighted:auxiliary}.
The definition and notation defined there are used throughout this section.
In particular, it defines random graphs based on hitting set arguments that are used to split any $st$-shortest path into shorter segments $s=h_1\path h_2 \path ... h_k = t$.
Then in \Cref{sec:segments:directed}, we describe how to find the $h_i h_{i+1}$-shortest path for any one such segment, when the graph is directed.
\Cref{sec:segments:undirected} does the same, but for undirected graphs.
At last, \Cref{sec:blackbox} combines these tools to prove \Cref{thm:aprx-dir-gen} and its variants.

\subsection{Auxiliary graphs}
\label{sec:weighted:auxiliary}

Throughout this section, we assume $G=(V,E)$ is the original input graph.
Let $\epsilon \ge 0$ be an accuracy parameter and $0<\s<1$ be a hop-parameter, where $n^\s$ will be used for our hop bounds.
We let $G_x$ be the $(A,B_x)$-rounded version of $G$, as in \Cref{AB-min-dist}, for $x=0,1,...,O(\log(nW))$.

Our reduction constructs an auxiliary graph $H$ on $\tilde{O}(n^{1-\s})$ vertices $V_H \subset V$ with the property $\dist_H(u,v) \approx \dist_G(u,v)$ for all $u,v\in V_H$.
The exact definition of $H$ is given in \Cref{construct-h}.

\begin{definition}\label{construct-h}
    Given graph $G=(V,E)$, accuracy parameter $\epsilon\ge0$, hop parameter $0<\s<1$,
    for $x=0,1,...,O(\log(nW))$ let $G_x$ be the $(A,B_x)$-rounded graphs as in \Cref{AB-min-dist}.
    Let $R\subset V_H \subset V$ where $R$ is a uniformly at random sampled set of $\tilde{\Theta}(n^{1-\s})$ vertices.
    Given $(1+\epsilon)$-approximate 
    $A$-bounded distance estimates $\Delta_x \in \R^{V_H\times V_H}$ on each $G_x$ for the pairs $V_H \times V_H$, define $H$ as follows: 
    
    $H=(V_H,V_H \times V_H)$ with edge weights\footnote{For simplicity, assume we remove all edges with $\len_H(u,v) = \infty$ so all edge weights are finite.} $\len_H(u,v) = \min_x B_x/A \cdot \Delta_x(u,v)$ for each $u,v\in V_H$.

    Note that when $\epsilon=0$ and $G$ has integer weights from $[1, W]$, there is only one copy $G_0$ of $G$ and the edge weights in the corresponding $H$ are exact $n^\s$-bounded distances in $G$ between any $h_i, h_j \in V_H$. 
\end{definition}
The following \Cref{hi-hj-aprx} states that graph $H$ indeed approximates the distances in $G$.
Since we use common techniques such as hitting-sets and integer weight rounding to construct $H$, 
we will defer the proof to \Cref{appendix}.

\begin{restatable}{lemma}{hihjaprx}\label{hi-hj-aprx}
    For any $u,v\in V_H$,
    w.h.p.~$   \dist_G(u, v) \le \dist_H(u, v) \leq (1 + O(\epsilon) ) \dist_G(u, v) $.\\
    Equality holds in case of $\epsilon=0$.
\end{restatable}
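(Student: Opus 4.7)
\medskip

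\noindent\textbf{Proof proposal for \Cref{hi-hj-aprx}.}
The plan is to prove the two inequalities separately. The lower bound follows from an edge-by-edge argument, while the upper bound uses the standard hitting-set decomposition combined with \Cref{AB-min-dist}.

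For the lower bound $\dist_G(u,v) \le \dist_H(u,v)$, I would argue that every edge weight in $H$ is at least the true distance in $G$ between its endpoints. Fix any $p,q \in V_H$. Since $\Delta_x(p,q)$ is a $(1+\epsilon)$-approximate $A$-bounded distance in $G_x$, it satisfies $\Delta_x(p,q) \ge \dist_{G_x}(p,q)$ (even in the $\infty$ case, treating $\infty \ge \dist_{G_x}(p,q)$ trivially when $\dist_{G_x}(p,q) > A$ — in which case the corresponding $x$ is simply not the minimizer). By the first inequality of \Cref{AB-dist}, $(B_x/A)\dist_{G_x}(p,q) \ge \dist_G(p,q)$ for every $x$. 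Hence $\len_H(p,q) = \min_x (B_x/A)\Delta_x(p,q) \ge \dist_G(p,q)$. Now for any path $u = p_0, p_1, \ldots, p_k = v$ in $H$, the triangle inequality in $G$ gives $\dist_G(u,v) \le \sum_i \dist_G(p_i, p_{i+1}) \le \sum_i \len_H(p_i, p_{i+1})$, and taking the minimum over paths yields $\dist_G(u,v) \le \dist_H(u,v)$.

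For the upper bound $\dist_H(u,v) \le (1+O(\epsilon)) \dist_G(u,v)$, I would use a standard hitting-set argument on the uniformly random subset $R \subseteq V_H$ of size $\tilde\Theta(n^{1-\s})$. By a union bound over all $O(n^2)$ pairs and over hop-positions, w.h.p.\ every shortest path of more than $n^\s$ hops has a vertex of $R$ in every window of $n^\s$ consecutive hops. Thus, for any fixed $u,v \in V_H$, some $uv$-shortest path in $G$ decomposes as $u = p_0 \path p_1 \path \cdots \path p_k = v$ with each $p_i \in V_H$ (using $u, v \in V_H$ for the endpoints and $R \subseteq V_H$ for the interior) and each segment using at most $n^\s$ hops. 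For each such segment, \Cref{AB-min-dist} provides an index $x_i$ with $(B_{x_i}/A)\dist_{G_{x_i}}(p_i,p_{i+1}) \le (1+\epsilon)\dist_G(p_i,p_{i+1})$. For that $x_i$, the quantity on the left is at most $B_{x_i}/A \cdot \dist_G(p_i,p_{i+1}) \cdot A/B_{x_i} \cdot (1+\epsilon) \le (1+\epsilon)\dist_G(p_i,p_{i+1})$, which in particular means $\dist_{G_{x_i}}(p_i,p_{i+1}) \le A(1+\epsilon)\dist_G(p_i,p_{i+1})/B_{x_i} \le A$ for a suitable choice of $x_i$ (picking $B_{x_i}$ just above $\dist_G(p_i,p_{i+1})$, which is possible because $B_x$ ranges over all powers of $2$ up to $\Theta(nW)$). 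Thus $\Delta_{x_i}(p_i,p_{i+1})$ is a finite $(1+\epsilon)$-approximation, giving
\[
\len_H(p_i,p_{i+1}) \le (B_{x_i}/A)\Delta_{x_i}(p_i,p_{i+1}) \le (1+\epsilon)^2\dist_G(p_i,p_{i+1}).
\]
Summing over $i$ yields $\dist_H(u,v) \le (1+\epsilon)^2 \dist_G(u,v) = (1+O(\epsilon))\dist_G(u,v)$. When $\epsilon = 0$, there is only one graph $G_0 = G$ with $A = B_0 = Wn^\s$, and $\Delta_0$ is exact, so both chains collapse to equalities.

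The main obstacle is the bookkeeping around the $A$-bound: I must verify that for each segment $p_i \path p_{i+1}$ there is an index $x$ in the discretized range where the oracle returns a non-trivial value rather than $\infty$. This is handled by selecting $B_x$ to be the smallest power of $2$ exceeding $\dist_G(p_i,p_{i+1})$, so that $\dist_{G_x}(p_i,p_{i+1}) \le A + n^\s \le A(1+\epsilon/2)$ by the integer rounding; one may need a minor adjustment of constants (e.g.\ replacing $A$ by $A(1+\epsilon)$ in the oracle's bound) that is absorbed into the $O(\epsilon)$ factor. Everything else is a clean application of \Cref{AB-dist,AB-min-dist} and union bounds.
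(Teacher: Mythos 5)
Your proposal is correct and takes essentially the same route as the paper: both establish the lower bound by noting that each $H$-edge weight upper-bounds the corresponding $G$-distance (via the rounding inequality and the fact that the approximate oracle never underestimates) and then apply the triangle inequality, and both establish the upper bound by decomposing a $uv$-shortest path into $\le n^\s$-hop segments whose endpoints lie in $V_H$ via the hitting-set argument and then applying \Cref{AB-min-dist} per segment. The paper packages the per-edge comparison as the intermediate \Cref{len-H-dist-G}; you unfold it inline, and you are correctly more explicit about why the minimizing index $x$ yields a finite (non-$\infty$) oracle value — a subtlety the paper absorbs into the citation of \Cref{AB-min-dist} and the $2A$ slack in the oracle's bound.
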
  

During the construction of $H$ as in \Cref{construct-h}, we can store which distance estimate $\Delta_x(u,v)$ was used for any $u,v\in V_H$. This implies an assignment of some $(A,B_x)$-rounded graph $G_x$ to each edge.
\begin{definition}\label{well-aprx}
For $H$ as in \Cref{construct-h}, let $G_1,G_2,...$ be the respective $(A,B_x)$-rounded graphs.
For each edge $(u, v)$ of $H$ let
\begin{align*}
    x^* &= \arg \min_{x} \frac{B_x}{A} \Delta_x(u,v). %
\end{align*}
We define unique $G_*(u, v) := G_{x^*}$ (breaking ties arbitrarily) and say that it \textbf{well-approximates} the edge $(u,v)$. %
\end{definition}

As outlined in \Cref{sec:overview}, our path reconstruction is based on ``no-shortcut'' arguments.
To bound the complexity, we must bound how often we look at any vertex. We will argue that if we look at a vertex too often, then there must have been a shortcut contradicting the shortest path.
For this type of argument we need lower bounds on the distances in $G$.
To derive these, we need the following definition.

\begin{definition}\label{def:weight_category}
Given a $st$-shortest path $\pi_G=(s=h_1,h_2,...,h_k=t)$ in $H$, we group the edges $(h_i,h_{i+1})$ for $i=1,...,k-1$ into the following weight-categories $S_\alpha$ for $0\le\alpha \le \log nW$.
\begin{align*}
    S_\alpha = \left[ \sigma^{(\alpha)}_1, \sigma^{(\alpha)}_2, \ldots \sigma^{(\alpha)}_l \right] = \left[ (h_i, h_{i+1}) \mid \len_H(h_i, h_{i+1}) \in [2^{\alpha-1}, 2^{\alpha}) \right]
\end{align*}

The order of edges in the categories  matches the order in the $st$-shortest path: 

if edge $\sigma^{(\alpha)}_a = (h_i, h_{i+1})$, $\sigma^{(\alpha)}_b = (h_j, h_{j+1})$ then $i < j$ if and only if $a < b$.
\end{definition}

We can now state the required lower bounds on $\dist_G(u,v)$.

\begin{lemma} \label{hw-lemma}
Given an $st$-shortest path in $H$ and the split of its edges into weight categories as in \Cref{def:weight_category},
let $\sigma^{(\alpha)}_a = (h_i, h_{i + 1})$ and 
$\sigma^{(\alpha)}_b = (h_j, h_{j+1})$  ($a < b$) 
be edges from the same weight category $S_\alpha$.
Then the following holds:
    \begin{align*}
        \dist_G(h_{i}, h_{j + 1}) \geq 2^{\alpha-2}(b - a)
    \end{align*}
\end{lemma}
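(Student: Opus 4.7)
The plan is a three-step reduction: first bound the length of a distinguished sub-path of the $st$-shortest path in $H$, then transfer that bound to $\dist_H(h_i, h_{j+1})$ via sub-path optimality, and finally transfer it to $\dist_G(h_i, h_{j+1})$ using \Cref{hi-hj-aprx}.

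First, I would use the indexing convention from \Cref{def:weight_category}: the ordering of edges inside $S_\alpha$ matches their order of appearance along the fixed $st$-shortest path in $H$. Since $\sigma^{(\alpha)}_a = (h_i, h_{i+1})$ and $\sigma^{(\alpha)}_b = (h_j, h_{j+1})$ with $a<b$, every edge $\sigma^{(\alpha)}_c$ with $a \le c \le b$ lies on the sub-path of the $st$-shortest path going from $h_i$ to $h_{j+1}$. That produces $(b-a+1)$ distinct edges on this sub-path, each of $H$-weight at least $2^{\alpha-1}$ by the definition of the weight category. Summing these lower bounds, the $H$-length of this sub-path is at least $(b-a+1)\cdot 2^{\alpha-1}$.

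Next, I would invoke sub-path optimality: any contiguous sub-path of a shortest path is itself a shortest path between its endpoints. Hence the length of this sub-path equals $\dist_H(h_i, h_{j+1})$, giving $\dist_H(h_i, h_{j+1}) \geq (b-a+1)\cdot 2^{\alpha-1}$. Finally, I would apply \Cref{hi-hj-aprx} to the pair $h_i, h_{j+1} \in V_H$ to turn the $H$-bound into a $G$-bound: that lemma yields $\dist_G(h_i, h_{j+1}) \ge \dist_H(h_i, h_{j+1})/(1+O(\epsilon))$, with exact equality when $\epsilon = 0$. Provided $\epsilon$ is small enough that $1/(1+O(\epsilon)) \ge 1/2$ (which the paper always assumes), this gives $\dist_G(h_i, h_{j+1}) \ge 2^{\alpha-2}(b-a+1) \ge 2^{\alpha-2}(b-a)$, matching the stated bound.

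I do not expect any real conceptual obstacle; the proof is essentially counting plus an invocation of the approximation guarantee for $H$. The only step that requires care is the factor of $2$ in the exponent: the counting argument naturally delivers $2^{\alpha-1}$, and the halving to $2^{\alpha-2}$ is precisely the slack that absorbs the multiplicative $(1+O(\epsilon))$ loss in \Cref{hi-hj-aprx}. Consequently, the statement is robust to the choice of approximation factor as long as $\epsilon$ stays below a small constant, which aligns with how $H$ and the rounded graphs $G_x$ are used elsewhere in the section.
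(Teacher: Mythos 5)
Your proof is correct and follows essentially the same route as the paper: count the $S_\alpha$-edges lying on the $h_i \path h_{j+1}$ sub-path, each of $H$-weight at least $2^{\alpha-1}$, and then transfer the $\dist_H$ bound to a $\dist_G$ bound via \Cref{hi-hj-aprx}, absorbing the $(1+O(\epsilon))$ loss in the extra factor of $2$. The only (harmless) difference is that you count all $b-a+1$ edges $\sigma^{(\alpha)}_a, \ldots, \sigma^{(\alpha)}_b$, whereas the paper counts only the $b-a$ edges up to $\sigma^{(\alpha)}_{b-1}$; both give the claimed bound.
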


\begin{proof}
    The shortest path in $H$ from $h_i$ to $h_j$ contains all edges $\sigma^{(\alpha)}_{a}, \sigma^{(\alpha)}_{a+1}, \ldots , \sigma^{(\alpha)}_{b - 1}$ of the length at least $2^{\alpha - 1}$. Hence, $\dist_H(h_i, h_{j + 1}) \geq 2^{\alpha-1}(b - a)$. 
    
    Combining the inequality above with \Cref{hi-hj-aprx} we get a needed result:
    \begin{align*}
         (1 + \epsilon ) \dist_G(h_i, h_{j+1}) &\geq \dist_H(h_i, h_{j+1}) \geq 2^{\alpha-1}(b - a)\Rightarrow \\
         \dist_G(h_i, h_{j+1}) &\geq 2^{\alpha-2}(b - a)
    \end{align*}
\end{proof}

\subsection{Reconstructing Path Segments on Directed Graphs}%
\label{sec:segments:directed}

Given $H$ as in \Cref{construct-h}, and a shortest path $(h_1,h_2, \ldots ,h_k)$ in $H$, our goal 
is to recover an (approximately) $h_1h_k$-shortest path in $G$.
Note that we have $\dist_G(h_1,h_k) \le (1+\epsilon)\dist_H(h_1,h_k)$ (\Cref{hi-hj-aprx}), 
and that the edge weight $(h_i,h_{i+1})$ in $H$ corresponds to the (approximate) length of an $h_ih_{i+1}$-path in some $(A,B_x)$-rounded graph $G_x$ by \Cref{construct-h}. 
So for each $i=1,2,3...$, our task is to reconstruct the $h_i h_{i+1}$-shortest path in the graph $G_x$ that well approximates edge $(h_i,h_{i+1})$. 
To bound the complexity when constructing an $h_ih_{i+1}$-shortest path in $G_x$, we restrict our search space onto a smaller set of ``plausible'' vertices.

\begin{definition}\label{def:directed_plausible}
Given an edge $\sigma = (h_i, h_{i+1})$ in $H$ with finite length, well approximated by $G_x := G_*(h_i, h_{i+1})$ (\Cref{well-aprx}), $2A$-bounded $\mathcal{O}^{(2)}_{G_x}$ and $\mathcal{O}_{G_x}$.
We define the set of \textbf{plausible} vertices for edge $(h_i, h_{i+1})$: 
\begin{align*}
    P_{i} = \left\{ v \in V \mid  \mathcal{O}^{(2)}_{G_x}.query(h_i, v),  \mathcal{O}^{(2)}_{G_x}.query(v, h_{i+1}) \le  2 \cdot \mathcal{O}_{G_x}(h_i, h_{i+1}) \right\}
\end{align*}
Note that $\mathcal{O}_{G_x}.query(h_i, h_{i+1})<\infty$. This is because $\len_H(h_i, h_{i+1})<\infty$ and this length came from some $(1+\epsilon)$-approximate $A$-bounded distance estimate on $G_x$, so $\dist_{G_x}(h_i,h_{i+1}) \le (1+\epsilon)A < 2A$.
In particular, $\mathcal{O}^{(2)}_{G_x}.query(h_i, v)$, $\mathcal{O}^{(2)}_{G_x}.query(v, h_{i+1}) <\infty$ are for all $v\in P_i$.
\end{definition}

Our main result of this subsection is the following \Cref{thm:ssp-dir}, which states that we can reconstruct the $h_ih_{i+1}$-shortest path in $G_x$, when given the set of plausible vertices $P_i$. We later prove in \Cref{not-too-many-total} that $|P_i|$ is small on average, which then implies that our algorithm is efficient.

\begin{theorem} \label{thm:ssp-dir} 
Given an edge $(h_i, h_{i+1})$, the $(A, B_x)$-rounded version $G_x$ well-approximating the edge, the set of plausible vertices $P_{i}$ for the edge, and $\mathcal{O}_{G_x}$ is an exact $2A$-bounded distance oracle for $G_x$.

Then \Cref{alg:ssp-dir} recovers an $(h_i, h_{i+1})$-shortest path in $G_x$ in $O(|P_{i}|)$ calls to $\mathcal{O}_{G_x}.query(*, *)$.
\end{theorem}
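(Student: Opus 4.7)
The plan is to front-load all oracle use: for every $v\in P_i$, query $\mathcal{O}_{G_x}.query(h_i,v)$ and $\mathcal{O}_{G_x}.query(v,h_{i+1})$, plus one extra call for $D:=\mathcal{O}_{G_x}.query(h_i,h_{i+1})$. This totals $2|P_i|+1=O(|P_i|)$ oracle invocations, and all queried distances stay within the $2A$-bound because $\dist_{G_x}(h_i,h_{i+1})\le A$ (since $\len_H(h_i,h_{i+1})<\infty$ was derived from an $A$-bounded estimate on $G_x$) and every $v\in P_i$ lies at $G_x$-distance at most $2D\le 2A$ from both endpoints. The first thing I would verify is that every vertex on every $h_ih_{i+1}$-shortest path of $G_x$ actually lies in $P_i$: such a vertex $v$ satisfies $\dist_{G_x}(h_i,v),\dist_{G_x}(v,h_{i+1})\le D$, so the $2$-approximate oracle returns at most $2D$ for both, matching the defining inequality of $P_i$.

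With the exact distances in hand, I would form $\hat P_i:=\{v\in P_i\cup\{h_i,h_{i+1}\}\mid\dist_{G_x}(h_i,v)+\dist_{G_x}(v,h_{i+1})=D\}$, which is exactly the set of vertices on some shortest path. Sort $\hat P_i$ in ascending order of $\dist_{G_x}(h_i,\cdot)$, obtaining $h_i=v_0,v_1,\ldots,v_m=h_{i+1}$ (ties broken arbitrarily). Then walk through the list once, maintaining the current last vertex $u$ of a growing path (initially $u=h_i$): append $v_k$ and set $u\leftarrow v_k$ whenever $(u,v_k)$ is an edge of $G_x$ satisfying $\len_{G_x}(u,v_k)=\dist_{G_x}(h_i,v_k)-\dist_{G_x}(h_i,u)$, and otherwise skip $v_k$. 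Every such check reads $G_x$ directly and incurs no oracle call, so the oracle budget remains $O(|P_i|)$ and the total number of acceptance tests is at most $|\hat P_i|\le|P_i|$.

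For correctness I would establish the invariant that after each acceptance $u\in\hat P_i$, so there must exist an out-neighbor $w\in\hat P_i$ of $u$ with $\dist_{G_x}(h_i,w)=\dist_{G_x}(h_i,u)+\len_{G_x}(u,w)$. Because $\dist_{G_x}(h_i,w)>\dist_{G_x}(h_i,u)$, this $w$ appears strictly later than $u$ in the sorted order, so the iteration reaches $w$ before terminating and accepts either $w$ or an equally valid earlier alternative. Skipped vertices do no harm: they simply fail the adjacency/length test for the current $u$ and never re-enter the picture. Hence the scan ends at $h_{i+1}$, producing a genuine shortest $h_ih_{i+1}$-path in $G_x$.

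The main obstacle I anticipate is making this invariant argument airtight across arbitrary tie-breaking and, in particular, ruling out the scenario where an early commitment strands the algorithm in a dead end. The membership $u\in\hat P_i$ after each acceptance guarantees a full shortest-path suffix, so the induction does go through, but phrasing it cleanly as a single pass over the sorted scan is where most of the care will be needed; the remaining steps are routine bookkeeping that cleanly separates graph-access operations from oracle queries.
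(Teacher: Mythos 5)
Your proposal is essentially the same algorithm and argument as the paper's: sort by distance from $h_i$, advance a pointer $v_{last}$, and accept the next vertex that extends a shortest prefix. Your check $\len_{G_x}(u,v_k)=\dist_{G_x}(h_i,v_k)-\dist_{G_x}(h_i,u)$ is algebraically equivalent to the paper's check $\len_{G_x}(v_{last},v)+\dist_{G_x}(v,h_{i+1})=\dist_{G_x}(v_{last},h_{i+1})$ once both endpoints are restricted to $\hat P_i$, and your ``front-loading'' of oracle queries gives the same $O(|P_i|)$ budget, so the structure matches.

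One quibble worth fixing: your claim that ``all queried distances stay within the $2A$-bound'' rests on the assertion $\dist_{G_x}(h_i,h_{i+1})\le A$, but what \Cref{construct-h} and \Cref{def:directed_plausible} actually give is $\dist_{G_x}(h_i,h_{i+1})\le(1+\epsilon)A$, so a vertex of $P_i$ can sit at $G_x$-distance up to $2(1+\epsilon)A>2A$ from $h_i$ or $h_{i+1}$. The $2A$-bounded oracle may therefore return $\infty$ on some of your $2|P_i|$ pre-queries. This doesn't break your argument --- the paper makes the same observation and relies on the fact that every vertex \emph{on} a shortest $h_ih_{i+1}$-path has distance at most $\dist_{G_x}(h_i,h_{i+1})<2A$ to both endpoints, so those queries are exact, and any $\infty$ response simply disqualifies the vertex from $\hat P_i$ (equivalently, fails the paper's equality test). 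But as written your proof claims a stronger statement than is true, so you should replace ``all queried distances are within $2A$'' with ``queries for vertices on a shortest subpath are within $2A$, and $\infty$ responses are harmlessly filtered out by the $\hat P_i$ membership test.''
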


\begin{algorithm2e}[t!] 
\caption{Reporting a shortest $h_i, h_{i+1}$ path in $(A,B_x)$-rounded $G_x = G_*(h_i, h_{i+1})$ for the directed case} 
\label{alg:ssp-dir}
\SetKwProg{Proc}{procedure}{}{}
\Proc{$\textsc{ShortestSubpath}(G_x, (h_i, h_{i+1}), P_{i})$}{
Let $\mathcal{O}_{G_x}$ be an exact $2A$-bounded oracle. \\
(By definition of $P_i$, $\dist_{G_x}(h_i, v)$ and $\dist_{G_x}(v, h_{i+1}) \le 4A$ for each $v\in P_i$) \\
${queue} \leftarrow $ sorted $v \in  P_{i}$ by $\mathcal{O}_{G_x}.query(h_i, v)$ in ascending order \label{algo:line:queue}\\
$v_{last} \gets h_i$ \\
$\pi \gets [~]$\\
\For{$v$ in $queue$ \label{alg:line:while}}{
    \uIf{$\len_{G_x}(v_{last}, v) + \mathcal{O}_{G_x}.query(v, h_{i+1}) = \mathcal{O}_{G_x}.query(v_{last}, h_{i+1})$ \label{algo:line:check}}{
        $\pi \gets \pi || v$ \label{algo:line:add}\\
        $v_{last} \gets v$
    }
    \Else{continue}
    }
\Return $\pi$
}
\end{algorithm2e}

To prove correctness of \Cref{thm:ssp-dir}, we must first prove that we can indeed restrict the search space onto the set of plausible vertices $P_i$.

\begin{lemma} 
    \label{on-sp-are-plausible} If $G_x$ well-approximates the edge $(h_i, h_{i+1})$ then all the vertices from the $h_i h_{i+1}$-shortest path in $G_x$ are plausible, i.e.~an element of $P_i$.
\end{lemma}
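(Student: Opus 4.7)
The plan is to unwind the definition of $P_i$ against the subpath property of shortest paths, using the fact (already noted immediately after the definition of plausible vertices) that $\dist_{G_x}(h_i, h_{i+1}) < 2A$ so that the $2A$-bounded exact oracle behaves like an exact oracle on the relevant queries.

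Concretely, I would fix an arbitrary $v$ on some shortest $h_i h_{i+1}$-path in $G_x$. By the subpath optimality of shortest paths,
\begin{equation*}
\dist_{G_x}(h_i, v) + \dist_{G_x}(v, h_{i+1}) = \dist_{G_x}(h_i, h_{i+1}),
\end{equation*}
so each of $\dist_{G_x}(h_i, v)$ and $\dist_{G_x}(v, h_{i+1})$ is at most $\dist_{G_x}(h_i, h_{i+1})$. Since $G_x$ well-approximates $(h_i, h_{i+1})$, the edge weight in $H$ comes from a $(1+\epsilon)$-approximate $A$-bounded estimate on $G_x$, and hence $\dist_{G_x}(h_i, h_{i+1}) \le (1+\epsilon)A < 2A$ (as explicitly recorded in \Cref{def:directed_plausible}). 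In particular, $\mathcal{O}_{G_x}(h_i, h_{i+1}) = \dist_{G_x}(h_i, h_{i+1})$ and the pairs $(h_i, v)$ and $(v, h_{i+1})$ lie within the $2A$-bounded range of both oracles.

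Now I apply the defining property of the $2$-approximate $2A$-bounded oracle from \Cref{oracle-notation} together with the definition in the Preliminaries: for any pair $(a,b)$ with $\dist_{G_x}(a,b) \le 2A$ we have $\mathcal{O}^{(2)}_{G_x}.query(a,b) \le 2\dist_{G_x}(a,b)$. Applying this to the pairs $(h_i, v)$ and $(v, h_{i+1})$, I obtain
\begin{align*}
\mathcal{O}^{(2)}_{G_x}.query(h_i, v) &\le 2\dist_{G_x}(h_i, v) \le 2\dist_{G_x}(h_i, h_{i+1}) = 2\,\mathcal{O}_{G_x}(h_i, h_{i+1}), \\
\mathcal{O}^{(2)}_{G_x}.query(v, h_{i+1}) &\le 2\dist_{G_x}(v, h_{i+1}) \le 2\dist_{G_x}(h_i, h_{i+1}) = 2\,\mathcal{O}_{G_x}(h_i, h_{i+1}).
\end{align*}
By \Cref{def:directed_plausible}, this is precisely the membership condition, so $v \in P_i$.

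There is really no substantive obstacle here; the argument is just a careful bookkeeping of the two oracle guarantees, the subpath property, and the $< 2A$ upper bound coming from the well-approximation hypothesis. The only point that one must not skip is verifying that $\dist_{G_x}(h_i, h_{i+1}) < 2A$, which is what ensures the $2A$-bounded oracles return finite, meaningful estimates for the relevant pairs rather than $\infty$.
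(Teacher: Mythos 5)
Your proof is correct and follows essentially the same approach as the paper's: use the subpath optimality equality, the well-approximation hypothesis to bound $\dist_{G_x}(h_i, h_{i+1}) < 2A$ so the $2A$-bounded oracles return meaningful values, and the $2$-approximation guarantee to conclude the membership condition. You spell out the intermediate steps (the chain $\mathcal{O}^{(2)}_{G_x}.query(h_i,v) \le 2\dist_{G_x}(h_i,v) \le 2\dist_{G_x}(h_i,h_{i+1}) = 2\mathcal{O}_{G_x}(h_i,h_{i+1})$) more explicitly than the paper does, but the underlying argument is identical.
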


\begin{proof}

    Suppose $G_x$ well-approximates edge $(h_i, h_{i+1})$. 
    As the edge $(h_i, h_{i+1})$ has finite length in $H$ we know that $\dist_{G_x}(h_i, h_{i+1}) \leq (1+\epsilon)A$, hence, $\mathcal{O}_{G_x}.query(h_i, h_{i+1}) = \dist_{G_x}(h_i, h_{i+1})$.
    
    If $v$ is on any shortest path in $G_x$ from $h_i$ to $h_{i+1}$ then: 
    \begin{align*}
        \dist_{G_x}(h_i, v) + \dist_{G_x}(v, h_{i+1}) &= \dist_{G_x}(h_i, h_{i+1} ) 
         &\Rightarrow\\
        \mathcal{O}^{(2)}_{G_x}.query(h_i, v) \text{ and }  \mathcal{O}^{(2)}_{G_x}.query(v, h_{i+1}) &\leq 2 \cdot \mathcal{O}_{G_x}.query(h_i, h_{i+1}) 
    \end{align*}
    Note that both queries on the LHS don't return $\infty$ because $$\dist_{G_x}(h_i, v), \dist_{G_x}(v, h_{i+1}) \leq \dist_{G_x}(h_i, h_{i+1} )  \leq (1+\epsilon)A.$$
\end{proof}

We can now prove \Cref{thm:ssp-dir}, which states that \Cref{alg:ssp-dir} indeed reconstructs an $h_i h_{i+1}$-shortest path on $G_x$.

\begin{proof}[Proof of \Cref{thm:ssp-dir}]

Consider an execution of \Cref{alg:ssp-dir}, where we are given an edge $(h_i,h_{i+1})$ from $H$, 
and an integer rounded graph $G_x$ that well approximates this edge, 
and a set of plausible vertices $P_i$.

\paragraph{Correctness}

First, we remark that any calls to $\calO_{G_x}(v,h_{i+1})$ and $\calO_{G_x}(h_i,v)$ in \Cref{alg:ssp-dir} 
never return $\infty$ for vertices $v$ on the $h_ih_{i+1}$-shortest path.
Since $\len_H(h_i,h_{i+1})$ has finite value, we know $\dist_{G_x}(h_i,h_{i+1}) \le (1+\epsilon)A < 2A$, so all vertices $v$ on the $h_i h_{i+1}$-shortest path have small enough distance for a $2A$-bounded oracle $\calO_{G_x}$ to return correct finite distances.

Next we argue that we indeed construct the shortest path.
First note that by \Cref{on-sp-are-plausible} all vertices of any $h_i h_{i+1}$-shortest path in $G_x$ are present in the queue defined in \cref{algo:line:queue}.

Let us show by induction that  after every added vertex $v$ to the path $\pi$ on \cref{algo:line:add} (\Cref{alg:ssp-dir}) there exist some $h_i h_{i+1}$-shortest path on $G_x$ that starts with $\pi$. The base case is obvious as we start with $\pi = (h_i)$.

Suppose, we already constructed path $(h_i, v_1, \ldots, v_g) = h_i \path v_g$. 
By assumption, there exists a $h_i h_{i+1}$-shortest path with $h_i \path v_g$ as a head. 
Note that any vertex $w$, for which there exists a $h_i h_{i+1}$-shortest path starting with $h_i \path v_g \rightarrow w$, is in the queue because $\dist_{G_x}(h_i, w) > \dist_{G_x}(h_i, v_g)$ as the queue is ordered. 
Also any vertex $w \in \mathcal{N}_{G_x}(v_g)$ is on some $v_g h_{i+1}$-shortest path if and only if $\dist_{G_x}(v_g, h_{i+1}) = \len_{G_x}(v_g, w) + \dist_{G_x}(w, h_{i+1})$ that is being checked on \cref{algo:line:check} of the algorithm. 
Hence, the next vertex we append to our path will be on some $v_g h_{i+1}$-shortest path and correspondingly on some $h_i h_{i+1}$-shortest path. 

\paragraph{Complexity} 
Executing \cref{algo:line:queue} takes $O(|P_{i}|)$ calls to $\mathcal{O}_{G_x}.query(*, *)$ and additional time $O(|P_{i}| \cdot \log |P_{i}|) = \tilde O(|P_{i}|)$ time for sorting.

As \cref{alg:line:while} is executed  $O( |P_{i}|)$ times and every loop requires $O(1)$ calls to $\mathcal{O}_{G_x}.query(*, *)$  the total time is dominated by $\tilde O(|P_{i}|)$ calls to $\mathcal{O}_{G_x}.query(*, *)$.

\end{proof}

The purpose of the set $P_i$ of plausible vertices is to restrict the search space and thus result in a faster algorithm.
To give good complexity bounds, we must prove that the set $P_i$ is small on average.

Recall that set $P_i$ is defined w.r.t.~an edge $(h_i,h_{i+1})$ that represents some segment on some (approximately) $st$-shortest path $s=h_1 \path h_2 \path ... h_k=t$ on $G$.
Since our final aim is to reconstruct the entire (approximately) $st$-shortest path, we will reconstruct the $h_i h_{i+1}$-shortest paths for all $i=1,...,k-1$ by repeatedly calling \Cref{alg:ssp-dir} (\Cref{thm:ssp-dir}).
The following \Cref{not-too-many} and \Cref{not-too-many-total} bound the total size of all $P_i$ that we construct for $i=1,...,k-1$.

\begin{lemma}\label{not-too-many} There are $O(n)$ plausible vertices across all 
segments from the same weight category (\Cref{def:weight_category}) $S_\alpha = \{ \sigma \mid \len_H(\sigma) \in [2^{\alpha - 1}, 2^\alpha)\}$:

\begin{align*}
    \sum_{
     \substack
     {
        i: \\
        (h_i, h_{i+1}) \in S_\alpha
     }
    } |P_{i}| =  O(n)
\end{align*}
\end{lemma}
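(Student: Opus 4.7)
The plan is to establish a ``no-shortcut'' argument by double-counting: rather than bounding $|P_i|$ individually, I will bound, for each vertex $v \in V$, the number of edges $\sigma \in S_\alpha$ for which $v$ is plausible, and show this count is $O(1)$. Summing over $v$ then yields $O(n)$.

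First I would convert the plausibility condition, which is stated in terms of distances in $G_x := G_*(h_i,h_{i+1})$, into a bound on distances in the original graph $G$. Concretely, if $v \in P_i$ and $(h_i,h_{i+1}) \in S_\alpha$, then by \Cref{def:directed_plausible}
\begin{align*}
    \dist_{G_x}(h_i,v) \;\le\; \mathcal{O}^{(2)}_{G_x}.query(h_i,v) \;\le\; 2\,\dist_{G_x}(h_i,h_{i+1}).
\end{align*}
Multiplying both sides by $B_x/A$ and using the first inequality of \Cref{AB-dist}, together with the fact that
$\len_H(h_i,h_{i+1}) = (B_x/A)\cdot \Delta_x(h_i,h_{i+1}) \ge (B_x/A)\dist_{G_x}(h_i,h_{i+1})$,
I obtain
\begin{align*}
    \dist_G(h_i,v) \;\le\; (B_x/A)\cdot \dist_{G_x}(h_i,v) \;\le\; 2\,\len_H(h_i,h_{i+1}) \;<\; 2^{\alpha+1}.
\end{align*}
The symmetric argument gives $\dist_G(v,h_{i+1}) < 2^{\alpha+1}$ as well.

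Next I would carry out the shortcut argument. Suppose $v \in P_i \cap P_j$ for two (distinct) edges in $S_\alpha$, indexed by $a < b$ in that category, i.e.\ $\sigma^{(\alpha)}_a = (h_i,h_{i+1})$ and $\sigma^{(\alpha)}_b = (h_j,h_{j+1})$. Combining the two bounds above through the vertex $v$, the triangle inequality gives
\begin{align*}
    \dist_G(h_i,h_{j+1}) \;\le\; \dist_G(h_i,v) + \dist_G(v,h_{j+1}) \;<\; 2^{\alpha+1} + 2^{\alpha+1} \;=\; 2^{\alpha+2}.
\end{align*}
On the other hand, \Cref{hw-lemma} yields $\dist_G(h_i,h_{j+1}) \ge 2^{\alpha-2}(b-a)$. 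Combining these two bounds forces $b - a < 16$, so the set of indices in $S_\alpha$ for which $v$ is plausible has diameter less than $16$, hence has size at most $16$.

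Finally I would conclude by double counting:
\begin{align*}
    \sum_{i:\,(h_i,h_{i+1})\in S_\alpha} |P_i|
    \;=\; \sum_{v\in V} \bigl|\{i : (h_i,h_{i+1})\in S_\alpha,\; v\in P_i\}\bigr|
    \;\le\; \sum_{v\in V} 16
    \;=\; O(n).
\end{align*}
The main obstacle I anticipate is step one: correctly tracking the scaling between $G_x$ and $G$ under the $(A,B_x)$-rounding so that both the upper bound from plausibility and the lower bound from \Cref{hw-lemma} live in $G$ with compatible units. Once the plausibility condition is translated from $G_x$-distances into a true $G$-distance bound of the form $2\len_H(h_i,h_{i+1})$, the no-shortcut argument proceeds cleanly at the $H$-edge-weight scale $2^\alpha$.
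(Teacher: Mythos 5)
Your proof is correct and follows essentially the same route as the paper: translate the plausibility condition from $G_x$-distances into $G$-distance bounds of roughly $2^{\alpha+1}$, combine by triangle inequality into an upper bound on $\dist_G(h_i,h_{j+1})$, contrast with the lower bound from \Cref{hw-lemma}, and conclude that each vertex is plausible for $O(1)$ edges in $S_\alpha$ before summing. (You also correctly use the factor $2^{\alpha-2}$ from the statement of \Cref{hw-lemma}; the paper's inline derivation writes $2^{\alpha-1}$, which is the bound for $\dist_H$ rather than $\dist_G$, but this only shifts the constant and does not affect the $O(1)$ conclusion.)
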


\begin{proof}

 Suppose a vertex $v$ is plausible for edges $\sigma^{(\alpha)}_a = (h_i, h_{i+1})$ and $\sigma^{(\alpha)}_b = (h_j, h_{j+1})$, $a < b$ and is well-approximated by $G_x$ (\Cref{fig:dir}): 
  \begin{align*}
    \dist_{G_x}(h_i, v) &\leq \mathcal{O}^{(2)}_{G_x}.query(h_i, v)  \leq 2 \cdot \mathcal{O}_{G_x}.query(h_i, h_{i+1})  \leq 2 \cdot \mathcal{O}_{G_x}^{(1 + \epsilon)}.query(h_i, h_{i+1}) =\\
    &= 2 \frac{A}{B_x}\len_H(h_i, h_{i+1}) =  2 \frac{A}{B_x}\dist_H(h_i, h_{i+1}) \leq \frac{A}{B_x} \cdot 2^{\alpha + 1}
    \end{align*}
Here we used $\len_H(h_i, h_{i+1}) =  \dist_H(h_i, h_{i+1})$, which follows from the fact that $(h_i,h_{i+1})$ was on some $h_1h_k$-shortest path.
Hence, by \Cref{AB-dist}

$$\dist_{G}(h_i, v)  \leq \frac{B_x}{A}\dist_{G_x}(h_i, v) \leq 2^{\alpha + 1}$$

Similarly, we have that $ \dist_{G}(v, h_{j+1}) \leq  2^{\alpha + 1}$.
Combining both inequalities:
\begin{align*}
    \dist_{G}(h_i, h_{j+1}) \leq  2^{\alpha + 2}
\end{align*}

On the other hand, by \Cref{hw-lemma}
\begin{align*}
    \dist_{G}(h_i, h_{j+1}) \geq  2^{\alpha - 1}(b - a)
\end{align*}

Therefore, $b - a = O(1)$. That means every vertex can be plausible only for $O(1)$ edges from the same $S_\alpha$. So in total
\begin{align*}
    \sum_{
     \substack
     {
        i: \\
        (h_i, h_{i+1}) \in S_\alpha
     }
    } |P_{i}| =  O(n)
\end{align*}
\end{proof}

\begin{figure} 
\begin{tikzpicture}
  [
      hset/.style={circle, fill=blue!50, minimum size=#1, inner sep=0pt, outer sep=0pt},
      hset/.default = 6pt,
      dot/.style = {circle, fill, minimum size=#1, inner sep=0pt, outer sep=0pt},
      dot/.default = 3pt,
      segment/.style={
        decorate, 
        decoration=snake,
        segment amplitude=.2mm,
        segment length=2.5mm,
        line after snake=10mm}, 
    segment2/.style={
        decorate, 
        decoration=snake,
        segment amplitude=.2mm,
        segment length=3.5mm,
        line after snake=0.5mm},
     wcat/.style={
         blue, line width=1.2pt, anchor=north}
    ],
    
    \def\lng{1}
    \def\lvl{0}
    \def\tip{Stealth[length=3mm, width=1.5mm]}
    \def \vertices{ 
        \node 
            [dot, label=above:$s$] 
            (s) 
            at (0, \lvl) {}; 
        \node 
            [hset, label=above:$h_1$] 
            (h1) 
            at (\lng, \lvl) {}; 
        \node 
            [] 
            (h1_next) 
            at (\lng * 1.5 , \lvl) {};
          
        \node 
            [] 
            (hi_prev) 
            at (3 - \lng *0.5, \lvl) {};
        \node 
            [hset, label=above:$h_i$] 
            (hi) 
            at (3, \lvl) {};
    
        \node 
            [hset, label=above:$h_{i + 1}$] 
            (hi1) 
            at (3 + 1.4*\lng ,    \lvl) {};
        \node 
            [] 
            (hi_next) 
            at (3 + 1.75*\lng ,  \lvl) {};
        
        \node 
            [dot, label=above:$v$] 
            (v) 
            at (8, \lvl + 2) {};
        
        \node 
            [hset] 
            (sgm1s) 
            at (6, \lvl) {};
            
        \node 
            [hset] 
            (sgm1e) 
            at (6  + 1.25*\lng, \lvl) {};
        
        \node 
            [hset] 
            (sgm2s) 
            at (9, \lvl) {};
        \node 
            [hset] 
            (sgm2e) 
            at (9  + 1.25*\lng, \lvl) {};
        
        \node 
            [] 
            (hj_prev) 
            at (12 - \lng *0.5,\lvl) {};
        \node 
            [hset, label=above:$h_{j}$] 
            (hj) 
            at (13 - 1.25*\lng, \lvl) {};
        \node 
            [hset, label=above:$h_{j + 1}$] 
            (hj1) 
            at  (13, \lvl) {};
        \node 
            (t) 
            [dot, label=above:$t$] 
            at (16, \lvl) {};
        \node 
            [] 
            (d1) 
            at (3 + 1.5*\lng,     \lvl - 1.25) {};
        \node 
            [] 
            (d2) 
            at (13 - 1.5*\lng,    \lvl - 1.25) {};
        \node 
            [] 
            (hj1_next) 
            at  (13.5,   \lvl) {};

        \node 
            [] 
            (t_prev) 
            at (15, \lvl) {};
    }
    
    \vertices
    
    \draw [-{\tip}, segment, color=orange, sloped, anchor=south, bend left=10] (hi) to node  {$\leq 2^{\alpha + 1}$}  (v);
    \draw [-{\tip}, segment, color=orange, sloped, anchor=south, bend left=10] (v)  to node {$\leq 2^{\alpha + 1}$} (hj1);

    \draw [-{\tip}, segment, wcat] 
    (hi)    -> node {$\sigma^{(\alpha)}_a$}       (hi1);
    \draw [-{\tip}, segment, wcat] 
    (sgm1s) -> node {$\sigma^{(\alpha)}_{a+1}$}   (sgm1e);
    \draw [-{\tip}, segment, wcat] 
    (sgm2s) -> node {$\sigma^{(\alpha)}_{b-1}$}   (sgm2e);
    \draw [-{\tip}, segment, wcat] 
    (hj)    -> node {$\sigma^{(\alpha)}_{b}$}     (hj1);

    \draw [-{\tip},segment] (s) ->  (h1);
    \draw [-,segment] 
        (h1) -> (h1_next.center);
    \draw [-,segment] 
        (hi_prev.center) -> (hi);
    \draw [-,segment] 
        (hi1) -> (hi_next.center) 
        (hj_prev.center)  -> (hj);
    \draw [-,segment] 
        (hj1) -> (hj1_next.center);
    \draw [-{\tip},segment] 
        (t_prev.center) -> (t);

    \draw [-, dotted, segment, anchor=north] 
        (h1_next.center) -- (hi_prev.center)
        (hi_next.center) -- (sgm1s.center) 
        (sgm1e.center)  -- node {$\ldots$} (sgm2s.center) 
        (sgm2e.center)  -- (hj_prev.center)
        (hj1_next.center) -- (t_prev.center); 
    
    \draw [-, segment2, anchor=north, bend right=25, color=red] (hi) to (d1.center);
    \draw [-{\tip}, segment2, anchor=north, bend right=25, color=red] (d2.center) to (hj1) ;
    \draw [-, segment2, anchor=north, color=red] (d1.center) to node  {$\geq 2^{\alpha - 1}(b - a)$}  (d2.center);
\end{tikzpicture}
\caption{A possibility of a shortcut in graph $G$ between $h_i$ and $h_{j+1}$. Red curly arrow indicates a $h_ih_{j+1}$-shortest path in $G$.}
\label{fig:dir}
\end{figure}

The following \Cref{not-too-many-total} bounds the total size of all plausible sets $P_i$ that we have when reconstructing an approximately $st$-shortest path $s=h_1 \path h_2 \path ... \path h_k = t$, by reconstructing each segment $h_i \path h_{i+1}$ via \Cref{alg:ssp-dir} (\Cref{thm:ssp-dir}).

\begin{corollary}\label{not-too-many-total} There are $O(n \log (W n))$ plausible vertices across all
segments:
\begin{align*}
    \sum_{i = 1}^{k} |P_{i}| =  O(n \log (nW))
\end{align*}
\end{corollary}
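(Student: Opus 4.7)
The plan is to decompose the sum over all segments according to the weight categories of \Cref{def:weight_category} and apply \Cref{not-too-many} to each category individually.

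First I would recall that every edge $(h_i,h_{i+1})$ on the shortest $h_1h_k$-path in $H$ has finite length, and that its length in $H$ equals $\dist_H(h_i,h_{i+1})$. By \Cref{hi-hj-aprx}, $\len_H(h_i,h_{i+1})\le(1+O(\epsilon))\dist_G(h_i,h_{i+1})\le(1+O(\epsilon))(n-1)W$, so $\len_H(h_i,h_{i+1})=O(nW)$. On the lower end, each finite edge weight in $H$ is at least $1$ (the minimum positive distance in the integer-rounded graphs $G_x$ scaled by $B_x/A\ge 1$ for the smallest nontrivial scale; if this is not already enforced one can simply include a $0$-category without loss). Consequently the index $\alpha$ in \Cref{def:weight_category} ranges over only $O(\log(nW))$ values.

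Then I would split the total sum by weight category:
\begin{align*}
\sum_{i=1}^{k}|P_i|
\;=\;\sum_{\alpha=0}^{O(\log(nW))}\;\sum_{\substack{i:\\ (h_i,h_{i+1})\in S_\alpha}}|P_i|.
\end{align*}
By \Cref{not-too-many}, the inner sum is $O(n)$ for every $\alpha$. Summing over the $O(\log(nW))$ categories yields $\sum_{i=1}^{k}|P_i|=O(n\log(nW))$, which is the desired bound.

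There is essentially no obstacle here beyond bookkeeping: all the combinatorial work (the no-shortcut argument that bounds how many segments of a single weight category any vertex can be plausible for) was already carried out in \Cref{not-too-many}, and \Cref{hw-lemma} together with \Cref{hi-hj-aprx} justify the $O(\log(nW))$ bound on the number of weight categories. The only thing one should double-check is that the convention used for very short (or zero-length) edges does not produce extra categories beyond $O(\log(nW))$, which is immediate from the rounding setup.
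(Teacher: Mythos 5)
Your proof matches the paper's own argument exactly: both decompose the total sum over weight categories $S_\alpha$, invoke \Cref{not-too-many} to bound each inner sum by $O(n)$, and multiply by the $O(\log(nW))$ categories (which \Cref{def:weight_category} already restricts to $0\le\alpha\le\log nW$). The extra paragraph you spend justifying the range of $\alpha$ is just careful bookkeeping the paper takes for granted, but there is no substantive difference.
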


\begin{proof}
\begin{align*}
    \sum_{i} |P_{i}| = 
    \sum_{\alpha = 1}^{\log_2 \lceil nW \rceil}
    \sum_{
     \substack
     {
        i: \\
         (h_i, h_{i+1}) \in S_\alpha
     }
    } |P_{i}|
    =
    O(n \log (nW))
\end{align*}
\end{proof}

\subsection{Reconstructing Path Segments on Undirected graphs} %
\label{sec:segments:undirected}

\begin{algorithm2e}[t!] 
\caption{Reporting a shortest $h_i, h_{i+1}$ path in $(A,B_x)$-rounded $G_x = G_*(h_i, h_{i+1})$ for the undirected case} 
\label{alg:ssp-undir}
\SetKwProg{Proc}{procedure}{}{}
\Proc{$\textsc{ShortestSubpath}(G_x, (h_i, h_{i+1}))$}{
    Let $\mathcal{O}_{G_x}$ be an exact $2A$-bounded oracle. \\
    $v_{last} \gets h_i$\\
    $\pi  \gets [h_i]$\\
    set of vertices seen before: $Q \gets \{ \}$ \\
    \While{$v_{last}$ is not $h_{i+1}$ \label{algo:undir:line:for}}
    {
        \For{$w$ in $\mathcal{N}_{G_x}(v_{last})$ \label{algo:undir:n} \tcp{
        it is sorted according to \cref{eq:sorted}
        }}{
            \uIf{$w \in Q$}{
             continue \label{line:continue}
            }
            $Q.add(w)$\\
            \uIf{$\len_{G_x}(v_{last}, w) + \mathcal{O}_{G_x}.query(w, h_{i+1}) = \mathcal{O}_{G_x}.query(v_{last}, h_{i+1})$ \label{algo:undir:line:check}}{
             $\pi \gets \pi  || w$\\
             $v_{last} \gets w$\\ 
             break \label{line:break}\\
            }
        }
    }
}
\end{algorithm2e}

Given $H$ as in \Cref{construct-h}, and a shortest path $(s=h_1,h_2, \ldots ,h_k=t)$ in $H$, our goal 
is to recover an (approximately) $h_1h_k$-shortest paths in $G$.
Note that we have $\dist_G(h_1,h_k) \le (1+\epsilon)\dist_H(h_1,h_k)$ (\Cref{hi-hj-aprx}), 
and that the edge weight $(h_i,h_{i+1})$ in $H$ corresponds to the (approximate) length of an $h_ih_{i+1}$-path in some $(A,B_x)$-rounded graph $G_x$ by \Cref{construct-h} that well approximates edge $(h_i,h_{i+1})$. 
So for each $i=1,2,3,...$, our task is to reconstruct the $h_i h_{i+1}$-shortest path in the respective graph $G_x$.

For undirected graphs, we use the same approach as in the directed case that we outlined in \Cref{sec:segments:directed}.
The main difference is that we do not need to perform a pre-filtering of the vertices onto a smaller set of plausible vertices. 
For the directed case in \Cref{sec:segments:directed}, we had to compute a certain set of plausible vertices (\Cref{def:directed_plausible}).
However, in the undirected case here, it suffices to define ``plausible vertices'' only for the sake of analysis. 
We do not need to compute this set.

\begin{definition}\label{def:undirected:plausible}
Given an edge $\sigma = (h_i, h_{i+1})$ we define a set of \textbf{plausible} vertices for edge $(h_i, h_{i+1})$ using uniquely-defined $G_*(h_i, h_{i+1}) = G_x$ that well-approximates it:

\begin{align*}
    P_{i} = \left\{ v \in V \mid  \dist_{G_x}(h_i, v) \leq   \dist_{G_x}(h_i, h_{i+1}) \right\}
\end{align*}
\end{definition}

Our main result of this subsection is the following \Cref{thm:ssp-undir}, which states that \Cref{alg:ssp-undir} correctly reconstructs the $h_ih_{i+1}$-shortest path in $G_x$.
Notably, the complexity scales in the size of set $P_i$. 
The later \Cref{not-too-many-total:undir} shows that when reconstructing the $h_ih_{i+1}$-shortest path for each $i$, the total sum of all $|P_i|$ is nearly linear, giving a very efficient bound on the complexity of \Cref{alg:ssp-undir}.

\begin{theorem} \label{thm:ssp-undir}  
Given undirected $G$, an edge $(h_i, h_{i+1})$, the $(A, B)$-rounded version $G_x$ well-approximating the edge, and $\mathcal{O}_{G_x}$ is the exact $2A$-bounded distance oracle for $G_x$.
Then \Cref{alg:ssp-undir} recovers a $(h_i, h_{i+1})$-shortest path in $G_x$.

The time complexity is $O(n \log A)$ 
plus the time to perform $O(|P_{i}|)$ calls to $\mathcal{O}_{G_x}.query(*, *)$, where $P_{i}$ is a set of plausible vertices for $(h_i, h_{i+1})$ (\Cref{def:undirected:plausible}).
\end{theorem}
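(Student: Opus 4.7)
The plan is to prove correctness and the stated complexity bound separately, using the plausible-vertex set $P_i$ as the main accounting device.

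For correctness, I would maintain the invariant that at the start of each outer while iteration, $v_{last}$ lies on some $h_i h_{i+1}$-shortest path in $G_x$ whose prefix is the current $\pi$. The base case $v_{last} = h_i$ is immediate. For the inductive step, the inner for loop must find a neighbor $w$ of $v_{last}$ whose selection extends $\pi$ into a prefix of some shortest path, i.e.~$\len_{G_x}(v_{last}, w) + \dist_{G_x}(w, h_{i+1}) = \dist_{G_x}(v_{last}, h_{i+1})$. Note that $\dist_{G_x}(v_{last},h_{i+1}) \le (1+\epsilon)A < 2A$ since $v_{last}$ sits on the shortest path, so all oracle calls return finite values.

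The subtle point is that the set $Q$ persists across iterations of the outer loop, so I must rule out that every valid successor of $v_{last}$ has already been inserted into $Q$ at some earlier step. Here I would use a no-shortcut argument: if $w \in Q \cap \mathcal{N}_{G_x}(v_{last})$ were a valid successor of $v_{last}$, then $w$ was inserted while processing some earlier path vertex $v_{j'}$, so $w \in \mathcal{N}_{G_x}(v_{j'})$ with $\len_{G_x}(v_{j'}, w) \le \len_{G_x}(v_{j'}, v_{j'+1})$ by the sorted iteration. But then the path $v_{j'} \to w \to \cdots \to h_{i+1}$ has length at most $\len_{G_x}(v_{j'}, v_{j'+1}) + \dist_{G_x}(w, h_{i+1})$, whereas the explicit path $v_{j'} \to v_{j'+1} \to \cdots \to v_{last} \to w \to \cdots \to h_{i+1}$ has length $\ge \len_{G_x}(v_{j'}, v_{j'+1}) + \len_{G_x}(v_{last}, w) + \dist_{G_x}(w, h_{i+1})$, forcing $\len_{G_x}(v_{last}, w) \le 0$ and contradicting the positive integer weights in $G_x$. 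Hence a valid successor not in $Q$ always exists, and the check on \cref{algo:undir:line:check} will identify one.

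For complexity, the oracle-query count is $O(|P_i|)$: every vertex $w$ newly inserted into $Q$ while processing $v_j$ satisfies $\len_{G_x}(v_j, w) \le \len_{G_x}(v_j, v_{j+1})$ (because it came before the chosen successor in the sorted neighborhood), so $\dist_{G_x}(h_i, w) \le \dist_{G_x}(h_i, v_j) + \len_{G_x}(v_j, w) \le \dist_{G_x}(h_i, v_{j+1}) \le \dist_{G_x}(h_i, h_{i+1})$, placing $w$ in $P_i$ by \Cref{def:undirected:plausible}. Since each vertex enters $Q$ at most once, at most $|P_i|$ calls are charged to \cref{algo:undir:line:check}, plus one cached $\mathcal{O}_{G_x}.query(v_{last}, h_{i+1})$ per path step, with at most $|P_i|$ such steps.

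The main obstacle I anticipate is bounding the $O(n \log A)$ non-oracle term. Each skip on \cref{line:continue} costs only $O(1)$, but a single $w \in Q$ may reappear in the sorted neighborhoods of many subsequent path vertices. I would bound the total skip count via a weight-categorized no-shortcut argument in the spirit of \Cref{not-too-many}, exploiting that $G_x$ has integer edge weights in $[1, A]$ and therefore the $h_i h_{i+1}$-shortest path uses at most $2A$ hops; grouping path edges by dyadic weight classes gives $O(\log A)$ buckets, and within each bucket a shortcut-style argument charges skips against the $O(n)$ vertices of $V$.
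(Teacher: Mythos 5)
Your proof is essentially correct and follows the same overall structure as the paper's: an invariant that $\pi$ is a prefix of some $h_i h_{i+1}$-shortest path, a no-shortcut argument that a valid successor of $v_{last}$ cannot already lie in $Q$, and a weight-class decomposition to get the $O(n\log A)$ non-oracle cost. Two small differences are worth noting. First, for the ``valid successor not already in $Q$'' step the paper proceeds through distances from $h_i$: it first shows every $u\in Q$ satisfies $\dist_{G_x}(h_i,u)\le\dist_{G_x}(h_i,v_{last})$, then contradicts this with $\dist_{G_x}(h_i,w)=\dist_{G_x}(h_i,v_{last})+\len_{G_x}(v_{last},w)>\dist_{G_x}(h_i,v_{last})$. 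You instead compare two $v_{j'}\curly h_{i+1}$ paths and derive $\len_{G_x}(v_{last},w)\le 0$; this is a valid variant that exploits the same facts (sorted neighborhood iteration and strictly positive integer weights). Second, for the $O(n\log A)$ term you give only a sketch, and the phrase ``grouping path edges by dyadic weight classes'' is slightly off: the paper groups the \emph{iterated edges} $\{v_j,w\}$ (i.e., all edges of $G_x$) into weight classes $E_\ell=\{e:\len_{G_x}(e)\in[2^\ell,2^{\ell+1})\}$ and then shows, via a shortcut argument, that for each class a fixed $w$ is scanned $O(1)$ times. Your hop-count observation ($\le 2A$ hops on the segment) is not actually used; what matters is $O(\log A)$ weight classes. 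With that clarification, your plan completes correctly to the paper's bound.
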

\begin{proof}%

We prove the correctness by induction over the number of iterations in \Cref{algo:undir:line:for} of \Cref{alg:ssp-undir}.
Suppose we already reconstructed a path $h_i \path v_{last}$ such that there exists a $h_i h_{i+1}$-shortest path with $h_i \path v_{last}$ as a head. 
To prove that this path is correctly extended, we first need to argue that list $Q$ only contains vertices that are plausible, i.e.~for all $u\in Q$ we have $\dist_{G_x}(h_i,u) \le \dist_{G_x}(h_i,h_{i+1})$.

\paragraph{$Q$ stored plausible vertices}
Vertex $u$ was added to $Q$ because it is the neighbor of some $v_j$ on the $h_iv_{last}$-shortest path constructed by the algorithm, i.e.~$h_i \path v_j \to v_{j+1} \path v_{last}$. 
    Observe that $u$ appeared in $\mathcal{N}_{G_x}(v_j)$ earlier than $v_{j+1}$ because no other neighbor of $v_j$ is iterated over in \Cref{algo:undir:n} after $v_{j+1}$, because of the \textit{break} in \Cref{line:break}.
    Since we iterate over the neighbors in ascending order of their edge weights, we have $\len_{G_x}(v_j, u) \le \len_{G_x}(v_j, v_{j+1})$.
    Using the fact that $h_i \path v_j$ is the head of some $h_i v_{last}$-shortest path we get:
    \begin{align}
        \dist_{G_x}(h_{i}, u) \leq&~
        \dist_{G_x}(h_{i}, v_{j}) + \len_{G_x}(v_j, u) \leq  \dist_{G_x}(h_{i}, v_{j}) + \len_{G_x}(v_j, v_{j+1}) = \dist_{G_x}(h_{i}, v_{j + 1})\notag \\
        \le&~\dist_{G_x}(h_{i}, v_{last}) \le \dist_{G_x}(h_{i}, h_{i+1}) \label{eq:contradiction}
    \end{align}
    Thus vertex $u$ must be plausible.

\paragraph{Correctness of the head}
First let us remark, the calls to $\calO_{G_x}.query$ never returns $\infty$ in \Cref{alg:ssp-undir} for vertices on the shortest path.
Since $\len_H(h_i,h_{i+1})$ has finite value, we know $\dist_{G_x}(h_i,h_{i+1}) \le (1+\epsilon)A \le 2A$, so all vertices on the shortest path have distance at most $2A$ to $h_i$ and $h_{i+1}$, so a $2A$-bounded oracle $\calO_{G_x}$ suffices.

We now argue that \Cref{alg:ssp-undir} correctly appends some neighbor $w$ of $v_{last}$ to the path $h_i\path v_{last} \to w$, such that this path is the head of some $h_i h_{i+1}$-shortest path. 

The algorithm iterates over the neighbors of $v_{last}$.
If we append some neighbors $w$ to the path $h_i \path v_{last} \to w$, 
then we have by \Cref{algo:undir:line:check} that
$$\len_{G_x}(v_{last},w) + \dist_{G_x}(w, h_{i+1}) = \dist_{G_x}(v_{last}, h_{i+1}).$$
Thus $h_i \path v_{last} \to w$ must be the head of a $h_i h_{i+1}$-shortest path.

Next, we must argue that we do in-fact always append some neighbor of $v_{last}$ to the path.
Since $h_i\path v_{last}$ is the head of a $h_i h_{i+1}$-shortest path, there must be a neighbor $w$ of $v_{last}$ where $h_i \path v_{last} \to w \path h_{i+1}$ must be a $h_i h_{i+1}$-shortest path.
If no neighbor of $v_{last}$ is appended to $h_i\path v_{last}$, then that must mean $w$ was in $Q$, and it was not appended because of \Cref{line:continue}.
However, $w$ cannot be in $Q$ as otherwise we would have
\begin{align*}
    \dist_{G_x}(h_{i}, w) \leq&~\dist_{G_x}(h_{i}, v_{last}) ~~~~~~\text{by \eqref{eq:contradiction}} \\
    <&~ \dist_{G_x}(h_{i}, v_{last}) + \len_{G_x}(v_{last}, w) = \dist_{G_x}(h_{i}, w).
\end{align*}
which is a contradiction.

\paragraph{Complexity}

Note that we perform exactly one call to $\mathcal{O}_{G_x}.query(*, *)$ for each vertex in $Q$.
As argued before, these are all plausible vertices, so we can bound the number of oracle calls by $|P_i|$.

Next, we must bound how many vertices we iterate through in \Cref{algo:undir:n}.
Here iterating over $w\in\mathcal{N}_{G_x}(v_{last})$ can also interpreted as iterating over edges $\{v_{last},w\}$ incident to $v_{last}$.
To bound over how many edges we iterate, let us split all edges in $G_x$ into groups based on their edge weights, i.e. group $E_\ell$ are edges with weight in $[2^\ell, 2^{\ell+1})$.
Graph $G_x$ has edge weights in $[1, A]$ so there are $O(\log A)$ such groups.
We will argue that we iterate over at most $O(n)$ edges in each $E_\ell$, thus in total we iterate over at most $O(n \log A)$ vertices in \Cref{algo:undir:n}.

Assume vertex $w$ was iterated over 4 separate times in \Cref{algo:undir:n} for the same weight class $E_y$.
Let $v_1,v_2,v_3,v_4,v_5$ be the respective vertices on the $h_i h_{i+1}$-shortest path (not necessarily consecutive) for which we iterated over $w$.
We know by the neighbors being iterated over in ascending order of the edge weights that $\len_{G_x}(v_k,w) \le \dist_{G_x}(v_k,v_{k+1})$ for $k=1,...,5$, because the first edge on the $v_kv_{k+1}$-shortest path must have had edge weight at least $\ln_{G_x}(v_k,w)$.
By $\len_{G_x}(v_k,w) \in [2^y, 2^{y+1})$ for some $y$ (since they are from the same weight class) we have
\begin{align*}
\dist_{G_x}(v_1, v_5) 
\le &~
\len_{G_x}(v_1,w) + \len_{G_x}(w,v_4) 
\le 
\dist_{G_x}(v_1, v_2) + \len_{G_x}(w,v_4) \\
< &~
\dist_{G_x}(v_1, v_2) + 2^{y+1} 
= 
\dist_{G_x}(v_1, v_2) + 2^{y} + 2^y \\
\le &~
\dist_{G_x}(v_1, v_2) + \len_{G_x}(v_2,w) + \dist_{G_x}(v_3, w) \\
\le&~
\dist_{G_x}(v_1, v_2) + \dist_{G_x}(v_2,v_3) + \dist_{G_x}(v_3, v_4) 
= 
\dist_{G_x}(v_1, v_5)
\end{align*}
which is a contradiction.
So any vertex $w$ can be iterated over at most $4$ times for the same weight class.
Thus in total, \Cref{algo:undir:n} looks at at most $O(n \log A)$ vertices.

\end{proof}

The next \Cref{not-too-many:undir} and \Cref{not-too-many-total:undir} bound how many plausible vertices can exist. 
This can be used to bound the total time complexity of applying \Cref{alg:ssp-undir} (\Cref{thm:ssp-undir}) to all segments of some $st$-shortest path.

\begin{lemma}\label{not-too-many:undir} There are $O(n)$ plausible vertices across all 
segments from the same weight category $S_\alpha = \{ \sigma \mid \len_H(\sigma) \in [2^{\alpha - 1}, 2^\alpha)\}$:

\begin{align*}
    \sum_{
     \substack
     {
        i: \\
        (h_i, h_{i+1}) \in S_\alpha
     }
    } |P_{i}| =  O(n)
\end{align*}
\end{lemma}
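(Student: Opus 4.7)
The plan is to adapt the ``no-shortcut'' argument from \Cref{not-too-many} (the directed case) to the undirected plausibility definition. Since the undirected definition only provides the one-sided condition $\dist_{G_x}(h_i, v) \le \dist_{G_x}(h_i, h_{i+1})$, the argument must exploit the symmetry of distances in the undirected graph $G$ to control the ``other side.''

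First, I would derive an upper bound on $\dist_G(h_i, v)$ for every plausible $v \in P_i$ with $(h_i, h_{i+1}) \in S_\alpha$. Let $G_x = G_*(h_i, h_{i+1})$. Applying \Cref{AB-dist} to a shortest $h_i$-to-$v$ path in $G_x$ gives
$$
\dist_G(h_i, v) \le \frac{B_x}{A}\dist_{G_x}(h_i, v) \le \frac{B_x}{A}\dist_{G_x}(h_i, h_{i+1}) \le \frac{B_x}{A}\,\mathcal{O}_{G_x}^{(1+\epsilon)}.query(h_i, h_{i+1}) = \len_H(h_i, h_{i+1}) < 2^\alpha,
$$
where the middle inequality is the defining property of $P_i$ and the next one uses that $\mathcal{O}_{G_x}^{(1+\epsilon)}$ overestimates true distances.

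Next, suppose a vertex $v$ lies in both $P_i$ and $P_j$, corresponding to $\sigma_a^{(\alpha)} = (h_i, h_{i+1})$ and $\sigma_b^{(\alpha)} = (h_j, h_{j+1})$ with $a < b$. By the bound above, $\dist_G(h_i, v) < 2^\alpha$ and $\dist_G(h_j, v) < 2^\alpha$. Crucially, \emph{because $G$ is undirected}, $\dist_G(v, h_j) = \dist_G(h_j, v)$, so the triangle inequality yields
$$
\dist_G(h_i, h_j) \le \dist_G(h_i, v) + \dist_G(v, h_j) < 2^{\alpha+1}.
$$

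Then I would lower-bound $\dist_G(h_i, h_j)$ by the same reasoning as \Cref{hw-lemma}, applied to the pair $(h_i, h_j)$ in place of $(h_i, h_{j+1})$. The $H$-shortest path from $h_i$ to $h_j$ contains the consecutive $S_\alpha$-edges $\sigma_a^{(\alpha)}, \sigma_{a+1}^{(\alpha)}, \ldots, \sigma_{b-1}^{(\alpha)}$ (exactly $b-a$ of them), each of $H$-length at least $2^{\alpha-1}$, so $\dist_H(h_i, h_j) \ge 2^{\alpha-1}(b-a)$, and \Cref{hi-hj-aprx} gives $\dist_G(h_i, h_j) \ge 2^{\alpha-2}(b-a)$. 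Combining with the upper bound forces $b-a = O(1)$, so each vertex $v$ is plausible for at most $O(1)$ edges within $S_\alpha$, and summing yields $\sum_{i : (h_i, h_{i+1}) \in S_\alpha} |P_i| = O(n)$.

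The main obstacle, and the substantive difference from the directed case, is that the undirected definition of $P_i$ gives no direct bound on $\dist_{G_x}(v, h_{i+1})$, so the ``right endpoint'' $h_{j+1}$ used in the directed proof is unavailable and we must route the shortcut through the \emph{left} endpoints $h_i, h_j$ only. This is precisely where undirectedness becomes essential: without symmetry of $\dist_G$, we could not convert the two ``incoming'' bounds on $\dist_G(h_i, v)$ and $\dist_G(h_j, v)$ into a bound on $\dist_G(h_i, h_j)$.
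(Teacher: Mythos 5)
Your proof is correct and takes essentially the same approach as the paper: both derive the upper bound $\dist_G(h_i,v), \dist_G(h_j,v) = O(2^\alpha)$ from the plausibility condition and the rounding lemma, both then exploit undirectedness via the triangle inequality to obtain a shortcut between two left endpoints, and both invoke the weight-category lower bound of \Cref{hw-lemma} to conclude $b-a = O(1)$. The only cosmetic difference is that you bound $\dist_G(h_i, h_j)$ while the paper bounds $\dist_G(h_i, h_{j+1})$ (adding one extra segment length $\dist_G(h_j, h_{j+1})$); both versions of \Cref{hw-lemma}'s lower bound apply and yield the same conclusion.
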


\begin{proof}

 Suppose a vertex $v$ is plausible for edges $\sigma^{(\alpha)}_a = (h_i, h_{i+1})$ and $\sigma^{(\alpha)}_b = (h_j, h_{j+1})$, $a < b$ and is well-approximated by $G_x$ (\Cref{fig:undir}): 
 Just as in the proof of \Cref{not-too-many} we have that 
 $$\dist_{G}(h_i, v)  
 \leq \dist_{G_x}(h_i, v)
 \leq \dist_{G_x}(h_i, h_{i+1})
 \leq (1 + \epsilon) \dist_{G}(h_i, h_{i+1})
 \leq  2^{\alpha + 1}$$
 
 and similarly $\dist_{G}(h_j, v)  \leq  2^{\alpha + 1}$.

As this is an undirected graph, we can combine both inequalities:
\begin{align}
    \dist_{G}(h_i, h_{j+1}) \leq  \dist_{G}(h_i, v) + \dist_{G}(v, h_{j}) + \dist_{G}(h_{j}, h_{j+1}) \leq 2^{\alpha + 1} + 2^{\alpha + 1} + 2^\alpha \leq 2^{\alpha + 3}
\end{align}

On the other hand, by \Cref{hw-lemma}
\begin{align*}
    \dist_{G}(h_i, h_{j+1}) \geq  2^{\alpha - 1}(b - a)
\end{align*}

Therefore, $b - a = O(1)$. That means every vertex can be plausible only for $O(1)$ edges from the same $S_\alpha$. So in total
\begin{align*}
    \sum_{
     \substack
     {
        i: \\
        (h_i, h_{i+1}) \in S_\alpha
     }
    } |P_{i}| =  O(n)
\end{align*}

\end{proof}

\begin{figure} 
\begin{tikzpicture}
  [
      hset/.style={circle, fill=blue!50, minimum size=#1, inner sep=0pt, outer sep=0pt},
      hset/.default = 6pt,
      dot/.style = {circle, fill, minimum size=#1, inner sep=0pt, outer sep=0pt},
      dot/.default = 3pt,
      segment/.style={
        decorate, 
        decoration=snake,
        segment amplitude=.2mm,
        segment length=2.5mm,
        line after snake=0.00mm}, 
    segment2/.style={
        decorate, 
        decoration=snake,
        segment amplitude=.2mm,
        segment length=3.5mm,
        line after snake=0.0mm}
    ],
    
    \def\lng{1}
    \def\lvl{0}
       \def \vertices{ 
        \node 
            [dot, label=above:$s$] 
            (s) 
            at (0, \lvl) {}; 
        \node 
            [hset, label=above:$h_1$] 
            (h1) 
            at (\lng, \lvl) {}; 
        \node 
            [] 
            (h1_next) 
            at (\lng * 1.5 , \lvl) {};
          
        \node 
            [] 
            (hi_prev) 
            at (3 - \lng *0.5, \lvl) {};
        \node 
            [hset, label=above:$h_i$] 
            (hi) 
            at (3, \lvl) {};
    
        \node 
            [hset, label=above:$h_{i + 1}$] 
            (hi1) 
            at (3 + 1.4*\lng ,    \lvl) {};
        \node 
            [] 
            (hi_next) 
            at (3 + 1.75*\lng ,  \lvl) {};
        
        \node 
            [dot, label=above:$v$] 
            (v) 
            at (8 - 1.25*0.5*\lng, \lvl + 2) {};
        
        \node 
            [hset] 
            (sgm1s) 
            at (6, \lvl) {};
            
        \node 
            [hset] 
            (sgm1e) 
            at (6  + 1.25*\lng, \lvl) {};
        
        \node 
            [hset] 
            (sgm2s) 
            at (9, \lvl) {};
        \node 
            [hset] 
            (sgm2e) 
            at (9  + 1.25*\lng, \lvl) {};
        
        \node 
            [] 
            (hj_prev) 
            at (12 - \lng *0.5,\lvl) {};
        \node 
            [hset, label=above:$h_{j}$] 
            (hj) 
            at (13 - 1.25*\lng, \lvl) {};
        \node 
            [hset, label=above:$h_{j + 1}$] 
            (hj1) 
            at  (13, \lvl) {};
        \node 
            (t) 
            [dot, label=above:$t$] 
            at (16, \lvl) {};
        \node 
            [] 
            (d1) 
            at (3 + 1.5*\lng,     \lvl - 1.25) {};
        \node 
            [] 
            (d2) 
            at (13 - 1.5*\lng,    \lvl - 1.25) {};
        \node 
            [] 
            (hj1_next) 
            at  (13.5,   \lvl) {};

        \node 
            [] 
            (t_prev) 
            at (15, \lvl) {};
    }
    
    \vertices
    
    \draw [-, segment, color=orange, sloped, anchor=south, bend left=10] (hi) to node  {$\leq 2^{\alpha + 1}$}  (v) to node {$\leq 2^{\alpha + 1}$} (hj);

    \draw [-,segment, blue, line width=1.2pt, anchor=north] 
    (hi)    -- node {$\sigma^{(\alpha)}_a$}       (hi1) 
    (sgm1s) -- node {$\sigma^{(\alpha)}_{a+1}$}   (sgm1e) 
    (sgm2s) -- node {$\sigma^{(\alpha)}_{b-1}$}   (sgm2e) 
    (hj)    -- node {$\sigma^{(\alpha)}_{b}$}     (hj1);

    \draw [-,segment] 
        (s) --  (h1);
     \draw [-,segment] 
        (h1) -- (h1_next.center) 
        (hi_prev.center) -- (hi)   
        (hi1) -- (hi_next.center) 
        (hj_prev.center)  -- (hj) 
        (hj1) -- (hj1_next.center)
        (t_prev.center) -- (t);
    
    \draw [-, dotted, segment, anchor=north] 
        (h1_next.center) -- (hi_prev.center);
    \draw [-, dotted, segment, anchor=north]
        (hi_next.center) -- (sgm1s.center) ;
    \draw [-, dotted, segment, anchor=north]
        (sgm1e.center) -- node {$\ldots$} (sgm2s.center) ;
    \draw [-, dotted, segment, anchor=north]
        (sgm2e.center) -- (hj_prev.center);
    \draw [-, dotted, segment, anchor=north] 
        (hj1_next.center) -- (t_prev.center);
        
    \draw [-, segment2, anchor=north, bend right=25, color=red] (hi) to (d1.center);
    \draw [-, segment2, anchor=north, bend right=25, color=red](d2.center) to (hj1) ;
    \draw [-, segment2, anchor=north, color=red] (d1.center) to node  {$\geq 2^{\alpha - 1} (b-a)$}  (d2.center);
\end{tikzpicture}
\caption{A possibility of a shortcut in graph $G$ between $h_i$ and $h_{j+1}$. Red curly arrow indicates a shortest $h_ih_{j+1}$ path in $G$.}
\label{fig:undir}
\end{figure}

\begin{corollary}\label{not-too-many-total:undir} There are $O(n \log (W n))$ plausible vertices across all
segments:
\begin{align*}
    \sum_{i = 1}^{k} |P_{i}| =  O(n \log (nW))
\end{align*}
\end{corollary}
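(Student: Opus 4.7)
The plan is to mirror the proof of the directed analogue (Corollary \ref{not-too-many-total}) essentially verbatim, since the only ingredient that differs between the directed and undirected cases is the per-weight-category bound, and the undirected version of that bound has already been established as Lemma \ref{not-too-many:undir}.

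First I would recall from Definition \ref{def:weight_category} that the edges of the $st$-shortest path in $H$ are partitioned into weight categories $S_\alpha$ for $\alpha = 0, 1, \ldots, \lceil \log_2 nW \rceil$, according to whether $\len_H(h_i, h_{i+1}) \in [2^{\alpha-1}, 2^\alpha)$. Since edge weights in $H$ come from $(A, B_x)$-rounded distances with $A \le n$, $B_x \le Wn$, and each edge on the shortest path has finite length, each edge $(h_i, h_{i+1})$ falls into exactly one weight category indexed by some integer $\alpha$ with $1 \le \alpha \le O(\log(nW))$.

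Then I would simply split the sum over segments according to weight category and apply Lemma \ref{not-too-many:undir} to each piece:
\begin{align*}
\sum_{i=1}^{k} |P_i| \;=\; \sum_{\alpha=1}^{\lceil \log_2 nW \rceil} \sum_{\substack{i:\\ (h_i, h_{i+1}) \in S_\alpha}} |P_i| \;=\; \sum_{\alpha=1}^{\lceil \log_2 nW \rceil} O(n) \;=\; O(n \log(nW)).
\end{align*}

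There is no real obstacle here; the work has already been absorbed into Lemma \ref{not-too-many:undir} (whose proof, in turn, uses the lower bound on $\dist_G(h_i, h_{j+1})$ from Lemma \ref{hw-lemma} together with the definition of plausibility in the undirected case). The corollary is simply the resulting geometric-style summation across all $O(\log(nW))$ weight categories.
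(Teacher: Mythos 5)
Your proposal is correct and matches the intended argument: the paper omits an explicit proof for the undirected corollary precisely because it is identical to the directed Corollary~\ref{not-too-many-total}, summing over the $O(\log(nW))$ weight categories and applying the per-category $O(n)$ bound from Lemma~\ref{not-too-many:undir}.
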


\subsection{Blackbox Reductions}
\label{sec:blackbox}

We now have all tools available to prove the blackbox reduction from dynamic shortest path to dynamic distance algorithms.
The main idea is to use hitting set arguments to split any $st$-shortest path into shorter segments, then use the results from \Cref{sec:segments:directed,sec:segments:undirected} to reconstruct the path for each such segment.

We start by giving our approximate result for directed weighted graphs. %
\blackbox*

\begin{algorithm2e}[th] 
\caption{Querying an $(1+\epsilon)$-approximate $st$-shortest path in weighted $G$} 
\label{wda-final-query}
\SetKwProg{Proc}{procedure}{}{}
\tcp{Blue lines are executed on directed graphs only.}
\Proc{$\textsc{Query}(G, s, t, \epsilon)$}{
Add vertices $s,t$ to $H$ and add edges incident to $s$ and $t$ as in \Cref{construct-h} \label{alg:q:line:new-h-edges}\\
For this, we must call $\calO_{G_x}^{(1+\epsilon)}.query(v,V_H)$ and $\calO_{G_x}^{(1+\epsilon)}.query(V_H,v)$ for $v\in\{s,t\}$ for all $x$.\\
$(s=h_0, h_1, \ldots, h_{k-1}, h_k=t) \gets \textsc{Dijkstra}\left(H, (s, t)\right)$ \label{alg:q:line:dijkstra}\\
\color{blue}  Compute and save $\mathcal{O}^{(2)}_{G_x}.queryAll(\{s\}, V)$ and $\mathcal{O}^{(2)}_{G_x}.queryAll(V, \{t\})$ for all $x$\color{black} \label{alg:q:line:o2query}\\
\color{blue}  Construct $P_0, \ldots, P_{k}$ using results of $\mathcal{O}^{(2)}_{G_x}.queryAll(V_H, V)$, $\mathcal{O}^{(2)}_{G_x}.queryAll(V, V_H)$, $\mathcal{O}^{(2)}_{G_x}.queryAll(V, \{s\})$ and $\mathcal{O}^{(2)}_{G_x}.queryAll(\{t\}, V)$\color{black}  \label{alg:q:line:plausible}\\
\For{$i = 0, \ldots, k$ \label{alg:q:line:g-copies} }{
        $G_x \gets G_*(h_i, h_{i+1})$ \\
        $\pi_{i} \gets \textsc{ShortestSubpath}(G_x, (h_i, h_{i+1}), \color{blue} P_i \color{black})$ \label{alg:q:line:ssp} \tcp*[h]{\Cref{alg:ssp-dir,alg:ssp-undir}, depending on if $G$ is (un-)directed.}\\
    }
\Return $\pi$
}
\end{algorithm2e}

\begin{algorithm2e}[th] 
\caption{Updating the data structure for given new weight $c$ for an edge $e$} 
\label{wda-final-update}
\SetKwProg{Proc}{procedure}{}{}
\tcp{Blue lines are executed on directed graphs only.}
\Proc{$\textsc{Update}(G, \epsilon, e, c)$}{
\For{all maintained $G_x$ (as defined in \Cref{AB-min-dist}) \label{alg:upd:line:g-copies}}{
    \uIf{ $c \leq B_x$} 
        {   
            $c_x \gets \left\lceil A c / B_x\right\rceil$ \\
            \color{blue}  $\mathcal{O}^{(2)}_{G_x}.update(e, c_x)$ \color{black} \label{alg:upd:line:o2}\\
            $\mathcal{O}^{(1 + \epsilon)}_{G_x}.update(e, c_x)  $ \label{alg:upd:line:o1e}\\
            $\mathcal{O}_{G_x}.update(e, c_x)$ \label{alg:upd:line:o}\\
        }
    }
    Construct $H$ as in \cref{construct-h} \label{alg:upd:line:h}\\
    \color{blue}  Compute and save results of $\mathcal{O}^{(2)}_{G_x}.queryAll(V_H, V), \mathcal{O}^{(2)}_{G_x}.queryAll(V, V_H)$ \color{black} \label{alg:upd:line:o2qall}\\
}
\end{algorithm2e}

\begin{proof}%

Consider the algorithm defined by \Cref{wda-final-update} for update operation and \Cref{wda-final-query} for query operation (all lines are executed).

\paragraph{Correctness.} The fact that concatenated shortest subpaths form a $(1 + O(\epsilon))$-approximate $st$-shortest path in $G$ follows from \Cref{hi-hj-aprx} and the correctness of each subpath comes from \Cref{thm:ssp-dir}.

\paragraph{Complexity.} As we have $O(\log (nW) ) $ copies $G_x$, the total query time (the lines are from \Cref{wda-final-query}) is 
 \begin{align*}
     \tilde{O}\left(  
        \underbrace{
        Q_{1+\epsilon}(1, n^{1-\s}) \cdot %
         \log W }
         _
         {\text{Edges $V_H \times \{s, t \}$, (\Cref{alg:q:line:new-h-edges})}}
         +
         \underbrace{n^{2 - 2\s} }
         _
         {\substack{\text{Dijkstra} \\ \text{(\Cref{alg:q:line:dijkstra})}}}
         + 
         \underbrace{
         Q_2(n, 1) \cdot %
          \log W }
         _
         {\text{\Cref{alg:q:line:o2query}} }
         + 
         \underbrace{n^{2 - \s} \log W }
         _
         {\text{$P_i$'s, \Cref{alg:q:line:plausible}} }
         + 
         \underbrace{
         n \cdot Q(1, 1) \cdot %
         \log W}
         _
         {\text{all ShortestPaths (\Cref{alg:q:line:g-copies})}}
     \right).
 \end{align*}
 Note that the result of $queryAll(V,V_H)$ for \Cref{alg:q:line:plausible} was precomputed during the \textsc{Update} routine (\Cref{wda-final-update}), so the cost of that query does not occur here for \textsc{Query}.
  The complexity of \Cref{alg:q:line:g-copies} is bounded by 
 \Cref{thm:ssp-dir} and \Cref{not-too-many-total}.
 
 Total update time (the lines are from \Cref{wda-final-update}): 
 \begin{align*}
     \tilde{O}
     \left( 
      ( 
          \underbrace{
           U_{2} + U_{1+\epsilon} + U
          }
          _
          {
          \text{(\Cref{alg:upd:line:o2}, 
          \Cref{alg:upd:line:o1e},
          \Cref{alg:upd:line:o})
          }}   
      +
      \underbrace{
        Q_{1+\epsilon}(n^{1-\s}, n^{1-\s})
        }
      _
      {\text{edges of }H, \text{(\Cref{alg:upd:line:h})}}
      +
      \underbrace{
      Q_2(n, n^{1-\s})
      }
      _
      {\mathcal{O}^{(2)}_{G_x}.queryAll(V, V_H), \text{(\Cref{alg:upd:line:o2qall}})}  
      )
     \log W
     \right) 
 \end{align*}

\end{proof}

\begin{theorem}[undirected, approximate]
    \label{thm:aprx-undir-gen} 

    \thmstatement{$\mathcal{O}_{X}$ and $\mathcal{O}^{(1 + \epsilon)}_{X}$}{undirected}{$4 n^\s / \epsilon$}{real}
     
    \begin{itemize}[nosep]
        \item \thmprep $\tilde O\left(\log W \cdot (P_{1+\epsilon} + P)\right)$
        \item \thmupd $$
        \tilde O\left(\log W \cdot 
        \left(
            U_{1+\epsilon} + U
            +
            Q_{1+\epsilon}(n^{1-\s}, n^{1-\s}) 
        \right)
        \right)
        $$
        
        \item  \thmquery{an $(1+\epsilon)$-approximate $st$-shortest}
        $$
        \tilde O\left( \log W \cdot 
            \right(
                Q_{1+\epsilon}(1, n^{1-\s}) + n^{2 - 2\s} + n \cdot Q(1, 1)
            \left)
        \right)
        $$
    \end{itemize}
    \thmps
\end{theorem}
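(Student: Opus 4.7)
The plan is to mirror the proof of Theorem~\ref{thm:aprx-dir-gen}, reusing Algorithms~\ref{wda-final-update} and~\ref{wda-final-query} but executing only the non-blue lines. Those blue lines exist solely to precompute the explicit plausible sets $P_i$ consumed by the directed reconstruction routine Algorithm~\ref{alg:ssp-dir}; in the undirected setting, Algorithm~\ref{alg:ssp-undir} does not take any $P_i$ as input, and the plausible sets appear only in the analysis through Lemma~\ref{not-too-many:undir} and Corollary~\ref{not-too-many-total:undir}. Consequently, the data structure only maintains $\mathcal{O}_{G_x}$ and $\mathcal{O}^{(1+\epsilon)}_{G_x}$ on each $(A,B_x)$-rounded copy $G_x$ from Lemma~\ref{AB-min-dist}, and the $2$-approximate oracle and its associated $queryAll$ calls are dropped. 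This immediately explains why the update and query complexities lose the $U_2$, $Q_2$-type summands compared to the directed statement.

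Correctness follows by the same chain of reductions as in the directed case. Lemma~\ref{hi-hj-aprx} guarantees that a shortest path $s=h_0\path h_1\path\cdots\path h_k=t$ in $H$ approximates $\dist_G(s,t)$ within a factor $1+O(\epsilon)$; Theorem~\ref{thm:ssp-undir} reconstructs each segment $h_i\path h_{i+1}$ exactly in the rounded graph $G_x = G_*(h_i,h_{i+1})$ that well-approximates the corresponding edge of $H$ (Definition~\ref{well-aprx}); concatenating all reconstructed segments and rescaling $\epsilon$ by a constant gives a $(1+\epsilon)$-approximate $st$-shortest path in $G$. As in Theorem~\ref{thm:aprx-dir-gen}, the one-sided correctness of the hitting-set construction (Fact~\ref{fact:hitting}), together with resampling $R$ after every update, yields w.h.p.\ correctness against an adaptive adversary.

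For complexity, each update visits all $O(\log W)$ rounded copies, performing one $\mathcal{O}$-update and one $\mathcal{O}^{(1+\epsilon)}$-update per copy, plus a refresh of $H$ that issues pairwise $(1+\epsilon)$-approximate queries on $V_H\times V_H$; summing gives the claimed $\tilde{O}\bigl(\log W\cdot(U+U_{1+\epsilon}+Q_{1+\epsilon}(n^{1-\s},n^{1-\s}))\bigr)$ bound. A query splices $s,t$ into $H$ at cost $\tilde{O}(\log W\cdot Q_{1+\epsilon}(1,n^{1-\s}))$, runs Dijkstra on $H$ in $\tilde{O}(n^{2-2\s})$ time since $|V_H|=\tilde{O}(n^{1-\s})$, and invokes Algorithm~\ref{alg:ssp-undir} on every segment; by Theorem~\ref{thm:ssp-undir} and Corollary~\ref{not-too-many-total:undir} the total number of exact oracle calls across all segments is $\tilde{O}(n)$, contributing $\tilde{O}(\log W\cdot n\cdot Q(1,1))$.

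The main obstacle I anticipate is the non-oracle iteration work inside Algorithm~\ref{alg:ssp-undir} summed over all segments. Theorem~\ref{thm:ssp-undir} charges $O(n\log A)$ iteration work per invocation, which naively sums to $\tilde{O}(n^{2-\s})$ across the $\tilde{O}(n^{1-\s})$ segments and does not fit under $n^{2-2\s}$. The clean way to absorb this is to observe that each execution of the inner loop either (i)~processes a vertex newly inserted into $Q$, which by the proof of Theorem~\ref{thm:ssp-undir} is plausible and thus accounted for globally by Corollary~\ref{not-too-many-total:undir}, or (ii)~skips a vertex already in $Q$, in which case the skip can be amortized against the neighbor slot that first inserted that vertex. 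Making this global amortization rigorous—rather than the per-segment argument used in Theorem~\ref{thm:ssp-undir}—is the one delicate bookkeeping step; the rest is a direct pruning of the directed proof.
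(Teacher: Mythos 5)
Your proposal follows the paper's proof of Theorem~\ref{thm:aprx-undir-gen} essentially verbatim: execute only the black lines of Algorithms~\ref{wda-final-update} and~\ref{wda-final-query}, replace $\textsc{ShortestSubpath}$ by Algorithm~\ref{alg:ssp-undir}, invoke \Cref{hi-hj-aprx} for the Dijkstra-level approximation guarantee and \Cref{thm:ssp-undir} for per-segment exactness, and plug in the oracle complexities. Correctness, the decomposition into rounded copies $G_x$, and the update-time accounting all match the paper exactly.

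Where you go beyond the paper is in the complexity bookkeeping for the query. You are right to flag that the per-segment non-oracle work of $O(n\log A)$ quoted in the \emph{statement} of \Cref{thm:ssp-undir} naively sums to $\tilde O(n^{2-\s})$ over the $\tilde O(n^{1-\s})$ segments, which exceeds the claimed $n^{2-2\s}$. The paper's own underbrace (``$n^{1-\s}\cdot n\log A$'' placed under $n^{2-2\s}$) is in fact an imprecision, and you are correct that the resolution is to amortize globally. Your sketch is in the right direction but can be tightened: the key point is that \emph{every} iteration of \Cref{algo:undir:n}, not just the adds to $Q$, touches a plausible vertex (a skipped $w$ was previously added to $Q$, hence plausible by the argument in the proof of \Cref{thm:ssp-undir}). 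Combined with the per-weight-class argument already present in that proof, each plausible vertex is visited $O(\log A)$ times within a segment, so the per-segment iteration work is really $O(|P_i|\log A)$ rather than $O(n\log A)$, and \Cref{not-too-many-total:undir} then yields $\sum_i |P_i|\log A = \tilde O(n)$ total iteration work. This comfortably sits under $n^{2-2\s}$ for $\s\le 1/2$, so the theorem's stated query time is correct; what the paper leaves implicit (pointing to ``\Cref{thm:ssp-undir} and \Cref{not-too-many-total:undir}'') you have made explicit, and your amortization phrased as ``charge the skip to the insertion, with $O(\log A)$ charges per insertion'' is the right way to state it precisely. So: same approach as the paper, with a genuine and correct clarification of one bookkeeping step that the paper glosses over.
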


\begin{proof}

Consider the algorithm defined by \Cref{wda-final-update} for update operation and \Cref{wda-final-query} for query operation (only black lines are executed, for $\textsc{ShortestSubpath}$ use \Cref{alg:ssp-undir} that doesn't need $P_i$ as input).

\paragraph{Correctness.} The fact that concatenated shortest subpaths form a $(1 + O(\epsilon))$-approximate $st$-shortest path in $G$ follows from \Cref{hi-hj-aprx}.
The correctness of each subpath comes from \Cref{thm:ssp-undir}.

\paragraph{Complexity.} As we have $O(\log (nW) ) $ copies $G_x$, each of which has edge weights at most $A=O(n^\s/\epsilon)$, the total query time (the lines are from \Cref{wda-final-query}) is: 
 \begin{align*}
     \tilde{O}\left(  
        \underbrace{
        Q_{1+\epsilon}(1, n^{1-\s}) \cdot %
         \log W }
         _
         {\text{Edges $V_H \times \{s, t \}$, (\Cref{alg:q:line:new-h-edges})}}
         +
         \underbrace{n^{2 - 2\s} }
         _
         {\substack{
            \text{Dijkstra} \text{(\Cref{alg:q:line:dijkstra})} \\
            \text{and $n^{1-\s}\cdot n\log A$} \\
            \text{(\Cref{thm:ssp-undir})}}}
         + 
         \underbrace{
         n \cdot Q(1, 1) \cdot %
         \log W}
         _
         {\text{all ShortestPaths (\Cref{alg:q:line:g-copies})}}
     \right) 
 \end{align*}
  Where the complexity of \Cref{alg:q:line:g-copies} is bounded by 
 \Cref{thm:ssp-undir} and \Cref{not-too-many-total:undir}.
 
 Total update time (the lines are from \Cref{wda-final-update}): 
 \begin{align*}
     \tilde{O}
     \left( 
      ( 
          \underbrace
          {
              U_{1+\epsilon} + U
          }
          _
          {
              \text
              {(
                  \Cref{alg:upd:line:o1e},
                  \Cref{alg:upd:line:o}
              )}
            }   
      +
      \underbrace{
        Q_{1+\epsilon}(n^{1-\s}, n^{1-\s})
        }
      _
      {\text{edges of }H, \text{(\Cref{alg:upd:line:h})}}  
      )
     \log W
     \right) 
 \end{align*}
 
\end{proof}

\begin{theorem}[directed, exact]
    \label{thm:exact-dir-gen} 
    \thmstatement{$\mathcal{O}_{X}$ and $\mathcal{O}^{(2)}_{X}$}{directed}{$Wn^\s$}{integer}
     
    \begin{itemize}[nosep]
        \item \thmprep $\tilde O\left(P_{2} + P\right)$
        \item \thmupd $$
        \tilde O\left(
            U_{2} + U
            +
            Q(n^{1-\s}, n^{1-\s}) 
            + 
            Q_{2}(n^{1-\s}, n) 
        \right)
        $$
        
        \item  \thmquery{an exact $st$-shortest}
        $$
        \tilde O\left(  
                Q(1, n^{1-\s}) 
                +
                Q_{2}(1, n) 
                +
                n^{2 - \s} 
                +
                n \cdot Q(1, 1) \log W
        \right)
        $$
    \end{itemize}
    \thmps
\end{theorem}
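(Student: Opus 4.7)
The plan is to reuse the same algorithmic template as in \Cref{thm:aprx-dir-gen} (namely \Cref{wda-final-update} for updates and \Cref{wda-final-query} for queries, with the blue directed-case lines enabled), specialized to $\epsilon=0$. By \Cref{AB-min-dist} at $\epsilon=0$ there is only a single rounded graph $G_0=G$ with $A=B=Wn^\s$, so every occurrence of $\calO^{(1+\epsilon)}_{G_x}$ in those pseudocodes collapses to the exact $Wn^\s$-bounded oracle $\calO_G$ that is already given as input. In particular, the auxiliary graph $H$ from \Cref{construct-h} becomes the complete graph on $V_H\subseteq V$ whose edges carry \emph{exact} $Wn^\s$-bounded $uv$-distances in $G$, and $G_*(h_i,h_{i+1})=G$ for every edge of $H$.

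For correctness I would first invoke the ``equality holds for $\epsilon=0$'' clause of \Cref{hi-hj-aprx} to conclude $\dist_H(s,t)=\dist_G(s,t)$ w.h.p., conditioned on the standard hitting-set event that the $st$-shortest path is cut by the random set $R$ of size $\tilde\Theta(n^{1-\s})$ into segments each of length at most $Wn^\s$. Dijkstra on $H$ then returns a sequence $s=h_0,\dots,h_k=t$ whose consecutive pairs realize exact $h_ih_{i+1}$-shortest paths in $G$. I would then apply \Cref{alg:ssp-dir} to each segment with the plausible set $P_i$ from \Cref{def:directed_plausible}: since $G_*(h_i,h_{i+1})=G$, \Cref{on-sp-are-plausible} and \Cref{thm:ssp-dir} guarantee that each reconstructed subpath is an exact $h_ih_{i+1}$-shortest path, and concatenation yields an exact $st$-shortest path in $G$. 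Adaptivity and one-sided w.h.p.\ correctness are inherited by resampling $R$ on every update, exactly as in the approximate case.

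For the complexity bound I would go through \Cref{wda-final-update} and \Cref{wda-final-query} line by line as in the proof of \Cref{thm:aprx-dir-gen}. The crucial observation is that the outer $O(\log(nW))$ factor on most terms there came from iterating over the $O(\log(nW))$ rounded copies $G_x$, which here collapses to a single copy and therefore disappears. Update then costs $U_2+U$ to update both oracles, $Q(n^{1-\s},n^{1-\s})$ to recompute the edges of $H$ via exact queries, and $Q_2(n^{1-\s},n)$ to precompute $\calO^{(2)}_G.queryAll(V_H,V)$ for later use in building the $P_i$'s. Query costs $Q(1,n^{1-\s})$ to attach $s$ and $t$ to $H$, $Q_2(1,n)$ to extend the precomputed $\calO^{(2)}_G$ table to $s,t$, $\tilde O(n^{2-\s})$ for Dijkstra on $H$ and for scanning the plausible sets (their sizes are bounded per weight class by \Cref{not-too-many}), and finally $n\cdot Q(1,1)\log W$ for the exact-oracle queries issued by \Cref{alg:ssp-dir} across all segments, where \Cref{not-too-many-total} supplies the $O(n\log(nW))$ total plausible-vertex count.

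The one place where I would be especially careful is the $\log W$ bookkeeping, which is the main (but rather mild) obstacle. The $\log W$ factors that multiplied the $n^{2-\s}$, $Q_2$, and $Q_{1+\epsilon}$ terms in \Cref{thm:aprx-dir-gen} were rounded-copy overhead and must be confirmed to drop at $\epsilon=0$, whereas the $\log W$ on the $n\cdot Q(1,1)$ term in the query comes from the weight-category decomposition in \Cref{not-too-many-total} and must be retained—these together are precisely what the statement of \Cref{thm:exact-dir-gen} claims. A quick sanity check is that the definition of $P_i$ in \Cref{def:directed_plausible} only calls $\calO^{(2)}_{G_x}$ and $\calO_{G_x}$, both of which are supplied by hypothesis, so the algorithm runs as written with no further modification.
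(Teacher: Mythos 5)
Your proposal is correct and follows essentially the same approach as the paper: specialize the rounded-graph machinery of \Cref{construct-h} to $\epsilon=0$ so that $G_0=G$ is the only copy, run \Cref{wda-final-update} and \Cref{wda-final-query} with all (including directed/blue) lines enabled, derive correctness from \Cref{hi-hj-aprx} (equality case) and \Cref{thm:ssp-dir}, and read off the complexity line by line using \Cref{not-too-many-total} for the $O(n\log nW)$ bound on total plausible vertices. Your explicit bookkeeping of which $\log W$ factors drop (rounded-copy overhead) versus which remain (weight-category decomposition behind $n\cdot Q(1,1)\log W$) is exactly the right way to read the paper's proof, which leaves that distinction somewhat implicit inside the $\tilde O$.
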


\begin{proof}
Note that \Cref{construct-h} extends to $\epsilon=0$ in which case there exists only one $(A,B)$-rounded graph $G_0=G$.

Consider the algorithm defined by \Cref{wda-final-update} for update operation and \Cref{wda-final-query} for query operation (all lines are executed).

\paragraph{Correctness.} The fact that concatenated shortest subpaths form w.h.p.~an exact $st$-shortest path in $G$ follows from \Cref{hi-hj-aprx} and the correctness of each subpath follows from \Cref{thm:ssp-dir}.

\paragraph{Complexity.} Total query time (the lines are from \Cref{wda-final-query}): 
 \begin{align*}
     \tilde{O}\left(  
        \underbrace{
        Q_{}(1, n^{1-\s}) \cdot %
          }
         _
         {\text{Edges $V_H \times \{s, t \}$, (\Cref{alg:q:line:new-h-edges})}}
         +
         \underbrace{n^{2 - 2\s} }
         _
         {\substack{\text{Dijkstra} \\ \text{(\Cref{alg:q:line:dijkstra})}}}
         + 
         \underbrace{
         Q_2(n, 1)  %
          }
         _
         {\text{\Cref{alg:q:line:o2query}} }
         + 
         \underbrace{n^{2 - \s}  }
         _
         {\text{$P_i$'s, \Cref{alg:q:line:plausible}} }
         + 
         \underbrace{
         n \cdot Q(1, 1)  %
         }
         _
         {\text{all ShortestPaths (\Cref{alg:q:line:g-copies})}}
     \right) 
 \end{align*}
Note that the result of $queryAll(V,V_H)$ for \Cref{alg:q:line:plausible} was precomputed during the \textsc{Update} routine (\Cref{wda-final-update}), so the cost of that query does not occur here for \textsc{Query}.
  The complexity of \Cref{alg:q:line:g-copies} is bounded by 
 \Cref{thm:ssp-dir} and \Cref{not-too-many-total}.
 
 Total update time (the lines are from \Cref{wda-final-update}): 
 \begin{align*}
     \tilde{O}
     \left( 
          \underbrace{
           U_{2}  + U
          }
          _
          {
          \text{(\Cref{alg:upd:line:o2}, 
          \Cref{alg:upd:line:o})
          }}   
      +
      \underbrace{
        Q(n^{1-\s}, n^{1-\s})
        }
      _
      {\text{edges of }H, \text{(\Cref{alg:upd:line:h})}}
      +
      \underbrace{
      Q_2(n, n^{1-\s})
      }
      _
      {\mathcal{O}^{(2)}_{G_x}.queryAll(V, V_H), \text{(\Cref{alg:upd:line:o2qall}})}  
     \right) 
 \end{align*}
 
\end{proof}

\section{Applying the Blackbox Reductions}
\label{sec:fullalgo}

In this section, we prove our main results for weighted graphs, that is \Cref{thm:intro:directedintweighted,thm:intro:directedweighted} stated in the introduction.
The proofs for these results stem from the techniques in \Cref{sec:weighted}, which presents blackbox reduction from dynamic shortest path to dynamic distances.
We apply these reduction to the dynamic distance data structures constructed by \cite{Sankowski05,BrandN19}.
In \Cref{sec:oracles}, we state the update and query complexities of \cite{Sankowski05,BrandN19}. 
Then in the subsequent \Cref{sec:final}, we plug the complexities into our blackbox reductions from \Cref{sec:weighted}, resulting in \Cref{thm:intro:directedintweighted,thm:intro:directedweighted}.

\subsection{Oracles}
\label{sec:oracles}

Let us derive the time complexities for the dynamic distance oracles used in our data structure. We will need the following lemma.

\begin{lemma}[{\cite{Sankowski05}, \cite[Theorem 4.2]{BrandN19}}] \label{san-bn}
    For any $\epsilon>0$, $\s > 0$ and $0\le\mu\le1$, $0\le\nu\le1$ there exists a dynamic algorithm that maintains $(1+\epsilon)$-approximate $n^\s$-bounded distances in a positive integer weighted directed graph.
    The update time is
    $$
    \tilde{O}(
    n^{\omega(1,1,\nu+\s)-\nu}/\epsilon
    +n^{1+\mu+\s}+n^{\omega(1,1,\mu)-\mu+\s}
    ).
    $$
    The query time to query any pairwise $S\times T$ distances for any $|S|=n^{\delta_1},|T|=n^{\delta_2}$ (the sets are not fixed, but given when performing the query), is
    $$
    \tilde{O}(
    n^{\omega(\delta_1,\nu+\s,\delta_2)} / \epsilon
    ).
    $$
    The exact $n^\s$-bounded distance can be queried in
    $$
    \tilde{O}(
    n^{\omega(\delta_1,\mu,\delta_2)+\s}
    ).
    $$
\end{lemma}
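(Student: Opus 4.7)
The plan is to reduce dynamic bounded-distance queries to dynamic matrix inverse maintenance, following the algebraic framework of Sankowski. First I would encode each edge weight $w$ as a monomial $x^w$ over an appropriate field, assembling an adjacency polynomial matrix $A$ truncated modulo $x^{n^\s+1}$. Then entry $(u,v)$ of $M := (I-A)^{-1}$ (as an element of the truncated polynomial ring) is a polynomial whose minimum-degree term has exponent equal to $\dist_G^{n^\s}(u,v)$, so extracting an exact $n^\s$-bounded distance amounts to reading one matrix entry. For the $(1+\epsilon)$-approximate version, I would first apply a Zwick-style rescaling so that all relevant distances fall in $[1, O(n^\s/\epsilon)]$; this caps the polynomial length at $O(n^\s/\epsilon)$ and explains the $1/\epsilon$ factor in the approximate bounds.

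Next I would deploy the standard two-parameter lazy dynamic inverse: maintain $M$ explicitly but refresh it only periodically. Edge updates are batched: every batch of up to $n^\nu$ changes is expressed as a low-rank perturbation $A \leftarrow A + UV^\top$, and the Woodbury identity lets us answer queries from the stored $M$ plus a small $n^\nu\times n^\nu$ correction. Every $n^\mu$ updates, $M$ is rebuilt from scratch. Because entries live in a polynomial ring of length $\tilde O(n^\s)$, a square matrix product $n\times n\times n$ with polynomial-valued entries becomes a rectangular product of shape $n\times n \times n^{\nu+\s}$ once the polynomial coordinate is folded into a matrix dimension; this yields the $n^{\omega(1,1,\nu+\s)-\nu}$ term after amortizing over the $n^\nu$ batch. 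The periodic full rebuild contributes $n^{\omega(1,1,\mu)-\mu+\s}$ (an $n^\s$ overhead per scalar operation), and $n^{1+\mu+\s}$ accounts for propagating the explicit representation through the $n^\mu$ batched updates.

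Queries are handled symmetrically. For an $S \times T$ pairwise approximate query with $|S|=n^{\delta_1}$, $|T|=n^{\delta_2}$, one combines the stored (slightly stale) $M$ with the pending Woodbury correction via a rectangular multiplication of shape $n^{\delta_1}\times n^{\nu+\s}\times n^{\delta_2}$, giving $\tilde O(n^{\omega(\delta_1,\nu+\s,\delta_2)}/\epsilon)$. For an \emph{exact} bounded distance query one uses the $\mu$-parameterized representation instead, and the polynomial length enters only as an additive $n^\s$ term, producing $\tilde O(n^{\omega(\delta_1,\mu,\delta_2)+\s})$.

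The main obstacle will be optimizing how the polynomial-length axis $n^\s$ interacts with the rectangular exponent $\omega(\cdot,\cdot,\cdot)$. A naive treatment costs a full $n^\s$ factor on top of $\omega$, which would not match the stated bounds. The sharper analysis used in \cite{BrandN19} folds the polynomial coordinate into one slot of $\omega$ — replacing $\nu$ by $\nu+\s$ in the update term while keeping $\mu$ (rather than $\mu+\s$) inside $\omega$ for the exact query term. Justifying this folding requires careful book-keeping of the tensor structure of $UV^\top M$ and an argument that the $n^{\nu+\s}$-dimensional axis can be materialized by a single rectangular product rather than $n^\s$ separate products. This is the step where I would concentrate the most care, since choosing which axis absorbs the polynomial length is precisely what distinguishes the Sankowski bound from the Brand--Nanongkai improvement.
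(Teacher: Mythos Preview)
The paper does not prove this lemma; it is quoted verbatim from prior work (\cite{Sankowski05} and \cite[Theorem~4.2]{BrandN19}) and used purely as a black box for the complexity bounds in \Cref{sec:fullalgo}. Your sketch is broadly faithful to how those cited works establish the result: the polynomial encoding of bounded distances via $(I-A)^{-1}$ in a truncated polynomial ring, the Zwick-style weight rounding to cap the polynomial length at $O(n^\s/\epsilon)$, the two-parameter lazy inverse maintenance with Woodbury corrections, and the key trick of folding the polynomial-length axis into one dimension of a rectangular matrix product are exactly the ingredients of the Sankowski/Brand--Nanongkai framework.

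One clarification worth making if you flesh this out: the two parameters $\mu$ and $\nu$ do not correspond to a single ``batch then rebuild'' scheme as you describe. Rather, \cite{BrandN19} maintains two separate representations simultaneously---one optimized for approximate batch queries (governed by $\nu$, with the polynomial length folded into the $\omega$ argument) and one for exact queries (governed by $\mu$, with the polynomial length appearing as an external $+\s$ in the exponent). The $n^{1+\mu+\s}$ and $n^{\omega(1,1,\mu)-\mu+\s}$ terms both come from the exact-query data structure's update cost, not from a periodic full rebuild of the approximate one. This distinction is what lets the exact query time depend on $\mu$ while the approximate query time depends on $\nu$, and it is the main subtlety your outline glosses over.
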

Using this lemma and the fact that $\omega(a, b, c + d) \leq \omega(a, b, c) + d$ we can derive the needed time-complexities for update and query operations (\Cref{tab:aprx-complexities} and \Cref{tab:exact-complexities}).
Further, to query the $uv$-distance for any pair $u,v\in V$, we can pick $S=\{u\},T=\{v\}$ and get $\tilde O(n^{\omega(0,\nu+\s,0)}/\epsilon) = \tilde O(n^{\nu+\s}/\epsilon)$ query time for the approximate $uv$-distance and $\tilde{O}(n^{\omega(0,\mu,0)+\s}) = \tilde O(n^{\mu+\s})$ query time for the exact distance.

\begin{table}[t!]
    \centering
    \renewcommand{\arraystretch}{1.5}
    \begin{tabular}{|l||c|c|c|}
        \hline
       & $\mathcal{O}^{(2)}_{G_x}$  & $\mathcal{O}^{(1+\epsilon)}_{G_x}$ & $\mathcal{O}_{G_x}$  \\
         \hline
         \hline
        $update(*)$ & $ \tilde O( \OrAprxUpdNDep /  \epsilon )$ & $\tilde O( \OrAprxUpdNDep /  \epsilon^2 )$ & $\tilde O(n^{1+\mu+\s}/  \epsilon + n^{\omega(1, 1, \mu)-\mu + \s})$ \\
        \hline
        $query(u, v)$ & &  & $\tilde O(n^{\s + \mu} / \epsilon)$ \\
        \hline
        $queryAll(V_H, V_H)$ 
        & 
        & $\tilde O( \OrAprxQueryHHNDep / \epsilon^2  )$  & \\
        \hline
        $queryAll(V, V_H)$ & $\tilde O( \OrAprxQueryVHNDep /  \epsilon  )$  &  $\tilde O(\OrAprxQueryVHNDep  /  \epsilon^2  )$ & \\
        \hline
    \end{tabular}
    \caption{Complexities of $4n^\s / \epsilon$-bounded distance oracles on any graph $G_x$. Complexities that are not used in the final analysis are omitted. Here complexities of $\mathcal{O}^{(2)}_{G_x}$ scale in $\epsilon$, because we maintain $4n^\s / \epsilon$-bounded distances.}
    \label{tab:aprx-complexities}
~

    \centering
    \renewcommand{\arraystretch}{1.5}
    \begin{tabular}{|l||c|c|}
        \hline
       & $\mathcal{O}^{(2)}_{G_x}$  & $\mathcal{O}_{G_x}$  \\
         \hline
         \hline
        $update(*)$ & $\tilde O(n^{\omega(1, 1, \nu + \s)-\nu} {W} )$ &  $\tilde O(\OrExactUpdNDep W)$ \\
        \hline
        $query(*)$ &   & $\tilde O(n^{\s + \mu} {W})$ \\
        \hline
        $queryAll(V_H, V_H)$ 
        & 
        & 
        $\tilde O(n^{\omega(1-\s,1-\s,\mu ) + \s} {W})$  \\
        \hline
        $queryAll(V, V_H)$ & $\tilde O(n^{\omega(1,1-\s,\nu + \s) } {W})$  & \\
        \hline
    \end{tabular}
    \caption{Complexities of ${W}n^\s$-distance bounded oracles on graph $G$ with integer weights in $[1, W]$}
    \label{tab:exact-complexities}
\end{table}

\subsection{Final Complexities}
\label{sec:final}

Plugging, the complexities of \Cref{san-bn} into the blackbox reduction \Cref{thm:aprx-dir-gen} for approximate directed shortest paths, we obtain the following \Cref{thm:aprx-dir}.
\begin{corollary}[Approximate, directed]
    \label{thm:aprx-dir}
    For any $0\le \s\le 1$, $0\le\mu\le1$, $0\le\nu\le1$, $\epsilon>0$,
    there exists a fully dynamic algorithms that maintain $(1+\epsilon)$-approximate shortest paths
    for directed graphs with real edge weights in $[1, W]$.
    The preprocessing time is $O(n^{\omega+\s}\epsilon^{-1} \log W)$, 
    the update time for an edge insertion or deletion is
    $\tilde{O}(\thmAprxDirUpd)$ and
    querying the shortest path for any $s,t\in V$ takes
    $\tilde{O}(\thmAprxDirQuery
    )$ time.    
    The dynamic algorithm is randomized and correct w.h.p.~with one-sided error and works against an adaptive adversary.

\end{corollary}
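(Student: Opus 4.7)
The plan is to derive \Cref{thm:aprx-dir} as a direct instantiation of the blackbox reduction \Cref{thm:aprx-dir-gen} with the three families of distance oracles $\calO^{(2)}_{G_x},\calO^{(1+\epsilon)}_{G_x},\calO_{G_x}$ provided by \Cref{san-bn}, whose update/query complexities on an integer-weighted $(A,B_x)$-rounded graph with $A=\Theta(n^\s/\epsilon)$ are collected in \Cref{tab:aprx-complexities}. Since \Cref{thm:aprx-dir-gen} already handles the reduction from real weights to the $O(\log W)$ rounded graphs, the only remaining work is to substitute the oracle costs and simplify.

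Concretely, first I would match each abstract cost in \Cref{thm:aprx-dir-gen} to a row of \Cref{tab:aprx-complexities}: the update cost $U_2+U_{1+\epsilon}+U$ contributes the $n^{\omega(1,1,\nu+\s)-\nu}\epsilon^{-2}$ and $n^{1+\mu+\s}+n^{\omega(1,1,\mu)-\mu+\s}$ terms; the auxiliary-graph construction cost $Q_{1+\epsilon}(n^{1-\s},n^{1-\s})$ gives $n^{\omega(1-\s,1-\s,\nu+\s)}\epsilon^{-2}$; and $Q_2(n,n^{1-\s})$ gives $n^{\omega(1,1-\s,\nu+\s)}\epsilon^{-1}$. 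For the query, $Q_{1+\epsilon}(1,n^{1-\s})$ and $Q_2(1,n)$ yield $n^{\nu+\s}\epsilon^{-2}+n^{1+\nu+\s}\epsilon^{-1}$, the Dijkstra step on $H$ contributes $n^{2-\s}\epsilon^{-2}$ (absorbing the $n^{2-2\s}$ term into $n^{2-\s}$), and $n\cdot Q(1,1)$ contributes $n^{1+\s+\mu}$. Each product is then multiplied by the $O(\log W)$ factor that tracks the number of rounded graphs, and I fold the $\epsilon^{-1}$ factors into a single $\epsilon^{-2}$ using $\epsilon \le 1$.

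Because several of these matrix-multiplication-exponent terms dominate others under the optimal choice of $\s,\mu,\nu$, I would then simplify by keeping only the surviving terms in the statement of \Cref{thm:aprx-dir}, namely the update time $\tilde{O}(\thmAprxDirUpd)$ and the query time $\tilde{O}(\thmAprxDirQuery)$. The preprocessing bound $O(n^{\omega+\s}\epsilon^{-1}\log W)$ is obtained from $\tilde O(\log W\cdot(P_2+P_{1+\epsilon}+P))$, where each $P_*$ is dominated by computing an initial distance table on each $G_x$, which is at most $\tilde O(n^{\omega+\s}/\epsilon)$ via one rectangular multiplication following the construction in \cite{BrandN19}.

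The only subtle step is verifying that the randomized adaptive-adversary guarantees from \Cref{thm:aprx-dir-gen} are preserved under this substitution: the one-sided error w.h.p.~and resampling of the hitting set used to build $H$ are both inherited as long as \Cref{san-bn} is treated as a blackbox whose randomness is independent of the queries. I would state this observation explicitly and conclude, since no non-routine calculation is required beyond bookkeeping of exponents, and the main potential pitfall is only to make sure the $\epsilon$ exponents line up (in particular that the $Q_{1+\epsilon}$ terms carry $\epsilon^{-2}$ while $Q_2$ and $Q$ carry $\epsilon^{-1}$ or no $\epsilon$, which is absorbed into the stated $\epsilon^{-2}\log W$ factor).
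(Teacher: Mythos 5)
Your approach is exactly the paper's: instantiate \Cref{thm:aprx-dir-gen} with the oracles from \Cref{san-bn}, read off their costs from \Cref{tab:aprx-complexities}, and fold the $\epsilon^{-1}$ factors and the dominated matrix-multiplication term $n^{\omega(1-\s,1-\s,\nu+\s)}$ (dominated because $\omega$ is monotone, not merely under an optimal parameter choice) into the stated bound. Two small bookkeeping slips are worth flagging, though neither affects the conclusion since the mis-stated terms are dominated anyway: $Q_{1+\epsilon}(1,n^{1-\s})$ costs $\tilde O(n^{\omega(0,\nu+\s,1-\s)}\epsilon^{-2}) = \tilde O(n^{1+\nu}\epsilon^{-2})$ rather than $n^{\nu+\s}\epsilon^{-2}$, and the surviving $n^{2-\s}$ query term originates from constructing the plausible sets $P_i$ (\Cref{alg:q:line:plausible}), while Dijkstra on $H$ only contributes the absorbed $n^{2-2\s}$.
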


Balancing the terms with parameters 
$$\s \approx \directedS, \nu \approx \directedNu, \mu \approx  \directedMu$$
the preprocessing time complexity is $\tilde O(n^{\directedPre} \log W)$, query time complexity is $\tilde O(n^{\directedQuery} \log W)$ and update time complexity is $\tilde O(n^{\directed} \log W)$ 
by current bounds on matrix multiplication \cite{Williams12,Gall14,AlmanW21,DuanWZ22,GallU18}\footnote{%
Parameters balanced via \cite{complexity}. This specific result is available \href{https://www.ocf.berkeley.edu/~vdbrand/complexity/index.php?terms=\%23\%20query\%2C\%20n\%5E0.075\%20faster\%20than\%20update\%0A2\%20-\%20s\%20\%2B\%200.075\%0A1\%20\%2B\%20nu\%20\%2B\%20s\%20\%2B\%200.075\%0A1\%20\%2B\%20mu\%20\%2B\%20s\%20\%2B\%200.075\%0A\%23\%20update\%0Aomega(1\%2C\%201\%2C\%20nu\%20\%2B\%20s)\%20-\%20nu\%0A1\%20\%2B\%20mu\%20\%2B\%20s\%0Aomega(1\%2C\%201\%2C\%20mu)\%20-\%20mu\%20\%2B\%20s\%0Aomega(1\%2C\%201\%20-\%20s\%2C\%20nu\%20\%2B\%20s)&a=1}{here}.}.
These are precisely the complexities stated in \Cref{thm:intro:directedweighted} for directed graphs.

\begin{proof}[Proof of \Cref{thm:aprx-dir}]
 
Using 
\begin{align*}
    Q_2(n^{1-\s}, n) &= \tilde O( \OrAprxQueryVHNDep /  \epsilon  ), 
    \\
    Q_{1+\epsilon}(n^{1-\s} ,n^{1-\s}) &= \tilde O( \OrAprxQueryHHNDep / \epsilon^2  ),
    \\
    Q( 1, 1) &= \tilde O(n^{\s + \mu} / \epsilon), 
    \\
    U_2 &= \tilde O( \OrAprxUpdNDep /  \epsilon ), 
    \\
    U_{1+\epsilon} &= \tilde O( \OrAprxUpdNDep /  \epsilon^2 ), 
    \\
    U &= \tilde O(n^{1+\mu+\s}/  \epsilon + n^{\omega(1, 1, \mu)-\mu + \s})
\end{align*}
we have the following complexities.

The preprocessing time is $O(n^{\omega+\s}\epsilon^{-1} \log W)$. 
By \Cref{thm:aprx-dir-gen}, the update time is
\begin{align*}
    &~\tilde O\left(\log W \cdot 
        \left(
            U_{2} + U_{1+\epsilon} + U
            +
            Q_{1+\epsilon}(n^{1-\s}, n^{1-\s}) 
            + 
            Q_{2}(n^{1-\s}, n) 
        \right)
    \right)\\
    =&~
    \tilde O\left(\thmAprxDirUpd
    \right)
\end{align*}
and the query time is
\begin{align*}
        &~
        \tilde O\left( \log W \cdot 
            \right(
                Q_{1+\epsilon}(1, n^{1-\s}) 
                + 
                Q_{2}(1, n) 
                +
                n^{2 - \s} 
                +
                n \cdot Q(1, 1)
            \left)
        \right)\\
        =&~
        \tilde O \left(
        \thmAprxDirQuery
        \right)
\end{align*}
\end{proof}

Using the blackbox reduction \Cref{thm:aprx-undir-gen} for approximate shortest paths on undirected graphs, together with the dynamic distance data structure from \Cref{san-bn}, we obtain \Cref{thm:aprx-undir}.

\begin{corollary}[Approximate, undirected]
\label{thm:aprx-undir}

    For any $0\le \s\le 1$, $0\le\mu\le1$, $0\le\nu\le1$, $\epsilon>0$,
    there exists a fully dynamic algorithm that maintain $(1+\epsilon)$-approximate shortest paths
    for undirected graphs with real edge weights in $[1, W]$.
    The preprocessing time is $O(n^{\omega+\s}\epsilon^{-1} \log W)$,
    the update time for an edge insertion or deletion is
    $\tilde{O}(\thmAprxUndirUpd)$,
    and querying the shortest path for any $s,t\in V$ takes
    $\tilde{O}(\thmAprxUndirQuery)$ time.    
    The dynamic algorithm is randomized and correct w.h.p., and works against an adaptive adversary.
\end{corollary}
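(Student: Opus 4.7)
The plan is to follow the same recipe that just proved \Cref{thm:aprx-dir}, but invoke the undirected blackbox reduction \Cref{thm:aprx-undir-gen} in place of \Cref{thm:aprx-dir-gen}. So first I would instantiate \Cref{thm:aprx-undir-gen} with the dynamic distance oracles of \Cref{san-bn}; this only requires $\mathcal{O}_{G_x}$ and $\mathcal{O}^{(1+\epsilon)}_{G_x}$ (no $2$-approximate oracle is needed in the undirected case, matching the fact that \Cref{alg:ssp-undir} does not construct plausible sets), so the relevant rows are taken from \Cref{tab:aprx-complexities}.

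Next I would substitute the concrete oracle bounds
\begin{align*}
U_{1+\epsilon} &= \tilde O(n^{\omega(1,1,\nu+\s)-\nu}/\epsilon^2),\\
U &= \tilde O(n^{1+\mu+\s}/\epsilon + n^{\omega(1,1,\mu)-\mu+\s}),\\
Q_{1+\epsilon}(n^{1-\s}, n^{1-\s}) &= \tilde O(n^{\omega(1-\s,1-\s,\nu+\s)}/\epsilon^2),\\
Q_{1+\epsilon}(1, n^{1-\s}) &= \tilde O(n^{1+\nu}/\epsilon^2),\\
Q(1,1) &= \tilde O(n^{\s+\mu}/\epsilon),
\end{align*}
into the update bound $\tilde O(\log W \cdot (U_{1+\epsilon} + U + Q_{1+\epsilon}(n^{1-\s}, n^{1-\s})))$ and the query bound $\tilde O(\log W \cdot (Q_{1+\epsilon}(1, n^{1-\s}) + n^{2-2\s} + n \cdot Q(1,1)))$ promised by \Cref{thm:aprx-undir-gen}. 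For the off-shape query $Q_{1+\epsilon}(1,n^{1-\s})$ I would use the inequality $\omega(a,b,c+d)\le \omega(a,b,c)+d$ cited right after \Cref{san-bn}, giving $\omega(0,\nu+\s,1-\s)\le (\nu+\s)+(1-\s) = 1+\nu$. After this arithmetic, the update time collects to $\tilde O(\thmAprxUndirUpd)$ and the query time to $\tilde O(\thmAprxUndirQuery)$, exactly as stated. The preprocessing cost is the $\tilde O(n^{\omega+\s}\epsilon^{-1}\log W)$ inherited from building the $O(\log(nW))$ rounded copies $G_x$ and preprocessing each oracle on them.

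Correctness and adaptivity transfer verbatim: the reduction \Cref{thm:aprx-undir-gen} already certifies that the reported path is $(1+O(\epsilon))$-approximate w.h.p.\ against an adaptive adversary, and standard rescaling of $\epsilon$ absorbs the constant factor. No new obstacle arises; the only place requiring any thought is verifying that the term $n^{\omega(1-\s,1-\s,\nu+\s)}/\epsilon^2$ coming from $Q_{1+\epsilon}(n^{1-\s},n^{1-\s})$ stays in the update line rather than the query line, which is exactly the structural difference between \Cref{thm:aprx-undir-gen} and \Cref{thm:aprx-dir-gen} (the undirected reduction precomputes fewer queries at update time because it avoids the plausible-set filtering). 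Finally, to obtain the $n^{\undirected}$ and $n^{\undirectedPre}$ figures advertised in \Cref{thm:intro:undirectedweighted}, I would balance the exponents over $\s,\mu,\nu$ using the current rectangular matrix multiplication bounds, analogous to the choice $\s\approx \directedS, \nu\approx \directedNu, \mu\approx \directedMu$ made for the directed case.
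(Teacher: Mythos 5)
Your proposal is correct and follows the same route as the paper: instantiate the undirected blackbox reduction \Cref{thm:aprx-undir-gen} with the oracles of \Cref{san-bn}, substitute the complexities from \Cref{tab:aprx-complexities}, and read off the stated bounds. The one place you are more explicit than the paper's proof is the derivation $Q_{1+\epsilon}(1,n^{1-\s}) = \tilde O(n^{\omega(0,\nu+\s,1-\s)}/\epsilon^2) = \tilde O(n^{1+\nu}/\epsilon^2)$ via $\omega(a,b,c+d)\le\omega(a,b,c)+d$, which the paper leaves implicit; your observation that dropping the $\mathcal{O}^{(2)}$ oracle and the plausible-set construction is what distinguishes this from \Cref{thm:aprx-dir} is also accurate.
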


Balancing the terms with parameters 
$$\s \approx 0.1909, \nu \approx 0.3376, \mu \approx  0.5286$$ the preprocessing time complexity is $\tilde O(n^{\undirectedPre})$
the query and update time complexities are both $\tilde O(n^{\undirected} \log W)$
by current bounds on matrix multiplication \cite{Williams12,Gall14,AlmanW21,DuanWZ22,GallU18}\footnote{
Parameters balanced via \cite{complexity}. This specific result is available 
\href{https://www.ocf.berkeley.edu/~vdbrand/complexity/index.php?terms=2\%20-\%202s\%0A1\%20\%2B\%20nu\%0A1\%20\%2B\%20mu\%20\%2B\%20s\%0A\%0Aomega(1\%2C\%201\%2C\%20nu\%20\%2B\%20s)\%20-\%20nu\%0A1\%20\%2B\%20mu\%20\%2B\%20s\%0Aomega(1\%2C\%201\%2C\%20mu)\%20-\%20mu\%20\%2B\%20s\%0Aomega(1-s\%2C\%201\%20-\%20s\%2C\%20nu\%20\%2B\%20s)&a=1}{here}.
}. This is the result stated in \Cref{thm:intro:undirectedweighted} for undirected graphs.

\begin{proof}[Proof of \Cref{thm:aprx-undir}]
Using 
\begin{align*}
    Q_{1+\epsilon}(n^{1-\s} ,n^{1-\s}) &= \tilde O( \OrAprxQueryHHNDep / \epsilon^2  ),
    \\
    Q( 1, 1) &= \tilde O(n^{\s + \mu} / \epsilon), 
    \\
    U_{1+\epsilon} &= \tilde O( \OrAprxUpdNDep /  \epsilon^2 ), 
    \\
    U &= \tilde O(n^{1+\mu+\s}/  \epsilon + n^{\omega(1, 1, \mu)-\mu + \s})
\end{align*}
we have the following complexities.

The preprocessing time is $O(n^{\omega+\s}\epsilon^{-1} \log W)$. 
By \Cref{thm:aprx-undir-gen}, the update time is
\begin{align*}
    &~\tilde O\left(\log W \cdot 
        \left(
            U_{1+\epsilon} + U
            +
            Q_{1+\epsilon}(n^{1-\s}, n^{1-\s}) 
        \right)
    \right)\\
    =&~
    \tilde O\left(\thmAprxUndirUpd
    \right)
\end{align*}
and the query time is
\begin{align*}
        &~
        \tilde O\left( \log W \cdot 
            \right(
                 Q_{1+\epsilon}(1, n^{1-\s}) + n^{2 - 2\s} + n \cdot Q(1, 1)
            \left)
        \right)\\
        =&~
        \tilde O \left(
        \thmAprxUndirQuery
        \right)
\end{align*}
\end{proof}

At last, we obtain a result for exact shortest paths on directed graphs.
For this we use the reduction of \Cref{thm:exact-dir-gen} together with the dynamic distance results from \Cref{san-bn}.

\begin{corollary}[Exact, directed]  \label{thm:exact-dir}
    For any $0\le \s\le 1$, $0\le\mu\le1$, $0\le\nu\le1$,
    there exists a fully dynamic algorithm that maintains the exact shortest paths
    for directed graphs with integer edge weights in $[1, W]$.
    The preprocessing time is $O(n^{\omega+\s}{W}\log W)$, 
    the update time for an edge insertion or deletion is
    $\tilde{O}(\thmExactDirUpd )$, and
    querying the shortest path for any $s,t\in V$ takes
    $\tilde{O}(\thmExactDirQuery 
    )$ time.
    The dynamic algorithm is randomized and correct w.h.p., and works against an adaptive adversary.
\end{corollary}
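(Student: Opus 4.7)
My plan is to prove \Cref{thm:exact-dir} by instantiating the blackbox reduction \Cref{thm:exact-dir-gen} with the oracles from \Cref{san-bn}. Concretely, I would realize the exact $Wn^\s$-bounded distance oracle $\mathcal{O}$ and the $2$-approximate oracle $\mathcal{O}^{(2)}$ using the Sankowski and v.d.Brand--Nanongkai data structure with inner matrix-multiplication exponents $\mu$ and $\nu$ respectively. The resulting per-oracle update and query complexities are already consolidated in \Cref{tab:exact-complexities}; they follow from \Cref{san-bn} after applying the symmetry of $\omega$ and the inequality $\omega(a,b,c+\s)\le\omega(a,b,c)+\s$ in order to absorb the $Wn^\s$ distance cap into the exponents, picking up only an extra factor of $W$ from the bit-length of the entries. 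Both oracles are randomized, correct w.h.p., and robust against an adaptive adversary, so the resulting algorithm inherits all of these properties.

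Next, I would substitute the table entries into each of the three bounds of \Cref{thm:exact-dir-gen}. The update bound $\tilde O(U_2 + U + Q(n^{1-\s},n^{1-\s}) + Q_2(n^{1-\s},n))$ expands term by term: $U_2$ contributes $n^{\omega(1,1,\nu+\s)-\nu}W$; $U$ contributes the two terms $n^{1+\mu+\s}W$ and $n^{\omega(1,1,\mu)-\mu+\s}W$; $Q(n^{1-\s},n^{1-\s})$ contributes $n^{\omega(1-\s,1-\s,\mu)+\s}W$; and $Q_2(n^{1-\s},n)$, using the symmetry of $\omega$, contributes $n^{\omega(1,1-\s,\nu+\s)}W$. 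These five summands are precisely $\thmExactDirUpd$. The preprocessing $\tilde O(P_2 + P)$ is bounded by $\tilde O(n^{\omega+\s}W\log W)$ since each oracle initializes with essentially a single rectangular matrix multiplication over integer entries of magnitude $O(Wn^\s)$.

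For the query bound $\tilde O(Q(1,n^{1-\s}) + Q_2(1,n) + n^{2-\s} + n\cdot Q(1,1)\log W)$, three of the four summands match the claimed formula directly: $Q_2(1,n) = \tilde O(n^{1+\nu+\s}W)$ since $\omega(0,\nu+\s,1) = 1+\nu+\s$ for $\nu+\s\le1$; $n\cdot Q(1,1)\log W = \tilde O(n^{1+\s+\mu}W)$; and $n^{2-\s}$ is the cost of running Dijkstra on $H$. The remaining summand $Q(1,n^{1-\s})$ is not literally in $\thmExactDirQuery$, and the one verification I would carry out is absorbing it: the trivial upper bound $Q(1,n^{1-\s}) \le n^{1-\s}\cdot Q(1,1) = \tilde O(n^{1+\mu}W) \le \tilde O(n^{1+\s+\mu}W)$ shows that it is dominated by the existing $n^{1+\s+\mu}W$ term.

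I do not expect any real obstacle, because all of the combinatorial machinery, the choice of hitting set, the pre-filtering to plausible vertices, the segment-by-segment reconstruction via \Cref{alg:ssp-dir}, and the adaptive-adversary guarantee, is already encapsulated in \Cref{thm:exact-dir-gen}. What remains for \Cref{thm:exact-dir} is precisely the exponent arithmetic above; the only mildly delicate point is the absorption of $Q(1,n^{1-\s})$, and that follows immediately from the trivial vector-of-queries bound.
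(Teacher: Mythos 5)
Your proposal is correct and takes essentially the same approach as the paper: instantiate \Cref{thm:exact-dir-gen} with the oracle complexities from \Cref{san-bn} (tabulated in \Cref{tab:exact-complexities}) and carry out the exponent arithmetic, with $Q(1,n^{1-\s})$ absorbed into $n^{1+\s+\mu}W$. One small misattribution: in \Cref{thm:exact-dir-gen}'s query bound, $n^{2-\s}$ is the cost of constructing the plausible sets $P_i$ (the Dijkstra run on $H$ only costs $n^{2-2\s}$, which is dominated); this does not affect the arithmetic.
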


Balancing the terms with parameters 
$$\s \approx \exactS, \nu \approx \exactNu, \mu \approx \exactMu$$
the preprocessing time complexity is $\tilde O(n^{\exactPre})$, query time complexity is $\tilde O(n^{\exactQuery})$ and update time complexity is $\tilde O(n^{\exact})$ by current bounds on matrix multiplication \cite{Williams12,Gall14,AlmanW21,DuanWZ22,GallU18}\footnote{%
Parameters balanced via \cite{complexity}. This specific result is available 
\href{https://www.ocf.berkeley.edu/~vdbrand/complexity/index.php?terms=2\%20-\%20s\%2B0.076\%0A1\%20\%2B\%20nu\%20\%2B\%20s\%2B0.076\%0A1\%20\%2B\%20mu\%20\%2B\%20s\%2B0.076\%0A\%0Aomega(1\%2C\%201\%2C\%20nu\%20\%2B\%20s)\%20-\%20nu\%0A1\%20\%2B\%20mu\%20\%2B\%20s\%0Aomega(1\%2C\%201\%2C\%20mu)\%20-\%20mu\%20\%2B\%20s\%0Aomega(1\%2C\%201\%20-\%20s\%2C\%20nu\%20\%2B\%20s)\%0Aomega(1-s\%2C1-s\%2Cmu)\%2Bs&a=1}{here}.}.
This is precisely \Cref{thm:intro:directedintweighted} from the intro.

 \begin{proof}[Proof of \Cref{thm:exact-dir}]

Using 
\begin{align*}
    Q_2(n^{1-\s}, n) &= \tilde O( \OrAprxQueryVHNDep  W ), 
    \\
    Q( 1, 1) &= \tilde O(n^{\s + \mu} W), 
    \\
    U_2 &= \tilde O( \OrAprxUpdNDep W ), 
    \\
    U &= \tilde O(n^{1+\mu+\s} W + n^{\omega(1, 1, \mu)-\mu + \s})
\end{align*}
we have the following complexities.

The preprocessing time is $O(n^{\omega+\s}\epsilon^{-1} \log W)$. 
By \Cref{thm:exact-dir-gen}, the update time is
\begin{align*}
    &~\tilde O\left(\log W \cdot 
        \left(
            U_{2} + U
            +
            Q(n^{1-\s}, n^{1-\s}) 
            + 
            Q_{2}(n^{1-\s}, n) 
        \right)
    \right)\\
    =&~
    \tilde O\left(\thmExactDirUpd
    \right)
\end{align*}
and the query time is
\begin{align*}
        &~
        \tilde O\left( \log W \cdot 
            \right(
                Q(1, n^{1-\s}) 
                +
                Q_{2}(1, n) 
                +
                n^{2 - \s} 
                +
                n \cdot Q(1, 1) \log W
            \left)
        \right)\\
        =&~
        \tilde O \left(
        \thmExactDirQuery
        \right)
\end{align*}
\end{proof}

\section{Path Reporting on Unweighted Graphs}
\label{sec:unweighted}

In this section, we prove our results on unweighted undirected graphs, i.e.~\Cref{thm:intro:unweighted:st,thm:intro:unweighted:sssp}.
We show that we can deterministically maintain approximate shortest paths in subquadratic time.
The dynamic algorithms internally use fast rectangular matrix multiplication and their complexity can be parameterized by $\rho$, where $\rho$ is the solution to $\omega(1,1,\rho) = 1+2\rho$. Currently, $\rho \approx 0.529$ using the upper bounds on rectangular matrix multiplication by Le Gall and Urrutia \cite{GallU18}.
The exact statements proven in this section are given by \Cref{thm:unweighted}.

\begin{theorem}[Undirected, Unweighted, Approximate]\label{thm:unweighted}
There exist the following fully dynamic algorithms that maintain $(1+\epsilon)$-approximate shortest paths
for unweighted undirected graphs.
\begin{enumerate}[nosep]
    \item \label{item:unweighted:st}
    A dynamic algorithm with 
    $O(n^{1+\rho}\epsilon^{-2}\log\epsilon^{-1})$ update and query time 
    to return a $(1+\epsilon)$-approximate $st$-shortest path for any $s,t\in V$. 
    The preprocessing time is $O(n^{\omega}\epsilon^{-2}\log\epsilon^{-1})$.
    \item \label{item:unweighted:singlesource}
    A dynamic algorithm with $O(n^{(3+\rho)/2}\epsilon^{-2}\log\epsilon^{-1})$ update and query time 
    to return a $(1+\epsilon)$-approximate single source shortest paths tree for any $s\in V$.
    The preprocessing time is $O(n^{\omega+(1-\rho)/2}\epsilon^{-2}\log\epsilon^{-1})$.
    \item \label{item:unweighted:additive}
    If we also allow for $+4$ additive error, then there is a dynamic algorithm with 
    $O(n^{1+\rho}\epsilon^{-2}\log\epsilon^{-1})$ update and query time to return an approximate single source shortest path tree $T$ for any $s\in V$. Here $\dist_T(s,v) \le (1+\epsilon)\dist_G(s,t)+4$. 
    The preprocessing time is $O(n^{\omega}\epsilon^{-2}\log\epsilon^{-1})$. 
\end{enumerate}
All these dynamic algorithms are deterministic.
\end{theorem}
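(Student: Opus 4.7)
The plan is to derive all three parts of \Cref{thm:unweighted} by instantiating the blackbox reduction \Cref{thm:aprx-undir-gen} with the deterministic $(1+\epsilon)$-approximate distance oracles for unweighted undirected graphs from \cite{BrandFN22}, and by derandomizing the hitting set in \Cref{construct-h}. Since the input is unweighted, the integer-weight rounding of \Cref{AB-min-dist} collapses to the single copy $G_0 = G$, so the $\log W$ overhead disappears; the only remaining source of randomness in the reduction lies in the uniformly sampled set $R$ used to split long shortest paths into segments of at most $n^\s$ hops.

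For item \ref{item:unweighted:st} I would plug the $st$-pair oracle of \cite{BrandFN22} (update time $O(n^{1+\rho}\epsilon^{-2})$, pair-query time $O(n^\rho\epsilon^{-2})$) into \Cref{thm:aprx-undir-gen}. By \Cref{not-too-many-total:undir} the reconstruction uses $\tilde O(n)$ distance queries, which together with the oracle's update time balance at $O(n^{1+\rho}\epsilon^{-2}\log\epsilon^{-1})$ after a suitable choice of $\s$. For item \ref{item:unweighted:singlesource} I would use the more expensive single-source variant with update time $O(n^{(3+\rho)/2}\epsilon^{-2})$, and reconstruct a full approximate shortest path tree from $s$ by applying \Cref{alg:ssp-undir} once per target vertex, sharing the oracle computations across the $n$ invocations. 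For item \ref{item:unweighted:additive}, the permitted $+4$ additive slack allows the exact per-step equality check in \Cref{algo:undir:line:check} to be relaxed to an approximate one; this lets us keep the cheaper $n^{1+\rho}\epsilon^{-2}$ update time while still producing a complete tree, with the small additive error absorbing $O(1)$ rounding loss per segment boundary. The ``no-shortcut'' bound \Cref{thm:ssp-undir} needs only minor modification to accommodate the relaxed check, since its proof is purely combinatorial and uses only triangle-inequality-style reasoning.

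The main obstacle is derandomizing the hitting set $R$. A random subset of size $\tilde O(n^{1-\s})$ hits every path of length $n^\s$ w.h.p., but this randomness is incompatible with a deterministic guarantee, and the adaptive adversary amplifies the need to avoid it. I expect to resolve this by exploiting BFS-layer structure in unweighted graphs: the deterministic distance oracle of \cite{BrandFN22} effectively provides access to the layer sets of any BFS rooted at a queried vertex, so a deterministic path-hitting set can be constructed by scanning level sets of maintained distance estimates, or by replacing the hitting-set decomposition entirely with a deterministic partition of a shortest path into blocks of $n^\s$ consecutive hops identifiable directly from the oracle's output. Combining this derandomization with the oracle substitutions above yields the three deterministic bounds claimed in \Cref{thm:unweighted}, and in particular recovers \Cref{thm:intro:unweighted:st} and \Cref{thm:intro:unweighted:sssp} stated in the introduction.
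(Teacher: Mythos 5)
Your proposal routes through the general blackbox reduction \Cref{thm:aprx-undir-gen}, whereas the paper takes a completely different path for \Cref{thm:unweighted}: it maintains the $(1+\epsilon,4)$-emulator of \cite{BrandFN22} directly (\Cref{lem:unweighted:emulator_maintenance}), runs Dijkstra on the emulator, and then patches the resulting tree by replacing emulator edges by genuine short paths in $G$ (exploiting that only $|S|=O(\sqrt{n\log n})$ such edges appear in the tree and that each corresponds to a path of $O(1/\epsilon)$ hops). Items 1 and 2 are then obtained by layering on top of this a small-depth exact shortest-path reconstruction using the bounded single-source oracle of \Cref{lem:unweighted:boundedsinglesource} and, for single-source queries, the auxiliary-graph trick of \cite{KarczmarzMS22}.

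Your route has a concrete gap that I do not see how to close. The reduction \Cref{thm:aprx-undir-gen} requires an \emph{exact} $n^{\s}$-bounded distance oracle $\calO_{G_x}$, since \Cref{alg:ssp-undir} checks an equality $\len(v_{last},w)+\dist(w,h_{i+1})=\dist(v_{last},h_{i+1})$ at every step and its correctness proof relies on that equality being tight. The only deterministic exact bounded-distance primitive cited for unweighted undirected graphs is \Cref{lem:unweighted:boundedsinglesource}, whose update time scales as $n^{1+\rho}h^{2}$; with $h = n^{\s}$ this overshoots $n^{1+\rho}$ by a factor $n^{2\s}$, destroying the claimed bound. There is simply no deterministic substitute for the exact oracle at the hop bound the reduction needs, and \cite{BrandFN22}'s emulator only yields $(1+\epsilon,4)$-approximate distances, not exact ones. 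Your suggested fix for item~3, namely relaxing the exactness check in \Cref{algo:undir:line:check} and ``absorbing $O(1)$ rounding loss per segment boundary'' into the $+4$ slack, does not work either: a shortest path can have $\Omega(n^{1-\s})$ segments, so even $O(1)$ additive error per segment would accumulate to $\Omega(n^{1-\s})$, far exceeding $+4$. Finally, your derandomization of the hitting set $R$ is left as an expectation (``I expect to resolve this by exploiting BFS-layer structure'') rather than a construction; the paper avoids this issue entirely because the set $S$ in the emulator of \cite{BrandFN22} is already a deterministic dominating set of the high-degree vertices, and no random hitting set is ever invoked.

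A secondary issue: for item~2 you propose ``applying \Cref{alg:ssp-undir} once per target vertex,'' but even assuming the oracle issue were resolved, this would incur $n$ independent reconstructions with no shared structure, and there is no argument that the total cost stays at $n^{(3+\rho)/2}$. The paper instead spends $p$ parallel copies of the bounded single-source oracle on modified graphs $G_1,\dots,G_p$ to locate, for each $t$, a predecessor of $t$ on a shortest $st$-path, balancing $p$ against the per-target scan cost; this is what produces the exponent $(3+\rho)/2$.
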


Our algorithms build on the deterministic dynamic algorithm by v.d.Brand, Forster and Nazari, \cite{BrandFN22} which could maintain approximate distances but not the respective paths.
Their dynamic algorithm maintains a $(1+\epsilon, 4)$-emulator of the input graph $G$. Such an emulator is a graph $H$ on the same vertex set as $G$ with the property $\dist_G(s,t) \le \dist_H(s,t) \le (1+\epsilon)\dist_G(s,t) + 4$ for all $s,t \in V$. Note that $H$ is not a subgraph of $G$, i.e.~it can contain edges that do not exist in $G$.
So while running Dijkstra's algorithm on $H$ returns good approximations of the distances in $G$, it does not return approximately the shortest paths in $G$.
Our dynamic algorithms from \Cref{thm:unweighted} work by replacing edges from $H$ by short paths in $G$. This way, we can transform the shortest path in $H$ into an approximately shortest path in $G$.

We use the following \Cref{lem:unweighted:emulator_maintenance} from \cite{BrandFN22} to maintain the emulator $H$.

\begin{algorithm2e}
\caption{Emulator construction from \cite{BrandFN22}\label{alg:unweighted:emulator}.} 
\SetKwProg{Proc}{procedure}{}{}
\Proc{$\textsc{Emulator}(G=(V,E), S \subset V)$}{
$S$ is a subset such that each vertex $v\in V$ with $\deg_G(v) > \sqrt{n\log n}$ has a neighbor in $S$.\\
$H = (V, E')$ will be the emulator. Initialize $E' = \emptyset$. \\
Add each $\{u,v\} \in E$ to $E'$ if $\min\{\deg_G(u), \deg_G(v)\} \le \sqrt{n}$ \\
For each pair $u,v\in S$ with $\dist_G(u,v) \le \lceil 4/\epsilon\rceil+2$, add edge $\{u,v\}$ to $E'$ with edge cost $\dist_G(u,v)$ \label{line:shortdist}\\
\Return $H = (V,E')$
}
\end{algorithm2e}

\begin{lemma}[{\cite[Section 3.2 for $d=\sqrt{n}$]{BrandFN22}}]\label{lem:unweighted:emulator_maintenance}
Given an unweighted graph $G = (V, E)$, $0 < \epsilon < 1$, 
we can deterministically maintain a $(1+\epsilon, 4)$-emulator with size $O(n^{3/2}\sqrt{\log n})$. 
The worst-case update time is $O((n^{\omega(1,1,\mu)-\mu} +n^{1+\mu} )\epsilon^{-2} \log \epsilon^{-1})$ for any $0 \le \mu \le 1$ and preprocessing time is $O(n^{\omega}
\epsilon^{-2} \log \epsilon^{-1})$.

The dynamic algorithm internally runs \Cref{alg:unweighted:emulator} to construct the emulator
and maintains pairwise $(\lceil4/\epsilon\rceil+2)$-bounded distances of $S\times V$ for the set $S\subset V$ used in \Cref{alg:unweighted:emulator}.
\end{lemma}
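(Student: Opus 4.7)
The plan is to present and verify the construction from \cite{BrandFN22} underlying \Cref{alg:unweighted:emulator}. First, I would deterministically construct and maintain a dominating set $S$ for the high-degree vertices, meaning every $v\in V$ with $\deg_G(v) > \sqrt{n\log n}$ must have a neighbor in $S$. A greedy counting argument gives $|S| = O(\sqrt{n\log n})$, since each added vertex of $S$ covers at least $\sqrt{n\log n}$ previously-uncovered high-degree vertices and there are at most $n$ such vertices to dominate.

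Next I would verify the $(1+\epsilon,4)$-emulator property for $H$. The lower bound $\dist_G \le \dist_H$ is immediate because every edge in $H$ carries weight equal to its actual $G$-distance: direct edges contribute weight $1$, and the $S\times S$ shortcut edges of \Cref{line:shortdist} carry weight $\dist_G(u,v)$, so any path in $H$ corresponds to a walk of the same total length in $G$. For the upper bound I would fix a shortest $st$-path $P$ in $G$ and partition it into consecutive segments of length $\lceil 4/\epsilon\rceil$. Within a segment, either some interior vertex has low degree (so its incident path-edges lie in $E'$ and can be traversed directly), or every vertex is high-degree, in which case the segment endpoints $u,w$ have neighbors $s_u,s_w\in S$ with $\dist_G(s_u,s_w) \le \lceil 4/\epsilon\rceil+2$, so the shortcut $\{s_u,s_w\}$ is in $H$ and we detour $u\to s_u\to s_w\to w$ of length at most $\lceil 4/\epsilon\rceil + 4$. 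Summing over the full segments yields the multiplicative $(1+\epsilon)$ approximation, while the two partial boundary segments contribute the additive $+4$. The size bound is then direct: low-degree incidences contribute $O(n\sqrt{n\log n}) = O(n^{3/2}\sqrt{\log n})$ edges, and the $S\times S$ shortcuts contribute at most $|S|^2 = O(n\log n)$.

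The main obstacle, and where the work of \cite{BrandFN22} actually sits, is maintaining under edge insertions and deletions the two pieces of input that \Cref{alg:unweighted:emulator} consumes: the set $S$ (which changes only when degrees cross the $\sqrt{n\log n}$ threshold, so can be updated lazily) and, more importantly, the truncated pairwise distances $\dist_G^{\lceil 4/\epsilon\rceil + 2}(s,v)$ for every $s\in S$ and $v\in V$. For this I would invoke the algebraic dynamic data structure of \cite{BrandFN22}, which maintains such bounded pairwise distances by accumulating edge updates in a buffer and periodically flushing them via a rectangular matrix product: after $n^\mu$ updates one recomputes with a single $n\times n\times n^{\mu}$ multiplication costing $n^{\omega(1,1,\mu)}$, which amortizes to $n^{\omega(1,1,\mu)-\mu}$ per update, while query correctness between flushes is restored in $n^{1+\mu}$ time by combining the buffered changes with the stored matrix. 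The $\epsilon^{-2}\log\epsilon^{-1}$ overhead arises from running the scheme across $O(\log\epsilon^{-1})$ geometrically-spaced distance scales that together cover all thresholds up to $\lceil 4/\epsilon\rceil+2$, each instance carrying $O(1/\epsilon)$-bounded entries; de-amortization to worst-case time uses the standard technique of running two staggered copies.

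Finally, once these pairwise bounded distances are available, rebuilding the affected $S\times S$ shortcut weights after each update touches only $O(|S|^2) = O(n\log n)$ entries and is absorbed into the stated bound, while the low-degree incidence portion of $H$ is trivially maintainable in $O(\sqrt{n\log n})$ time per edge update by simply tracking which endpoint currently has low degree. The preprocessing cost of $O(n^\omega\epsilon^{-2}\log\epsilon^{-1})$ follows from a single initial round of matrix multiplication to compute the bounded distances from scratch.
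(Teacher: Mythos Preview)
The paper does not give its own proof of this lemma: it is stated as a citation to \cite[Section 3.2]{BrandFN22} and used as a black box. So there is no in-paper argument to compare against; your proposal is an attempt to reconstruct the cited result.

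As a reconstruction, the outline is broadly right (dominating set $S$ of size $O(\sqrt{n\log n})$, low-degree edges plus $S\times S$ shortcuts, algebraic maintenance of bounded distances), but two points are off. First, your emulator upper-bound case analysis does not quite work: saying ``some interior vertex has low degree, so its incident path-edges lie in $E'$ and can be traversed directly'' only puts two edges of the segment into $H$, not the whole segment. The correct argument walks along the path edge by edge: an edge is in $H$ whenever either endpoint is low-degree, and for maximal runs of consecutive high-degree vertices one detours through their $S$-neighbors, with the shortcut edges of \Cref{line:shortdist} stitching these together; the additive $+4$ then comes from the at-most-two detour edges at the boundary of such runs.

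Second, and more concretely wrong, is your explanation of the $\epsilon^{-2}\log\epsilon^{-1}$ factor. There are no ``$O(\log\epsilon^{-1})$ geometrically-spaced distance scales'' here: only a single hop bound $h=\lceil 4/\epsilon\rceil+2$ is needed. The overhead comes from the algebraic technique for maintaining $h$-bounded distances itself, which works over matrices whose entries are degree-$h$ polynomials (to encode path lengths up to $h$); polynomial arithmetic then contributes the $h^2\log h=O(\epsilon^{-2}\log\epsilon^{-1})$ factor. Compare \Cref{lem:unweighted:boundedsinglesource}, which states the dependence as $h^2\log h$ explicitly.
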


We now prove in \Cref{lem:unweighted:additive} that we can use \Cref{lem:unweighted:emulator_maintenance} to obtain approximately shortest paths in $G$, by replacing some of the edges in $H$ with short paths in $G$. Note that \Cref{lem:unweighted:additive} proves \cref{item:unweighted:additive} of \Cref{thm:unweighted}.

\begin{lemma}[{\Cref{item:unweighted:additive} of \Cref{thm:unweighted}}]\label{lem:unweighted:additive}
    Given an unweighted graph $G = (V, E)$, $0 < \epsilon < 1$, 
    we can deterministically maintain approximate single source shortest paths.
    The worst-case update time is $O((n^{\omega(1,1,\mu)-\mu} +
    n^{1+\mu} )\epsilon^{-2} \log \epsilon^{-1})$ for any $0 \le \mu \le 1$ and preprocessing time is $O(n^{\omega}
    \epsilon^{-2} \log \epsilon^{-1})$.

    A query receives any $s\in V$ and after $O(n^{1.5}\log^{1.5} n + n/\epsilon)$ time returns an approximate shortest path tree $T$ with $\dist_T(s,v) \le (1+\epsilon)\dist_G(s,v) + 4$. 
\end{lemma}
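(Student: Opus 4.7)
The plan is to extract the required tree $T$ by first doing a shortest-path computation in the $(1+\epsilon,4)$-emulator $H$ of \Cref{lem:unweighted:emulator_maintenance} and then lifting the result back into $G$. The key observation is that every edge of $H$ is either an honest edge of $G$ (those with at least one low-degree endpoint added in \Cref{alg:unweighted:emulator}) or a shortcut edge between two vertices $u,v\in S$ of weight $\dist_G(u,v)\le h := \lceil 4/\epsilon\rceil+2$; a shortcut edge can therefore be replaced by any honest shortest $u$-$v$ path in $G$, which uses at most $h$ edges.

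On query $s$, the algorithm proceeds in four steps: (1) run Dijkstra with a binary heap from $s$ in $H$ to obtain a shortest-path tree $T_H$ and distance labels $d_H(v)$, in $O(|E(H)|\log n)=O(n^{1.5}\log^{1.5}n)$ time; (2) for each shortcut edge $(u,v)\in T_H$, retrieve a stored honest $u$-$v$ path $p_{uv}$ in $G$; (3) form the subgraph $G'\subseteq G$ consisting of the $G$-edges of $T_H$ together with every $p_{uv}$; (4) run BFS from $s$ in $G'$ and return the tree $T$. Because $|T_H|\le n-1$ and each expansion contributes at most $h=O(1/\epsilon)$ edges, steps 2--4 together cost $O(n/\epsilon)$. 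For correctness, the root-to-$v$ path in $T_H$ of weight $d_H(v)$ expands into a walk in $G'$ with exactly $d_H(v)$ edges (each shortcut edge contributes $\dist_G(u,v)$ real edges, matching its weight), so $\dist_T(s,v)\le d_H(v)\le (1+\epsilon)\dist_G(s,v)+4$, and $T\subseteq G'\subseteq G$ by construction.

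The main obstacle is making step (2) run in $O(h)$ per shortcut edge without inflating the update budget beyond what \Cref{lem:unweighted:emulator_maintenance} allows: a from-scratch BFS at query time would cost $\Omega(m)$ per shortcut edge, and naively scanning $\mathcal{N}_G(w)$ while walking from $u$ toward $v$ using only the $S\times V$ distance table costs $O(\deg(w))$ per hop, which can be $\Omega(n)$. My plan is to maintain, alongside the emulator, a shallow BFS tree of depth $h$ in $G$ rooted at each $u\in S$, from which $p_{uv}$ is read off by parent-pointer chasing in $O(h)$ time. Since $|S|=O(\sqrt{n\log n})$ and the $h$-bounded $S\times V$ distances are already being refreshed by the matrix-multiplication-based machinery of \cite{BrandFN22}, the witnesses needed to repair these parent pointers are essentially available as a byproduct of the distance maintenance; making this piggybacking precise, and verifying that it fits within the stated update time, is the technically heaviest step of the proof.
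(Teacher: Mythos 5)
Your high-level structure matches the paper's: run Dijkstra from $s$ in the emulator $H$ to get a shortest-path tree $T_H$, replace every shortcut edge $\{u,v\}$ in $T_H$ by an honest $uv$-shortest path in $G$, and run BFS on the resulting subgraph of $G$. The correctness argument (each shortcut edge of weight $\dist_G(u,v)$ expands into $\dist_G(u,v)$ real edges, so $d_H$ is preserved) is also the same.

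Where you diverge — and where your proposal has a real gap — is in how step~(2) is carried out. You want each path $p_{uv}$ retrieved in $O(h)$ time by parent-pointer chasing in a precomputed depth-$h$ BFS tree rooted at every $u\in S$, and you lean on the assertion that the witnesses needed to maintain these $O(\sqrt{n\log n})$ BFS trees ``are essentially available as a byproduct'' of the matrix-multiplication-based distance maintenance. That assertion is not substantiated, and you correctly flag it yourself as the ``technically heaviest step''; but \Cref{lem:unweighted:emulator_maintenance} only guarantees the $h$-bounded $S\times V$ \emph{distance values}, not predecessor witnesses or trees. Extracting witnesses from algebraic distance data structures is precisely the nontrivial problem this whole paper is about, so assuming it here as a free byproduct is circular. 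Without a concrete subroutine showing how those shallow BFS trees are maintained inside the $O((n^{\omega(1,1,\mu)-\mu}+n^{1+\mu})\epsilon^{-2}\log\epsilon^{-1})$ update budget, the proof does not go through.

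The paper sidesteps this obstacle with a much cheaper observation that your proposal misses: the shortcut edges of $T_H$ all have both endpoints in $S$, and since $T_H$ is a tree oriented away from $s$, each vertex of $S$ has at most one incoming edge, so there are at most $|S|=O(\sqrt{n\log n})$ shortcut edges to replace. This makes a slow, per-edge replacement affordable. Given a shortcut edge $\{u,v\}$, the paper sorts $V$ by the (already maintained) distance from $u$ and walks the hop sequence by scanning candidates, spending $O(n\log n)$ for the sort plus $O(n)$ amortized for the scan — $O(n\log n)$ per edge, $O(\sqrt{n\log n}\cdot n\log n)=O(n^{1.5}\log^{1.5}n)$ overall, which already matches the Dijkstra term. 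No extra dynamic state is needed beyond what \Cref{lem:unweighted:emulator_maintenance} provides. In short: you went looking for an $O(h)$-per-edge routine when the budget actually permits $O(n\log n)$-per-edge once you count the shortcut edges correctly; fixing that counting bound is the missing idea.
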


\begin{proof}
    We run the dynamic emulator algorithm from \Cref{lem:unweighted:emulator_maintenance} and let $H$ be the maintained emulator.
    During a query, we run Dijkstra's algorithm from $s$ on emulator $H$.
    This gives us a shortest paths tree $T$ on $H$ rooted at $s$.
    Some of the edges in $T$ might not exist in the original graph $G$ as $H$ is an emulator.
    We will replace each of these edges $\{u,v\}$ in $T$ by the $uv$-shortest path from $G$, via a routine we describe later. Let $T'$ be the resulting graph.
    
    Note that by line \ref{line:shortdist} in \Cref{alg:unweighted:emulator}, we have that $\dist_G(u,v) \le \lceil 4/\epsilon\rceil+2$.
    Thus after replacing edges $\{u,v\}$ with these shortest paths, the graph $T'$ has at most $O(n/\epsilon)$ many edges.
    So we can run BFS on this modified tree $T'$ in $O(n/\epsilon)$ time to get an approximate shortest path tree on $G$.

    We are left with explaining how to replace an edge $\{u,v\}$ in $T$ by $uv$-shortest paths.
    
    \paragraph{Replacing the edges}
    
    Consider an edge $\{u,v\}$ that we want to replace in $T$ by some $uv$-shortest path in $G$.
    Note that we only replace edges $\{u,v\}$ that exist in $H$ but do not exist in $G$.
    These edges have $u,v\in S$ (by \Cref{line:shortdist}) where $S\subset V$ is the set from \Cref{alg:unweighted:emulator}.
    Further, the $uv$-shortest path has length at most $\lceil4/\epsilon\rceil+2$ (by \Cref{line:shortdist})
    and \Cref{lem:unweighted:emulator_maintenance} maintains all distances of pairs $S\times V$.
    So we can sort all vertices $V$ based on their distance to $u$.
    Then we iterate as follows:
    for every vertex $w \in V$ with $\dist(w,v)=\dist(u,v)-1$, check if edge $\{w,v\} \in E$. If yes, add that edge to $T$ and recurse on finding the shortest $w,v$ path.

    In total this takes $O(n \log n)$ time to replace one $\{u,v\}$ edge by a $uv$-shortest path, because we sort the vertices only once and then iterate over each vertex at most once to check if it's on the $uv$-shortest path.
    
    \paragraph{Bounding the number of replaced edges}
    The previous paragraph showed that any one edge $\{u,v\}$ in $T$ can be replaced by a $uv$-shortest path in $O(n\log n)$ time.
    To bound the total time, we are left with bounding how many edges in $T$ must be replaced.
    We show there are at most $\tilde O(\sqrt{n})$ edges in $T$ that we must replace by short paths.
    
    Note that we only replace edges $\{u,v\}$ that exist in $H$ but do not exist in $G$.
    These edges have $u,v\in S$ (by \Cref{line:shortdist}) where $S\subset V$ is the set from \Cref{alg:unweighted:emulator}.

    Now for sake of analysis, assume the tree $T$ is directed with the edges oriented away from the source vertex $s$. We can assume this, since $T$ is a shortest path tree rooted at $s$. 
    Since it's a tree and not a DAG, each $v \in S$ has at most one incoming edge in $T$.
    Thus there are at most $|S| = \sqrt{n\log n}$ edges in $T'$ that do not exist in $G$.

    Thus, replacing all edges in $T$ that do not exist in $G$ takes $O(|S|\cdot n\log n) = O(n^{1.5} \log^{1.5} n)$ time.

    \paragraph{Summary}
    The update time is $O(n^{1+\rho}\epsilon^{-2} \log \epsilon^{-1})$ as we run \Cref{lem:unweighted:emulator_maintenance} to maintain the emulator.
    
    The query time is $O(n^{1.5}\log^{1.5} n + n/\epsilon)$ as the first term bounds the size of the emulator on which we run BFS, and the time to replace the necessary edges in $T$ to obtain $T'$.
    The second term bounds the time to run BFS on $T'$.
\end{proof}

The additive $+4$ error of the shortest paths tree maintained in \Cref{lem:unweighted:additive} only matters for pairs $u,v$ with $\dist(u,v) < 4\epsilon$ as otherwise the additive error is just a multiplicative $(1+\epsilon)$ error.
To obtain the distances for these pairs where $\dist(u,v) < 4\epsilon$, we can use the following \Cref{lem:unweighted:boundedsinglesource}.

\begin{lemma}[\cite{BrandFN22,Sankowski05}]\label{lem:unweighted:boundedsinglesource}
    There exists a deterministic fully dynamic algorithm that maintains exact $h$-bounded single-source distances
    for unweighted undirected graphs.
    The update time and query time for any $s$ is $\tilde{O}((n^{1+\rho})h^2 \log h)$ and the preprocessing time is $O(n^{\omega}h^2 \log h)$.
\end{lemma}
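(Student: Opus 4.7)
The plan is to invoke the dynamic algebraic framework of Sankowski in its derandomized form from \cite{BrandFN22}. Let $A \in \{0,1\}^{V \times V}$ be the adjacency matrix of $G$. The key observation is that, for $u \neq v$, the $h$-bounded $uv$-distance equals the smallest $k \in \{1, \ldots, h\}$ such that the $(u,v)$ entry of the polynomial-valued matrix $M(x) := \sum_{k=0}^{h} x^k A^k$ has a nonzero coefficient on $x^k$. Thus maintaining $M$ under edge updates suffices to answer $h$-bounded single-source distance queries by reading a single row.

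First I would compute $M$ in preprocessing by $O(\log h)$ rounds of matrix-multiplication-with-truncation over the univariate polynomial ring $\mathbb{Z}[x]/(x^{h+1})$: each matrix multiplication costs $O(n^\omega)$ ring operations, and each ring operation on entries of degree at most $h$ costs $\tilde{O}(h)$ time, giving the $O(n^\omega h^2 \log h)$ preprocessing bound. Second, I would handle a single edge update to $A$ as a rank-$1$ perturbation and apply the standard Sankowski lazy-rebuild scheme: batch updates of size $n^\mu$ can be pushed into $M$ via one rectangular product of dimensions $n \times n^\mu \times n$ costing $\tilde{O}(n^{\omega(1,\mu,1)} h)$ ring operations, i.e.\ $\tilde{O}(n^{\omega(1,\mu,1)} h^2)$ time. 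Amortizing over the batch and balancing with the per-update cost of $\tilde{O}(n^{1+\mu} h^2)$ for updating the low-rank correction tables yields the bound $\tilde{O}(n^{1+\rho} h^2 \log h)$ at the balance point $\omega(1,1,\rho) = 1+2\rho$. A query for source $s$ reads the row $M_{s,\cdot}$ and scans each entry for its lowest nonzero coefficient, which takes $\tilde{O}(nh)$ time and is absorbed into the stated bound.

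For determinism I would appeal directly to \cite{BrandFN22}: on unweighted undirected graphs, they show that all algebraic operations in this Sankowski-style pipeline can be carried out over $\mathbb{Z}$ (or a fixed sufficiently large deterministic modulus) rather than requiring random evaluation points, because the polynomial-ring encoding naturally tracks walk counts and never cancels a genuine $k$-step connection. This removes the only place where randomness would enter.

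The main obstacle I anticipate is a clean bookkeeping argument that the $h^2$ factor, as opposed to a larger polynomial in $h$, really does suffice for both update and query: we need the ring arithmetic to respect truncation at degree $h$, and we need the lazy-rebuild invariants to be maintained only up to degree $h$ throughout. Once that is verified, the complexity falls out of plugging the truncated polynomial ring into the deterministic rectangular-matrix-multiplication-based data structure of \cite{BrandFN22}, with the parameter $\rho$ appearing exactly as in the unweighted-undirected distance oracle of that paper.
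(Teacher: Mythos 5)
The paper does not prove \Cref{lem:unweighted:boundedsinglesource}; it invokes it as a black-box citation to \cite{BrandFN22,Sankowski05}, so there is no ``paper's own proof'' to compare against. Your reconstruction has the right high-level shape (maintain the truncated Neumann series $\sum_{k\le h}x^kA^k$ over $\mathbb{Z}[x]/(x^{h+1})$ via Sankowski's dynamic-matrix-inverse machinery, at the balance point $\omega(1,1,\rho)=1+2\rho$, derandomized by \cite{BrandFN22} for the unweighted undirected case), which is indeed the intended route. However, the details contain a real gap.

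The gap is precisely in the line ``each ring operation on entries of degree at most $h$ costs $\tilde{O}(h)$ time,'' followed by ``this removes the only place where randomness would enter'' by working over $\mathbb{Z}$. These two claims cannot both hold, and your own claimed bound of $O(n^\omega h^2\log h)$ betrays this: $O(\log h)$ rounds of $O(n^\omega)$ ring operations at $\tilde{O}(h)$ each would give $\tilde{O}(n^\omega h\log h)$, not $n^\omega h^2\log h$. The missing factor of $h$ is the cost of the derandomization: the entries of $A^k$ are walk counts, which grow to size $n^{\Theta(h)}$, i.e.\ $\Theta(h\log n)$ bits per coefficient. So a single polynomial ring operation over $\mathbb{Z}$ (or over $\mathbb{Z}_p$ for a deterministically chosen $p>n^h$) costs $\tilde{O}(h^2)$ bit operations, not $\tilde{O}(h)$ --- and it is exactly this bit-length blowup, forced by avoiding the random prime that would otherwise control coefficient size, that produces the $h^2$ factor in the lemma. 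As stated, your argument asserts $\tilde{O}(h)$ per ring operation and determinism over $\mathbb{Z}$ simultaneously, which would give a strictly better bound than the lemma claims; you need to either pick a random prime of $O(\log n)$ bits (losing determinism) or pay the $\tilde{O}(h^2)$ per ring operation (recovering the stated bound). You should also make the derandomization argument more concrete than ``walk counts never cancel'' --- the nontrivial content of \cite{BrandFN22} is how to exploit the unweighted undirected structure to keep the deterministic bookkeeping within the claimed budget, and the single-source query also needs its own accounting (reading and decoding a full row of polynomial entries is $\tilde{O}(nh^2)$, which does need to be checked against the claimed $\tilde{O}(n^{1+\rho}h^2\log h)$).
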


Using the distances maintained via \Cref{lem:unweighted:boundedsinglesource},
we now want to reconstruct the shortest paths for pairs $s,t$ with $\dist(s,t) < 4\epsilon$.
The techniques used to reconstruct the paths are by Karczmarz, Mukherjee and Sankowski \cite{KarczmarzMS22} which they used to maintain reachability with path reporting on DAGs.

\begin{lemma}[{\Cref{item:unweighted:st} of \Cref{thm:unweighted}}]\label{lem:unweighted:st}
    Given an unweighted graph $G = (V, E)$, $0 < \epsilon < 1$, and $0\le\mu\le 1$,
    we can deterministically maintain approximate shortest paths.
    
    The worst-case update and query time is $O((n^{1+\rho} )\epsilon^{-2} \log \epsilon^{-1})$.
    The query returns for any given $s,t$ a $(1+\epsilon)$-approximate $st$-shortest path.
    The preprocessing time is $O(n^{\omega}
    \epsilon^{-2} \log \epsilon^{-1})$.
\end{lemma}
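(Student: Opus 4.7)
The plan is to dispatch queries based on whether $\dist_G(s,t)$ is small or large, with threshold $h := \lceil 4/\epsilon\rceil$. I maintain two data structures in parallel: the $(1+\epsilon,+4)$-approximate single-source shortest path tree data structure from \Cref{lem:unweighted:additive} (for handling long $st$-distances), and the exact $h$-bounded single-source distance oracle from \Cref{lem:unweighted:boundedsinglesource} with $h = \lceil 4/\epsilon\rceil$ (for handling short ones). Substituting $h = \Theta(1/\epsilon)$ makes both update times $\tilde O(n^{1+\rho}\epsilon^{-2}\log \epsilon^{-1})$ and both preprocessing times $O(n^{\omega}\epsilon^{-2}\log\epsilon^{-1})$, matching the statement.

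\paragraph{Query algorithm.} Given $(s,t)$, first invoke the bounded-distance oracle with source $s$ in time $\tilde O(n^{1+\rho}\epsilon^{-2}\log \epsilon^{-1})$, obtaining $\dist_G(s,v)$ for every $v$ with $\dist_G(s,v)\le h$. If $\dist_G(s,t)\le h$, I reconstruct an exact $st$-shortest path by tracing predecessors from $t$: initializing $u \gets t$ and $k \gets \dist_G(s,t)$, at each step I scan the set $\{v : \dist_G(s,v) = k-1\}$ (available in $O(n)$ time from the distance array) and test adjacency $\{u,v\}\in E$ via a constant-time lookup, advancing $u \gets v$ and $k \gets k-1$ upon the first hit. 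Since a shortest path has such a predecessor at every layer, this always succeeds; it costs $O(n)$ per layer and terminates after $\dist_G(s,t) \le h$ layers, so reconstruction takes $O(n/\epsilon)$ time in total. This layered trace-back is essentially the path-reporting template from \cite{KarczmarzMS22}. If instead $\dist_G(s,t) > h$, I query \Cref{lem:unweighted:additive} to obtain the $(1+\epsilon,+4)$-approximate shortest path tree rooted at $s$ and return the $s$-to-$t$ path in it. Because $\dist_G(s,t) > 4/\epsilon$, the additive $+4$ term satisfies $4 \le \epsilon\cdot\dist_G(s,t)$, so the returned path has length at most $(1+2\epsilon)\dist_G(s,t)$; rescaling $\epsilon$ by a constant factor delivers the promised $(1+\epsilon)$-approximation.

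\paragraph{Complexity and anticipated obstacle.} The update time is $\tilde O(n^{1+\rho}\epsilon^{-2}\log\epsilon^{-1})$, dominated by the two maintained structures. The query time is dominated by the oracle call at $\tilde O(n^{1+\rho}\epsilon^{-2}\log\epsilon^{-1})$; Case 1 reconstruction adds only $O(n/\epsilon)$, and Case 2 adds only $O(n^{1.5}\log^{1.5}n + n/\epsilon)$ per \Cref{lem:unweighted:additive}. The main point to verify is that the oracle from \Cref{lem:unweighted:boundedsinglesource} indeed exposes the full distance array $v \mapsto \dist_G(s,v)$ in one call (not only pair queries), so that the per-layer scan is affordable; this should follow from the Sankowski-style matrix internals underlying that lemma, since the algorithm already maintains a column of the bounded distance matrix. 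A minor bookkeeping point is to detect the short-vs-long case cheaply: one simply reads the value $\dist_G(s,t)$ off the just-computed distance array, treating $\infty$ as the long case.
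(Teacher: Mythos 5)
Your proposal matches the paper's proof essentially step for step: the same case split at threshold $\lceil 4/\epsilon\rceil$, the same two maintained data structures (\Cref{lem:unweighted:additive} and \Cref{lem:unweighted:boundedsinglesource}), and the same layer-by-layer predecessor trace-back in the short-distance case. The only differences are cosmetic --- the paper sorts once and sweeps in $O(n)$ total while you scan per layer for $O(n/\epsilon)$, both dominated by the oracle call, and you spell out the adjacency check $\{u,v\}\in E$ that the paper's prose elides --- so this is a correct reconstruction of the intended argument.
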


\begin{proof}%
    We describe how to extend \cref{lem:unweighted:additive} to handle short paths.
    The issue of \cref{item:unweighted:additive} is that it has an additive $+4$ error wrt.~the distance.
    This only matters for short paths of length $4/\epsilon$. 
    
    We run \Cref{lem:unweighted:boundedsinglesource} to check if the $st$-distance is less than $4/\epsilon$.
    This takes $O(n^{1+\rho}\epsilon^{-2}\log \epsilon^{-1})$ per update and query.
    We also run \cref{item:unweighted:additive} which, if the $st$-distance is larger than $4/\epsilon$, yields a $(1+\epsilon)$-approximate $st$-shortest path.
    If the $st$-distance is less than $4/\epsilon$, we construct an exact $st$-shortest path via the path reporting approach by Karczmarz, Mukherjee and Sankowski \cite{KarczmarzMS22} as follows.

    \paragraph{Finding the $st$-shortest path}
    Via \Cref{lem:unweighted:boundedsinglesource}, we get the exact $4/\epsilon$-bounded single source distances for $s$ in $O(n^{1+\rho}\epsilon^{-2}\log \epsilon^{-1})$ time.
    Then we sort the vertices $v\in V$ descending by their distance $\dist_G(s,v)$.
    We set $v_{last}=t$ and iterate over the list and check for $v\in V$ if $\dist_G(s,v) = \dist_G(s,v_{last})-1$.
    In that case $(v_{last},t)$ must be the tail of the $st$-shortest path.
    We set $v_{last} \gets v$ and continue iterating over $V$.
    This takes $O(n)$ time in total.
\end{proof}

\begin{lemma}[{\Cref{item:unweighted:singlesource} of \Cref{thm:unweighted}}]\label{lem:unweighted:singlesource}
    Given an unweighted graph $G = (V, E)$, $0 < \epsilon < 1$, and $0\le\mu\le 1$,
    we can deterministically maintain approximate shortest paths.
    
    The worst-case update and query time is $O((n^{(3+\rho)/2} )\epsilon^{-2} \log \epsilon^{-1})$.
    The query returns for any given $s$ a $(1+\epsilon)$-approximate single source shortest path tree rooted at $s$.
    The preprocessing time is $O(n^{\omega}
    \epsilon^{-2} \log \epsilon^{-1})$.
\end{lemma}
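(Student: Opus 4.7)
The plan is to extend \Cref{lem:unweighted:st} from a single-pair query to a single-source query, producing the entire approximate shortest-path tree rooted at $s$. The structure mirrors the pair case: for any vertex $v$ with $\dist_G(s,v)\ge 4/\epsilon$ the $(1+\epsilon,+4)$-approximate tree of \Cref{lem:unweighted:additive} already delivers a $(1+2\epsilon)$-approximate $sv$-path (the additive $+4$ is absorbed by $\epsilon\cdot\dist_G(s,v)$), so only the ``close'' vertices $v$ with $\dist_G(s,v)<4/\epsilon$ need repair. These are handled through the exact $h$-bounded oracle of \Cref{lem:unweighted:boundedsinglesource}, but now with the larger parameter
\[
h \;=\; \max\!\bigl\{\lceil 4/\epsilon\rceil,\ \lceil n^{(1-\rho)/4}/\epsilon\rceil\bigr\},
\]
chosen so that $h\ge 4/\epsilon$ covers every close vertex while $h^{2}n^{1+\rho}=n^{(3+\rho)/2}\epsilon^{-2}$ and $h^{2}n^{\omega}=n^{\omega+(1-\rho)/2}\epsilon^{-2}$ exactly match the target update and preprocessing times. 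The update cost is then dominated by maintaining \Cref{lem:unweighted:additive} and this instance of \Cref{lem:unweighted:boundedsinglesource} in parallel.

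On a query with source $s$, the algorithm (i) reads off the approximate tree $T_{H}$ from \Cref{lem:unweighted:additive}, expanding emulator edges into short $G$-paths exactly as in that proof, together with the exact distance array $\{\dist_{G}(s,v):\dist_{G}(s,v)\le h\}$ from \Cref{lem:unweighted:boundedsinglesource}; (ii) keeps the $T_{H}$-path for every far vertex; and (iii) replaces the stored path by an exact $sv$-shortest path in $G$ for every close vertex. Stitching these together yields a tree $T$ with $\dist_{T}(s,v)\le(1+2\epsilon)\dist_{G}(s,v)$, and a final rescaling $\epsilon\leftarrow\epsilon/2$ gives the promised $(1+\epsilon)$-approximation. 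Determinism is inherited from both subroutines, since neither uses randomness.

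The main obstacle is performing step (iii) within the query budget $\tilde{O}(n^{(3+\rho)/2}\epsilon^{-2})$. Running the per-pair trace of \Cref{lem:unweighted:st} separately for each of up to $n$ close vertices would cost $\Theta(n^{2}/\epsilon)$, which is too slow. The plan is instead a single layered sweep: using the exact distances as level labels, process depths $d=1,\ldots,\lceil 4/\epsilon\rceil$ in order and, at each depth, assign a parent $u\in L_{d-1}$ to every still-unmatched $v\in L_{d}$ by iterating over $L_{d-1}$ and probing the adjacency in $O(1)$ per probe. The delicate step is bounding the total probe count; the plan is to charge these probes against the matrix-product work already paid by \Cref{lem:unweighted:boundedsinglesource}, whose internal representation maintains truncated Boolean powers of the adjacency matrix and so provides witness-friendly access for precisely this kind of predecessor query. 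Doing so keeps the total reconstruction cost absorbed into the $\tilde{O}(n^{(3+\rho)/2}\epsilon^{-2})$ bound already incurred by the bounded oracle, closing the complexity analysis.
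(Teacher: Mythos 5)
Your reduction to the close-vertex case is the same as the paper's, but the crux of the lemma is how to assign a parent to each close vertex within the $\tilde{O}(n^{(3+\rho)/2}\epsilon^{-2})$ budget, and there your argument has a genuine gap. A naive layered sweep that, for each unmatched $v\in L_d$, scans all of $L_{d-1}$ can cost $\Theta(n^2)$ probes in the worst case (both layers can have $\Theta(n)$ vertices). You are aware of this and try to escape it by ``charging these probes against the matrix-product work'' inside \Cref{lem:unweighted:boundedsinglesource}, asserting that the truncated Boolean powers maintained there give ``witness-friendly access'' to predecessors. That is not a proof: the standard fast Boolean matrix product does not expose witnesses for free, and recovering even a single witness per entry requires a separate algorithmic layer (Seidel/Alon--Galil--Margalit style witness recovery, which itself needs care to be deterministic and to fit in the budget). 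Nothing in \Cref{lem:unweighted:boundedsinglesource} as stated supports the access pattern you need, so step (iii) is not established. Your choice to inflate $h$ to $\lceil n^{(1-\rho)/4}/\epsilon\rceil$ is also unmotivated: correctness only requires $h\approx 4/\epsilon$, and a larger $h$ gives you larger distance tables but no mechanism to extract parents from them.

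The paper closes exactly this gap with a different and concrete device. Rather than one oracle with a large hop bound, it runs $p=n^{(1-\rho)/2}$ parallel instances of the bounded single-source oracle on auxiliary graphs $G_1,\dots,G_p$. Each $G_\ell$ adds two copies $v_j',v_j''$ of each vertex and, for indices $j$ in the $\ell$-th block of size $n/p$, wires $\{v_i,v_j'\}$ and $\{v_j',v_k''\}$ alongside the original edges. Then $\dist_{G_\ell}(s,t'')=\dist_G(s,t)$ precisely when some shortest $st$-path has its penultimate vertex in block $\ell$, so a query can locate the correct block in $O(p)$ time and then scan only the $n/p$ candidates in that block for the actual parent. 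This yields $O(n^{1+\rho}\epsilon^{-2}\log\epsilon^{-1}\cdot p + n(p+n/p))$ total, and $p=n^{(1-\rho)/2}$ balances it to $O(n^{(3+\rho)/2}\epsilon^{-2}\log\epsilon^{-1})$. That construction, or an equivalent deterministic witness-recovery argument, is what your proof is missing.
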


\begin{proof}%
    We now describe how to extend \cref{item:unweighted:additive} to  \cref{lem:unweighted:singlesource}.

    The issue of \cref{item:unweighted:additive} is that it has an additive $+4$ error wrt.~the shortest path.
    This only matters for short paths of length $4\epsilon$. 
    Thus we can use the following approach:
    (i) run \cref{item:unweighted:additive} to obtain some tree $T$, (ii) also construct a single source shortest paths tree $T'$ truncated to depth $4\epsilon$.
    Then we take the union $T$ and $T'$.
    Note that we have $\dist_{T\cup T'}(s,v) \le \dist_G(s,v) \le (1+\epsilon) \dist_{T\cup T'}(s,v)$, because $T$ contains all edges of short paths, and $T'$ contains all edges to approximate long paths.

    To reduce $T\cup T'$ to a single tree, we run BFS from $s$ and return the shortest path tree constructed by the BFS. This final tree is then a $(1+\epsilon)$-approximate shortest path tree rooted at $s$.
    
    Finding this tree $T'$ uses techniques from the path reporting data structure by Karczmarz, Mukherjee and Sankowski \cite{KarczmarzMS22}.

    \paragraph{Finding the shortest paths tree $T'$}
    Construct graphs $G_1,...,G_p$ as follows:
    Assume the vertices of $G$ are $V = \{v_1,...,v_n\}$. 
    Graph $G_\ell$ contains $G$ and additional $2$ copies $v_i', v_i''$ of each vertex $v_i \in V$.
    Further, for each $\{v_i,v_j\}$ and $\{v_j,v_k\}$ in $G$ 
    with $j \in \{\ell n /p,..., (\ell+1)n / p\}$, 
    the graph $G_\ell$ also has edges $\{v_i,v_j'\}$ and $\{v_j', v_k''\}$.
    
    Observe that for any $v_i,v_j\in V$ an $v_iv_j''$-path exists in $G_\ell$ if an only if the last vertex visited before $v_j$ by the path has an index in $\{\{\ell n /p, (\ell+1)n / p\}$.

    We can now reconstruct the shortest paths tree rooted at $s$ as follows:
    Compute the single source distances rooted at $s$ in $G$ and each $G_1,...,G_p$.
    For each vertex $t\in V$ we do the following.
    Go through the $G_\ell$ for $\ell=1,...,p$ to check if $\dist_G(s,t)=\dist_{G_\ell}(s,t'')$. 
    If the euqlity holds, then we know there is an $st$-shortest path in $G$ with the last vertex visited before $t$ being a $v_j$ for $j \in \{\ell n /p,..., (\ell+1)n / p\}$.
    So iterate over $j \in \{\ell n /p,..., (\ell+1)n / p\}$ and check if $\dist_{G}(s,v_j) = \dist_G(s,t)-1$ and $\{v_j,t\} \in E$. When we find such a vertex, add edge $\{v_j,t\}$ to $T'$.
    At the end of this procedure, $T$ is a shortest paths tree rooted at $s$.

    It takes $O(n^{1+\rho}\epsilon^{-2}\log(\epsilon^{-1})\cdot p)$ time to maintain the single source distances for each $G_\ell$.
    Then constructing the paths with above procedure takes an extra $O(n\cdot (n/p + p))$ time, 
    as for each vertex $t\in V$ we (i) check the distance $\dist_{G_\ell}(s,t)$ for each $G_\ell$, and (ii) check the distance $\dist_{G_\ell}(s,v_j)$ for one $G_\ell$ and $n/p$ many $j$.

    By picking $p = n^{(1-\rho)/2}$ we get a complexity of
    \begin{align*}
    O(n^{1+\rho}\epsilon^{-2}\log(\epsilon^{-1}) n^{(1-\rho)/2} + n(n^{1-(1-\rho)/2} + n^{(1-\rho)}) )
    =&~
    O(
    n^{(3+\rho)/2}\epsilon^{-2}\log(\epsilon^{-1})
    +
    n^{2-\rho}) \\
    =&~
    O(n^{(3+\rho)/2} \epsilon^{-2}\log(\epsilon^{-1}))
    \end{align*}
    where the last equality used that $0.5 \le \rho$.
\end{proof}

\newpage
\printbibliography[heading=bibintoc]
\newpage
\appendix

\section{Auxiliary Graph Proofs}
\label{appendix}
The following is the hitting set argument from \cite{UllmanY91}, commonly used in graph algorithms.

\begin{lemma}\label{segmentation}

Let $G=(V, E)$ be an $n$-node graph and $h \in N$. 
Let $R \subset V$ be a random sample of size $\Omega((n / h)\log n)$. 
Then w.h.p~%
we have the following: 
For every $s, t \in V$ with $\dist(s, t) \geq h$ there is a shortest $st$-path that can be split into segments 
$s \path r_1 \path r_2 \path \ldots \path  r_d \path t$ 
where each $r_i \in R$ and each segment has at most $h$ hop. 

\end{lemma}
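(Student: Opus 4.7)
The plan is to reduce the claim to a standard Chernoff/union-bound argument over windows of $h$ consecutive vertices on shortest paths. First, for each ordered pair $(s,t) \in V \times V$ with $\dist_G(s,t) \geq h$, I would fix (independently of the random choice of $R$) one canonical shortest path $\pi_{s,t} = (s = v_0, v_1, \ldots, v_d = t)$. For every index $i$ with $0 \leq i$ and $i + h \leq d$, call the set $W_{s,t,i} = \{v_{i+1}, \ldots, v_{i+h}\}$ a \emph{window}. The core estimate is that for a uniformly sampled $R$ of size $|R| \geq C (n/h) \ln n$,
\[
\Pr\!\left[W_{s,t,i} \cap R = \emptyset \right] \;\leq\; \left(1 - \frac{|R|}{n}\right)^{h} \;\leq\; \exp\!\left(-\tfrac{|R| h}{n}\right) \;\leq\; n^{-C}.
\]

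Next, I would take a union bound over the at most $n^2$ pairs $(s,t)$ and at most $n$ windows on each $\pi_{s,t}$, which gives at most $n^3$ bad events. Choosing the constant $C$ large enough, with probability at least $1 - n^{-c}$ for any desired $c>1$, every window on every canonical path contains at least one vertex of $R$. Call this the \emph{good event}; condition on it for the rest of the argument.

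Under the good event, I would extract the segmentation greedily along $\pi_{s,t}$. Set $r_0 := s$ and walk along $\pi_{s,t}$; whenever the current marker $r_j$ satisfies $r_j = v_k$ with $d - k \leq h$, stop and set the final segment to be $v_k \path t$, which uses at most $h$ hops. Otherwise the window $W_{s,t,k} = \{v_{k+1}, \ldots, v_{k+h}\}$ is nonempty and by the good event contains some vertex of $R$; pick any such vertex as $r_{j+1}$, noting that the segment $r_j \path r_{j+1}$ is a contiguous subpath of $\pi_{s,t}$ with at most $h$ hops. Iterating this process yields the desired decomposition $s \path r_1 \path r_2 \path \cdots \path r_d \path t$ with every $r_i \in R$ and every segment of at most $h$ hops, and by construction the concatenation equals $\pi_{s,t}$, which is a shortest $st$-path.

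The main (and only) subtlety to flag is that the canonical shortest paths must be fixed before $R$ is drawn, so that the events ``$W_{s,t,i} \cap R = \emptyset$'' are well-defined and the union bound is valid; since the canonical choice depends only on $G$, which is independent of the random sample, this is immediate. The rest is routine.
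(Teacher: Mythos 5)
The paper does not give its own proof of \Cref{segmentation}; it only cites it as the standard hitting-set argument from \cite{UllmanY91}. Your proof is exactly that standard argument — fix canonical shortest paths before drawing $R$, bound the probability that a fixed window of $h$ consecutive internal vertices misses $R$ by $(1-|R|/n)^h \le e^{-|R|h/n}$, union-bound over the polynomially many $(s,t,i)$ triples, and then greedily walk along $\pi_{s,t}$ to extract the segmentation — and it is correct, including the remark that the canonical paths must be fixed independently of $R$ for the union bound to be valid.
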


Before proving \Cref{hi-hj-aprx}, we argue that the edge weights of $H$ as constructed in \Cref{construct-h} corresponds to short paths in $G$.
\begin{corollary}\label{len-H-dist-G}
    For any pair $u, v \in V_H$ we have $\dist_G(u, v) \leq \len_H(u, v)$ and if the shortest $uv$-path in $G$ uses at most $n^\s$ hop, then we also have
    \begin{align*}
      \len_H(u, v) \leq  (1 + O(\epsilon)) \dist_G(u, v)
    \end{align*}
\end{corollary}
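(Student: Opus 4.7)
The plan is to prove the two inequalities separately, using \Cref{AB-min-dist,AB-dist} together with the approximation guarantees of the oracles that produce the distance estimates $\Delta_x$ defining $\len_H(u,v) = \min_x (B_x/A)\Delta_x(u,v)$.

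For the lower bound $\dist_G(u,v) \leq \len_H(u,v)$, the key observation is that an approximate distance estimate is always an overestimate of the true distance (treating $\infty$ as overestimating everything), so $\Delta_x(u,v) \geq \dist_{G_x}(u,v)$ for each $x$. Combining this with the unconditional direction of \Cref{AB-min-dist}, namely $\dist_G(u,v) \leq (B_x/A)\dist_{G_x}(u,v)$ for every $x$, we obtain $\dist_G(u,v) \leq (B_x/A)\Delta_x(u,v)$ for every $x$, and the inequality survives taking the minimum over $x$.

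For the upper bound under the $n^\s$-hop assumption, I would pick an index $i^*$ carefully and bound $(B_{i^*}/A)\Delta_{i^*}(u,v)$. Let $i^*$ be the smallest index with $B_{i^*} \geq 2\dist_G(u,v)$; since $B_i = 2^i$, this forces $B_{i^*} \leq 4\dist_G(u,v)$. Applying \Cref{AB-dist} along the assumed $n^\s$-hop shortest $uv$-path in $G$ yields $\dist_{G_{i^*}}(u,v) \leq (A/B_{i^*})\dist_G(u,v) + n^\s \leq A/2 + n^\s$, which is at most $A$ since $A = 2n^\s/\epsilon$ with $\epsilon \leq 1$. The $(1+\epsilon)$-approximate $A$-bounded oracle therefore satisfies $\Delta_{i^*}(u,v) \leq (1+\epsilon)\dist_{G_{i^*}}(u,v)$. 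Multiplying by $B_{i^*}/A$ and invoking \Cref{AB-dist} once more gives $(B_{i^*}/A)\Delta_{i^*}(u,v) \leq (1+\epsilon)\bigl(\dist_G(u,v) + (B_{i^*}/A)\, n^\s\bigr) \leq (1+\epsilon)(1 + 2\epsilon)\dist_G(u,v) = (1+O(\epsilon))\dist_G(u,v)$. Since $\len_H(u,v)$ is a minimum over $x$, this bound transfers immediately.

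The main subtlety is selecting $i^*$ so that two competing requirements hold simultaneously: $B_{i^*}$ must be large enough that $\dist_{G_{i^*}}(u,v)$ fits within the $A$-bound of the oracle (so the approximation guarantee actually applies), yet small enough---within an $O(1)$-factor of $\dist_G(u,v)$---so that the additive hop-penalty $(B_{i^*}/A)\, n^\s$ from \Cref{AB-dist} inflates $\dist_G(u,v)$ by only a $(1+O(\epsilon))$-factor rather than arbitrarily. The range $B_{i^*} \in [2\dist_G(u,v),\, 4\dist_G(u,v)]$ balances these using $A = 2n^\s/\epsilon$.
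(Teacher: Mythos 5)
Your proof is correct and follows the same integer-weight-rounding strategy as the paper, but it is noticeably more careful at the one subtle spot. The paper's proof writes $\min_x (B_x/A)\Delta_x(u,v) \le (1+\epsilon)\min_x (B_x/A)\dist_{G_x}(u,v)$ and then invokes \Cref{AB-min-dist} as a black box. That middle inequality is not automatic: $\Delta_x$ is only a $(1+\epsilon)$-approximation of $\dist_{G_x}(u,v)$ when $\dist_{G_x}(u,v)\le A$; for scales $x$ where that bound fails, $\Delta_x$ may be $\infty$. Your argument avoids the issue by explicitly exhibiting a single good index $i^*$ with $B_{i^*}\in[2\dist_G(u,v),4\dist_G(u,v)]$ and verifying via \Cref{AB-dist} that $\dist_{G_{i^*}}(u,v)\le A/2+n^\s\le A$, so the oracle's guarantee genuinely applies there; then the bound transfers through the $\min$. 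In effect you re-derive the relevant half of \Cref{AB-min-dist} rather than citing it, which is slightly longer but makes the $A$-boundedness check, glossed over in the paper, explicit. You also include the (easy) lower bound, which the paper's proof block omits. One minor remark: your chain $n^\s\le A/2$ uses $A=2n^\s/\epsilon$ and tacitly assumes $\epsilon\le 1$; this is a standard assumption in the paper's setting and harmless, but worth stating. Also your choice of $i^*$ needs $\dist_G(u,v)\ge 1$ so that $B_{i^*}\ge 2$ exists among $\{2^0,\dots,2^k\}$; this holds since edge weights are in $[1,W]$, and for $u=v$ the statement is trivial.
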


\begin{proof}
    \begin{align*}
        \len_H(u, v) = \min_x \frac{B_x}{A} \cdot \Delta_x(u,v)%
        \leq (1 + \epsilon) \cdot \min_x \frac{B_x}{A} \cdot \dist_{G_x}(u, v)
        \leq (1 + \epsilon)^2 \dist_G(u, v)
    \end{align*}
    where the first step uses the definition of $\len_H(u, v)$,
    the second step uses the definition of $\Delta_x$ and \Cref{segmentation}, %
    and the last step uses \Cref{AB-min-dist}.
\end{proof}

This directly implies $\dist_G(u,v)\le\dist_H(u,v)$ for all $u,v\in V_H$. 
We now prove the converse statement (up to approximation error).

\hihjaprx*
\begin{proof}[Proof of \Cref{hi-hj-aprx}]
    By the random construction of $V_H$ and \Cref{segmentation} 
    there exist a shortest $uv$-path in $G$ that has the form $u = h_0 \path h_1 \path ... \path h_d = v$
    where each $h_k \in V_H$ and the path $h_k \curly r_{k+1}$ has at most $n^\s$ edges in $G$.
    
    By \Cref{len-H-dist-G} we have $\len_H(h_k, h_{k+1}) \le (1 + \epsilon) \dist_G(h_k, h_{k+1})$ for all $k$, therefore
    \begin{align*}
        \dist_H(u, v) \le \sum_{k=0}^{d} \len_H(h_k, h_{k+1}) \le 
        \sum_{k=0}^{d} (1 + \epsilon) \dist_G(h_k, h_{k+1}) = (1 + \epsilon) \dist_G(u, v)
    \end{align*}
    On the other side, we have $\dist_G(u,v) \le \dist_H(u,v)$, because all edges in $H$ correspond to paths in $G$ and these edges have upper bounds on the respective path-length as edge weight.
\end{proof}

\end{document}